\newcommand{\be}{\begin{eqnarray}}
\newcommand{\ee}{\end{eqnarray}}
\newcommand{\eeq}{\end{equation}}
\newcommand{\beq}{\begin{equation}}
\allowdisplaybreaks \numberwithin{equation}{section}
\DeclareSymbolFont{AMSa}{U}{msa}{m}{n}
\DeclareSymbolFont{AMSb}{U}{msb}{m}{n}
\DeclareMathSymbol{\fieldR}{\mathalpha}{AMSb}{"52}
\DeclareMathOperator{\Aut}{Aut}
\DeclareMathOperator{\rk}{rk}
\renewcommand{\Im}{\imag}
\DeclareMathOperator{\imag}{Im}
\DeclareMathOperator{\Hom}{Hom}
\newcommand{\CA}{{\cal A}}
\newcommand{\CH}{\mathcal{H}}
\newcommand{\Hh}{\mathcal{H}}
\newcommand{\CG}{\mathcal{G}}
\newcommand{\CI}{{\cal I}}
\newcommand{\CL}{{\cal L}}
\newcommand{\CN}{\mathcal{N}}
\newcommand{\calC}{\mathcal{C}}
\newcommand{\calR}{\mathcal{R}}
\newcommand{\CV}{\mathcal{V}}
\newcommand{\CZ}{\mathcal{Z}}
\newcommand{\CW}{\mathcal{W}}
\newcommand{\HRR}{\text{RR}}
\newcommand{\NSNS}{\text{NSNS}}
\newcommand{\R}{\text{R}}
\newcommand{\NS}{\text{NS}}
\newcommand{\GTVW}{\text{GTVW}}
\DeclareMathOperator{\Tr}{Tr}
\newcommand{\NN}{\mathbb{N}}
\newcommand{\ZZ}{\mathbb{Z}}
\newcommand{\RR}{\mathbb{R}}
\newcommand{\CC}{\mathbb{C}}
\newcommand{\QQ}{\mathbb{Q}}
\newcommand{\FF}{\mathbb{F}}
\newcommand{\Pn}{{\Pi^{\natural}}}
\newcommand{\cTop}{\mathsf{Top}}
\newcommand{\normord}[1]{\mathop{:} #1 \mathop{:}}
\def\beq{\begin{equation}}
\def\eeq{\end{equation}}
\def\bea{\begin{eqnarray}}
\def\eea{\end{eqnarray}}
\def\<{\langle}
\newtheorem{theorem}{Theorem}
\newtheorem{conjecture}[theorem]{Conjecture}
\newtheorem{corollary}[theorem]{Corollary}
\newtheorem{lemma}{Lemma}
\newtheorem{question}{Question}
\DeclareMathOperator{\ch}{ch}
\DeclareMathOperator{\diag}{diag}
\title{Non-invertible defects from the Conway SCFT to K3 sigma models I: general results}
\author[1]{Roberta Angius \thanks{roberta.angius@csic.es}}
\author[2,4]{Stefano Giaccari}
\author[3]{Sarah M. Harrison}
\author[4]{Roberto Volpato\thanks{volpato@pd.infn.it}}
{\small \affil[1]{\small  Instituto de F\'{\i}sica Te\'orica
IFT-UAM/CSIC,  C/ Nicol\'as Cabrera 13-15, Campus de Cantoblanco, 28049 Madrid, Spain}
\affil[2]{\small Istituto Nazionale di Ricerca Metrologica, Strada delle Cacce 91, I-10135 Torino, Italy}
\affil[3]{\small  Department of Physics and Department of Mathematics,  Northeastern University, Boston, MA 02115, USA}
\affil[4]{\small Dipartimento di Fisica e Astronomia `Galileo Galilei', Universit\`a di Padova \& INFN, sez. di Padova, Via Marzolo 8, 35131, Padova, Italy}}
\begin{document}

\maketitle

\abstract{We initiate the study of supersymmetry-preserving topological defect lines (TDLs) in the Conway moonshine module $V^{f \natural}$. We show that the tensor category of such defects, under suitable assumptions, admits a surjective but non-injective ring homomorphism into the ring of $\ZZ$-linear maps of the Leech lattice into itself.  This puts strong constraints on possible defects and their quantum dimensions. We describe a simple construction of non-invertible TDLs from 
orbifolds of holomorphic (super)vertex operator algebras, which yields non-trivial examples of TDLs satisfying our main theorem. We conjecture a correspondence between four--plane--preserving TDLs in $V^{f\natural}$ and supersymmetry--preserving TDLs in K3 non-linear sigma models, which extends the correspondence between symmetry groups to the level of tensor category symmetry. We establish evidence for this conjecture by constructing non-invertible TDLs in special K3 non-linear sigma models.
}

\newpage

\tableofcontents

\section{Introduction}\label{s:intro}

In this article, we initiate a study of  topological defect lines (TDLs) in the Conway moonshine module $V^{f\natural}$ \cite{Frenkel:1988flm,Duncan:2006}, the unique holomorphic superconformal field theory (SCFT) with $c=12$ and no fields of weight 1/2. This physical theory has a fully rigorous mathematical definition as a super-vertex operator algebra (SVOA); in many respects, it is the supersymmetric analogue of the famous Frenkel-Lepowsky-Meurman Monster module $V^\natural$ \cite{Frenkel:1988flm} that plays a prominent role in the Monstrous moonshine conjectures. $V^{f\natural}$ has recently been studied in various contexts, including its relation to new supersymmetric moonshine conjectures \cite{Duncan:2014eha,Cheng:2014owa,Cheng:2015fha,Johnson-Freyd:2023uvb} and Borcherds--Kac--Moody superalgebras \cite{Harrison:2018joy}, supersymmetric string compactifications in low dimensions \cite{Harrison:2021gnp}, and as a tool to study versions of the Stolz-Teichner conjecture regarding topological modular forms \cite{Gaiotto:2018ypj,Albert:2022gcs}. As we explain below, it also has a deep connection to K3 surfaces, which has been explored from a variety of angles at the level of both CFT \cite{Duncan:2015xoa,Creutzig:2017fuk,Taormina:2017zlm,Harvey:2020jvu} and string theory \cite{Cheng:2015kha,Cheng:2016org,Paquette:2017gmb}. 

Our goal in this article is to study the category of TDLs in $V^{f\natural}$ that commute with the $\CN=1$ superconformal algebra (SCA) and satisfy some additional technical conditions, see section \ref{section:TDLs_Vfnat}. The motivation is twofold. First of all, an important open problem in conformal field theory is the study of non-invertible symmetries that preserve a non-rational chiral algebra, such as the $\CN=1$ superVirasoro algebra at $c=12$. The Conway module $V^{f\natural}$ provides a highly non-trivial example of an SCFT that is nevertheless under rigorous mathematical control. Furthermore, the presence of supersymmetry will allow us to derive some general properties of the category of TDLs.

A second motivation comes from the mysterious relationship between $V^{f\natural}$ and supersymmetric K3 non-linear sigma models (NLSMs), which are $\CN=(4,4)$ superconformal field theories with central charges $c=\bar c=6$, arising on the worldsheet of compactification of type II superstring theory on a K3 surface.\footnote{This relationship was originally formulated in \cite{Duncan:2015xoa}, building upon the connection established in \cite{Gaberdiel:2011fg} between symmetry groups of K3 NLSMs and the Conway group. We would be remiss not to mention the driving motivator for both of these works: the still--unexplained Mathieu moonshine observation of \cite{Eguchi:2010ej} relating the largest Mathieu group $M_{24}$ to the elliptic genus of K3 NLSMs. For (somewhat) recent reviews expanding on various aspects of this connection, with many citations, see \cite{Duncan:2014vfa,Anagiannis:2018jqf,Harrison:2022zee}. We will leave the exploration of any connections between the results of this article and the fascinating Mathieu moonshine phenomenon to future work.}
The evidence for this relation is based on `experimental' coincidences, and the deep nature of the connection is still somewhat obscure. What has been observed firstly is that all symmetry groups of K3 sigma models preserving the $\CN=(4,4)$ SCA and the spectral flow can be obtained as subgroups of the group of symmetries of $V^{f\natural}$ \cite{Gaberdiel:2011fg}, which is the Conway group $Co_0$. These subgroups $G_\Pn$ can be characterized by a choice of four-plane $\Pn$ in {\bf 24}, the 24-dimensional irreducible representation of $Co_0$. Moreover, for each such choice of $\Pn$, there exists a corresponding K3 NLSM with precisely $G_\Pn$ as its $\CN=(4,4)$--preserving symmetry group. 

Secondly, there is very close connection between the action of the $\mathcal N=(4,4)$--preserving symmetry groups on the 1/4--BPS spectrum of K3 NLSMs and the action of the symmetry groups $G_\Pn$ on the spectrum of $V^{f\natural}_{tw}$, the canonically-twisted module associated to $V^{f\natural}$ (Ramond sector). This begins with the observation of \cite{Duncan:2015xoa} that one can define (via the choice of $\Pn$) an $\widehat{su}(2)_1$ affine subalgebra of the $Spin(24)$ symmetry of $V^{f\natural}$ such that the function
\be\label{eq:VfnatEG}
\phi(V^{f\natural},\tau,z) = \Tr_{V_{tw}^{f\natural}}\left ((-1)^F q^{L_0-1/2} y^{J_0^3}\right),
\ee
 where $(-1)^F$ is the fermion number, and $J_0^3$ is a Cartan generator of the $\widehat{su}(2)_1$,  is none other than the elliptic genus, defined by
\be\label{eq:EG}
\phi(\calC,\tau,z)= \Tr_{\HRR}\left ((-1)^{F+\bar F} y^{J^3_0}q^{L_0-c/24}\bar q^{\bar L_0-\bar c/24}\right)
\ee
for a K3 NLSM $\calC$.\footnote{Note that in equations \eqref{eq:VfnatEG} and \eqref{eq:EG}---and throughout the paper---$q:=e^{2\pi i \tau}$ and $y:=e^{2\pi i z}$.} In equation \eqref{eq:EG}, $\Tr_{\HRR}$ denotes the trace in the Ramond--Ramond Hilbert space $\CH_\HRR$ of the worldsheet NLSM, $(-1)^F, (-1)^{\bar F}$ denote the left-- and right--moving fermion numbers,  $L_0, \bar L_0$ are the zero modes of the left-- and right--moving stress tensors, $J^3_0$  is the zero mode of the Cartan left--moving $\widehat{su}(2)_1$ in the $\CN=4$ SCA, and $c=\bar c=6$.  Notably, $\phi(\calC,\tau,z)$ is a holomorphic function in $\tau$, as the elliptic genus of a compact CFT only receives contributions from right--moving ground states, and thus it is a signed count of 1/4--BPS representations of the $\CN=(4,4)$ SCA. Moreover, it is a supersymmetric index in the sense that it is invariant under supersymmetry--preserving marginal deformations in the K3 moduli space; therefore it is the same function for all K3 NLSMs $\calC$. Specifically, it the unique weight zero, index one weak Jacobi form with Fourier expansion $2y^{-1} + 20 + 2 y + O(q)$ first computed in \cite{Eguchi:1988vra}.

Furthermore, as described in detail in section \ref{s:VfnatK3conn}, one can extend this connection to the so-called twining genus, defined by 
\be\label{eq:gTwineVfnat}
\phi^g(V^{f\natural},\tau,z):=\Tr_{V_{tw}^{f\natural}}\left(g(-1)^F q^{L_0-1/2} y^{J_0^3}\right)
\ee
for any $g\in Co_0$ which preserves a four--plane in the {\bf 24}, and 
\be\label{eq:gTwineK3}
\phi^g(\calC,\tau,z):= \Tr_{\HRR}\left (g(-1)^{F+\bar F} y^{J^3_0}q^{L_0-c/24}\bar q^{\bar L_0-\bar c/24}\right)
\ee
 for any symmetry $g$ which preserves the $\mathcal N=(4,4)$ superconformal algebra of $\calC$.\footnote{This K3-SVOA correspondence is often formulated in terms of  $V^{s\natural}$, which is an SVOA isomorphic to $V^{f\natural}$ considered here, but that carries a different action of the Conway group $Co_0$, see section \ref{s:Vfnat_def}. With respect to the relationship with K3 NLSMs, this difference is immaterial: the twining genera $\phi^g$ in $V^{f\natural}$ and $V^{s\natural}$ are the same for all $g\in Co_0$ that fix a four-plane $\Pn$.} Notably, the functions \eqref{eq:gTwineVfnat} only depend on the conjugacy class of $g\in Co_0$ (up to a lift to the spin group in certain cases \cite{Duncan:2015xoa}), and the functions \eqref{eq:gTwineK3} only depend on the conjugacy class of $g\in O^+(\Gamma^{4,20})$, the discrete T-duality group acting on the moduli space of K3 NLSMs, see \cite{Cheng:2016org} for a detailed discussion. Though these conjugacy classes are not in a one-to-one correspondence, surprisingly, it has been established that
almost all twining genera \eqref{eq:gTwineK3} of K3 NLSMs are reproduced by a corresponding twining genus in $V^{f\natural}$  of the form \eqref{eq:gTwineVfnat} \cite{Duncan:2015xoa}. The fact that this is not true for all twining genera arising in K3 NLSMs was discussed in \cite{Cheng:2016org}, and we refer the reader to Table 4 of that paper for an exhaustive review of the few exceptions. 

In fact, a more direct relation has been found between $V^{f\natural}$ and one particular K3 model, namely the model $\calC_{\GTVW}$ with 
`the largest symmetry group', discussed in \cite{Gaberdiel:2013psa}, and to which we refer  as the GTVW model after the authors of that paper. Indeed, in \cite{Creutzig:2017fuk,Taormina:2017zlm} it was argued that the holomorphic SCFT $V^{f\natural}$ can be obtained from $\calC_{\GTVW}$ by a curious `reflection procedure', that maps every field of $\calC_{\GTVW}$ with conformal weights $(h,\bar h)$ to a holomorphic field in $V^{f\natural}$ with conformal weight $h+\bar h$. Such a reflection procedure only yields a sensible holomorphic SCFT for very special NLSMs in the K3 moduli space. See \cite{Creutzig:2017fuk,Taormina:2017zlm} for further examples and  discussion of necessary conditions. This reflection procedure perhaps suggests why the twining genera of $\calC_{\GTVW}$ and $V^{f\natural}$ can be identified; an explanation for why the twining genera of other K3 models are related to the ones of $V^{f\natural}$ is still incomplete.

Given this connection between symmetry groups and twining genera of K3 NLSMs and $V^{f\natural}$, a natural question is whether there is a generalization to fusion (or, more generally, tensor\footnote{Tensor categories, unlike fusion categories, allow for infinite number of simple objects.}) category symmetry of the two theories, which notably includes non-invertible TDLs. This leads us to formulate the following general question regarding the correspondence between $V^{f\natural}$ and K3 NLSMs:
\begin{question}\label{q1}{Can the tensor category of TDLs preserving the $\CN=(4,4)$ superconformal algebra and spectral flow in every K3 NLSM be identified with a subcategory of $\CN=1$--preserving TDLs in $V^{f\natural}$? Can this be true at least at the level of fusion rings, rather than fusion categories?}
\end{question}\noindent
Actually, the arguments in \cite{Cheng:2016org} can be used to provide a negative answer to this question, at least in its strongest form concerning equivalence of fusion categories, see the discussion at the end of section \ref{s:topdefK3}. It is still an open question whether at least the fusion rings of all categories of TDLs in K3 models are reproduced by some category in $V^{f\natural}$.
Of course, as we elaborate on below, due to the fact that we are considering non-rational chiral algebras on both sides of this correspondence, these tensor categories contain infinitely many simple objects, and we must refine this question and introduce assumptions in order to make concrete progress. Nevertheless, we believe the spirit of the question is natural given the correspondence between symmetry groups on the two sides, and from this correspondence flows the converse of this question:
\begin{question}\label{q2}{Consider the tensor subcategory of $\CN=1$--preserving TDLs in $V^{f\natural}$ which fixes a given choice of four-plane in the {\bf 24}. Does there exist a K3 NLSM whose tensor category of $\CN=(4,4)$--preserving TDLs is precisely the aforementioned tensor subcategory?}\end{question}\noindent
Finally, assuming an affirmative answer to Questions \ref{q1} and  \ref{q2} (under suitable assumptions), we are naturally led to consider the correspondence at the level of defect twining genera, defined in equations \eqref{twiningVf} and \eqref{K3twining}, and which are the obvious generalizations of equations \eqref{eq:gTwineVfnat} and \eqref{eq:gTwineK3}:
\begin{question}\label{q3}{Given a correspondence between a tensor category of $\CN=1$--preserving TDLs stabilizing a four--plane in $V^{f\natural}$ and a tensor category of $\CN=(4,4)$--preserving TDLs of a particular K3 NLSM, are the corresponding defect twining genera the same?}
\end{question}

These questions lead us to investigate the topological defect lines of $V^{f\natural}$, which is an interesting question on its own. In order to have a well-posed problem, one should specify which chiral algebra we require our topological defects to preserve. The minimal requirement is that they preserve the Virasoro (or maybe the $\CN=1$ superVirasoro) algebra; however, there will be infinitely many such defects, and a classification looks hopeless. 

It is instructive to compare this situation with the well--studied case of rational CFTs with finitely many primary operators. In rational CFTs, one of the main tools to study non-invertible symmetries is the method pioneered in \cite{Petkova:2000ip}, which employs a Cardy-like condition on the defects. Specifically, a topological defect preserving  rational chiral and anti-chiral algebras in a CFT depends on finitely many complex parameters, which determine how the defect acts on the (finitely  many) primary operators. One can consider a torus partition function with the insertion of a generic defect $\CL$ along a space-like circle at fixed Euclidean time (we call it an $\CL$-twined partition function), which clearly depends on the complex parameters associated with $\CL$. On the other hand, its modular S-transformation must admit an interpretation as a partition function of a certain $\CL$-twisted Hilbert space, which itself must decompose as a  direct sum over (finitely many) irreducible representations of the chiral and anti-chiral algebra. Imposing that such an interpretation is sensible puts strong constraints on the possible values of the complex parameters. In fact, the set of solutions to such constraints forms a lattice (i.e. a free $\ZZ$-module) in the space of parameters. It is usually easy to identify a suitable set of generators of this lattice as the set of simple defects generating the fusion category of TDLs of the CFT. In the simplest case of a diagonal theory, these are the Verlinde lines \cite{Verlinde:1988sn}.

Such a construction becomes terribly complicated when the topological defects only preserve a non-rational chiral algebra. In this case, the number of parameters determining the defect becomes infinite. Furthermore, the S-transformation of the $\CL$-twined partition function involves an infinite sum, and sometimes an integral, over families of characters of irreducible representations of the algebra. The generic situation is therefore unsolvable via the standard approach.\footnote{To the best of our knowledge, a classification of topological defects in the non-rational case has been attempted only in the case of free bosons CFTs, see for example \cite{Fuchs:2007tx,Bachas:2012bj,Thorngren:2021yso}.}

In this paper, we consider  a variant of the Cardy-like condition that works for topological defects preserving some amount of worldsheet supersymmetry (and satisfying some additional technical conditions), and apply this method to the SVOA $V^{f\natural}$. Our procedure is not sufficient to determine exactly the tensor category of topological defects of $V^{f\natural}$, but it can be used to put strong constraints on the possible fusion (more generally, Grothendieck) rings for such a category. The main idea is to consider  the $\CL$-twined partition function on a torus with fully periodic boundary conditions for the fermions. Because the defect preserves the $\CN=1$ superconformal algebra, a standard argument in supersymmetry implies that the $\CL$-twined partition function is  a constant with respect to the modular parameter $\tau$ -- it is an $\CL$-equivariant version of the Witten index. On the other hand, because the torus partition function is obviously invariant under the modular S-transformation, the same constant must also represent the Witten index of the $\CL$-twisted sector, and therefore it must be an integer. Such an integrality condition must hold not only for a single defect $\CL$, but also for any product $\CL_1\CL_2\cdots$ of supersymmetry preserving defects. In the case of $V^{f\natural}$, it is known that the group of invertible symmetries preserving a given $\CN=1$ supercurrent is isomorphic to the Conway group $Co_0$. Therefore, given an $\CN=1$--preserving topological defect $\CL$, the $\CL\CL_g$--equivariant Witten index must be an integer for all invertible defects $\CL_g$, $g\in Co_0$. This simple argument is the main tool in our analysis.

Using this argument, we will prove (see Theorem \ref{th:main} in section \ref{subsec:3.2}) that if $\cTop$ is a category of $\CN=1$--preserving topological defects of $V^{f\natural}$ that contains all invertible $\CL_g$, $g\in Co_0$, then every $\CL\in \cTop$ is associated with a $\ZZ$-linear map $\rho(\CL):\Lambda\to \Lambda$ from the Leech lattice $\Lambda$ to itself. The assignment $\CL\mapsto \rho(\CL)$ is a ring homomorphism, i.e.
\be \rho(\CL_1\CL_2)=\rho(\CL_1)\rho(\CL_2)\ ,\qquad \rho(\CL_1+\CL_2)=\rho(\CL_1)+\rho(\CL_2)\ ,
\ee mapping the dual defect to the transpose endomorphism $\rho(\CL^*)=\rho(\CL)^t$, and it is surjective (but not injective) onto ${\rm End}(\Lambda)$. As a consequence, we prove in Corollary \ref{th:integralqdim} that if $\CL$ fixes one of the fields in a distinuished basis of ground states in $V^{f\natural}_{tw}$, it necessarily has integral quantum dimension. More generally, if $\CL$ fixes a `sufficiently generic' vector in this space, it is necessarily a multiple of the identity; see Corollary \ref{identity}.

Theorem \ref{th:main} in this article is formally very similar to the general consistency conditions on topological defects in K3 sigma models that were proved in \cite{Angius:2024evd}. This observation leads us to formulate  Conjecture \ref{conj:K3relation} (see section \ref{s:topdefK3}) which proposes a correspondence between subcategories of $\cTop$, whose objects are non-invertible defects in $V^{f\natural}$ that preserve a $4$-dimensional subspace $\Pn$ of Ramond ground fields, and categories of topological defects in NLSMs on K3, that preserve the $\CN=(4,4)$ algebra and spectral flow. This part of the conjecture refines and proposes an answer to Questions \ref{q1} and \ref{q2}. Concretely, we also conjecture an affirmative answer to Question \ref{q3}. We propose that there is an exact matching between the $\CL$-twined genera in the two theories, and moreover that the action of $\CL$ on the space of Ramond ground fields of $V^{f\natural}$ is the same as on the RR ground fields of the K3 models, for a suitable choice of isomorphism between the two $24$-dimensional spaces. This conjectural correspondence is the natural generalization of the observed relationship between (invertible) symmetries in the two theories.

In order to establish evidence for this conjecture, it is imperative to determine non-invertible defects in both $V^{f\natural}$ and corresponding K3 NLSMs. This is in general very difficult, though in section \ref{s:K3matching} and in a companion article \cite{Angius:2025xxx} we will construct examples  where the conjecture is indeed verified. More conceptually, we observe that the conjecture is compatible in a non-trivial way with all known consistency conditions on topological defects on both sides of the correspondence.

The article is structured as follows. In section \ref{s:Vfnat_intro} we review the Conway module $V^{f\natural}$, including its constructions, symmetries, twisted and twined partition functions, and its connection to K3 NLSMs. After beginning with a brief review of TDLs in 2d CFTs, in section \ref{section:TDLs_Vfnat} we present our two main results---Theorem \ref{th:main} and Conjecture \ref{conj:K3relation}---in sections \ref{subsec:3.2} and \ref{s:topdefK3}, respectively.  We review the method we use to construct new TDLs in orbifold theories, and work out several explicit examples for theories with finite non-abelian symmetry groups in section \ref{s:orbifolds}. The discussion in this section is fully general. In section \ref{s:K3defects}, we apply this method to $V^{f\natural}$ and certain orbifold K3 NLSMs, where we  illustrate a precise matching of defect twining functions on both sides (section \ref{s:K3matching}). Finally, we conclude with a discussion of our results in section \ref{s:discussion}. 

We also include two appendices. In appendix \ref{a:conjcons} we provide an argument for the consistency of Conjecture \ref{conj:K3relation} in the case of defects with integral quantum dimension, where we make use of lattice embedding techniques. Some additional technical calculations are found in appendix \ref{a:Zg2p2}. A companion paper \cite{Angius:2025xxx} will discuss examples of Tambara--Yamagami categories in $V^{f\natural}$ and corresponding K3 NLSMs whose duality defects satisfy our conjecture, including models  with defects of irrational quantum dimension.

\section{The Conway module $V^{f\natural}$}\label{s:Vfnat_intro}

In this section, we review  constructions of the holomorphic SVOA $V^{f\natural}$ (section \ref{s:Vfnat_def}), its symmetries and twining partition functions (section \ref{s:Conwaytwining}), and its relationship with non-linear sigma models on K3 (section \ref{s:VfnatK3conn}). 

\subsection{Constructions}\label{s:Vfnat_def}

Let us first describe the theory $V^{f\natural}$ \cite{Frenkel:1988flm,Duncan:2006}. We will present three distinct yet equivalent constructions of this theory. The simplest is as follows. Consider $12$ holomorphic free bosons ($u(1)$ currents) $i\partial X^i(z)=\sum_{n=-\infty}^{+\infty} \alpha^i_nz^{-n-1}$, $i=1,\ldots,12$, of weight $1$ with standard OPE 
$$ \partial X^i(z)\partial X^j(w)=-\frac{\delta_{ij}}{(z-w)^2}\ .
$$ We also introduce the holomorphic vertex operators 
\be  \CV_k(z):=c_k :e^{ik\cdot X(z)}:\ee
where the `momenta' $k\equiv (k^1,\ldots,k^{12})$ take values in the odd unimodular lattice $D_{12}^+$. This lattice is defined to be the union 
\be D_{12}^+:=D_{12} \cup (s+D_{12})
\ee of the root lattice of the $so(24)$ Lie algebra
\be D_{12}=\{(x_1,\ldots, x_{12})\in \ZZ^{12}\mid \sum_i x_i\in 2\ZZ\}\ ,
\ee and its translate by the vector 
\be s=(\frac{1}{2},\frac{1}{2},\ldots,\frac{1}{2})\in \RR^{12}\ . \label{spinRep:D12_real}
\ee The coset $s+D_{12}$ is the set of weights of $so(24)$ representations in the `spinor' class. The vertex operators $\CV_k(z)$ corresponds to states $|k\rangle$ that have charges $(k_1,\ldots,k_{12})$ with respect to the $u(1)$ currents $\partial X_i$, i.e.
$$ \alpha_0^i|k\rangle=k^i|k\rangle\ .
$$ The stress tensor is the standard $T(z)=\frac{1}{2}\sum_{i=1}^{12} \normord{\partial X^i\partial X^i}(z)$ with central charge $c=12$; the states $|k\rangle$ have conformal weight $k^2/2$ and are bosons (fermions) for even (odd) $k^2$. Equivalently, they are bosons when $k\in D_{12}$ and fermions when $k\in s+D_{12}$. Note that, because the lattice $D_{12}^+$ has no vectors $k$ of length $k^2=1$, the smallest conformal weight for fermions is $3/2$. There are $2^{11}=2048$ states $|k\rangle$ with this conformal weight, corresponding to $k$ of the form $(\pm \frac{1}{2},\ldots,\pm\frac{1}{2})$ with an even number of minus signs. It can be proved that $V^{f\natural}$ is the only holomorphic SVOA with $c= 12$ with no states of weight $1/2$ \cite{Creutzig:2017fuk}. In this sense, it is the supersymmetric analogue of the Frenkel-Lepowsky-Meurman Monster module $V^\natural$, which is the (conjecturally) unique holomorphic bosonic VOA with $c=24$ and no states of conformal weight $1$.

The bosonic subalgebra of $V^{f\natural}$ is generated by $i\partial X^i(z)$ and $\CV_k(z)$ for $k\in D_{12}$ and is  the VOA associated with the affine Kac-Moody algebra $\widehat{so}(24)_1$. This algebra has four distinct irreducible modules up to isomorphism: apart from the algebra itself (the vacuum module $V_0$), the remaining three modules, $V_s$, $V_v$ and $V_c$, correspond to the non-trivial cosets $s+D_{12}$, $v+D_{12}$, $c+D_{12}$ of the quotient $D_{12}^*/D_{12}$, with $s$ given in \eqref{spinRep:D12_real} and
\be v=(1,0,0,\ldots,0)\in \RR^{12}\ ,\qquad c=(-\frac{1}{2},\frac{1}{2},\ldots,\frac{1}{2})\in \RR^{12},
\ee which respectively label the spinor ($s$), vector ($v$) and conjugate spinor ($c$) representations of the Lie algebra $so(24)$. The fermionic part of $V^{f\natural}$ corresponds to the spinor representation $V_s$ of $\widehat{so}(24)_1$, so that we have the decomposition 
\be V^{f\natural}=V_0\oplus V_s\ee 
in terms of $\widehat{so}(24)_1$ modules.

In physics parlance, what we have just described is the so-called NS sector of the fermionic CFT. The Ramond sector is what mathematicians call the canonically twisted module $V^{f\natural}_{tw}$, i.e. the module twisted by the fermion number $(-1)^F$, which is the canonical $\ZZ_2$ symmetry of $V^{f\natural}$ acting trivially on bosons and by $-1$ on fermions.
The Ramond sector $V^{f\natural}_{tw}$ is generated by vertex operators $\CV_k(z)$ (and their $u(1)^{12}$ descendants)  where $k$ now takes values in the set $(v+D_{12})\cup (c+D_{12})$.
In terms of $\widehat{so}(24)_1$ modules, the Ramond sector $V^{f\natural}_{tw}$ decomposes as
\be V^{f\natural}_{tw}=V_v\oplus V_c\ .
\ee The fermion number symmetry $(-1)^F$ can be extended to $V^{f\natural}_{tw}$ by assigning fermion number $+1$ to $V_v$ and $-1$ to $V_c$. (This assignment is conventional; the opposite choice is also consistent.) 

The group of automorphisms of this holomorphic SCFT is 
\be \Aut(V^{f\natural})\cong Spin(24) ,\ee and is generated by the zero modes of the $\widehat{so}(24)_1$ currents. The center of $Spin(24)$ is $\ZZ_2\times \ZZ_2$, and one of its central $\ZZ_2$ subgroups acts trivially on $V^{f\natural}$ (i.e. the NS sector), such that the group acting faithfully is  $Spin(24)/\ZZ_2$. The generator $\mathfrak{k}$ of this trivial $\ZZ_2$ subgroup is a lift to $Spin(24)$ of $-1$ in $SO(24)$, while the generator of the other central $\ZZ_2$ subgroup is the fermion number $(-1)^F$. On the other hand, the element $\mathfrak{k}$ of $Spin(24)$ acts non-trivially (as minus the identity) on the Ramond sector $V^{f\natural}_{tw}$, so it is more convenient to consider $Spin(24)$ as the group of automorphisms, rather than $Spin(24)/\ZZ_2$.

A second construction of $V^{f\natural}$ is given by a $\ZZ_2$ orbifold of the theory $F(24)$ of $24$ free chiral fermions $\psi^i(z)$, $i=1,\ldots,24$, of weight $1/2$, where the $\ZZ_2$ symmetry acts by $\psi^i\to -\psi^i$ for all $i$. The orbifold procedure projects out all states of weight $1/2$. The untwisted sector is the affine algebra $\widehat{so}(24)_1$, with currents given by fermion bilinears $\normord{\psi^i\psi^j}(z)$, where $1\le i<j\le 24$. The twisted sector corresponds to the module $V_s$ of $\widehat{so}(24)_1$, such that the final theory is $V^{f\natural}$. Conversely, $F(24)$ can be obtained from $V^{f\natural}$ by orbifolding by the fermion number $(-1)^F$. The $24$ free fermions $\psi^1,\ldots,\psi^{24}$ of $F(24)$ correspond to the $24$ ground fields of the Ramond sector $V^{f\natural}_{tw}$.
Note that $F(24)$ can also be described in terms of a lattice theory based on the odd unimodular lattice $\ZZ^{12}$, which is given by the union
\be \ZZ^{12}:=D_{12}\cup (v+D_{12})\ .
\ee In this description the orbifold procedure projects out the states $|k\rangle$ with $k\in v+D_{12}$ and introduces the twisted states $|k\rangle$ for $k\in s+D_{12}$.

The final construction of $V^{f\natural}$ we present here is as an orbifold of the theory $V^{fE_8}$, an $\CN=1$ supersymmetrized version of the bosonic lattice VOA (chiral CFT) $V^{E_8}$ based on the $E_8$ lattice. The bosonic $E_8$ theory is obtained from $8$ chiral free bosons $X^1(z),\ldots,X^8(z)$, with vertex operators $:e^{i\lambda \cdot X(z)}:$ where the $\lambda$ take values in the $E_8$ lattice.  
The $\CN=1$ supersymmetrized version $V^{fE_8}$ is given by the product
\be V^{fE_8}=V^{E_8}\otimes F(8);
\ee
i.e. it is obtained by adding $8$ chiral free fermions $\psi^i(z)$ to the bosonic theory $V^{E_8}$, which are the superpartners of the $8$ free bosons $X^i(z)$. The $\CN=1$ supercurrent is  $\tau(z)=\sum_i \psi^i\partial X^i$. The theory $V^{f\natural}$ can be obtained by taking the orbifold of $V^{fE_8}$ by the $\ZZ_2$ symmetry acting by $X^i\to -X^i$ and $\psi^i\to -\psi^i$, that commutes with the supercurrent $\tau(z)$. The fact that the orbifold is isomorphic to $V^{f\natural}$ is quite non-trivial: an explicit isomorphism is provided in \cite{Duncan:2006}.

The latter construction shows that $V^{f\natural}$ can be endowed with an $\CN=1$ superVirasoro algebra. Duncan proved that the choice of an $\CN=1$ supercurrent $\tau(z)$ in $V^{f\natural}$, generating the standard $\CN=1$ superVirasoro algebra is unique up to the action of $Spin(24)$ automorphisms \cite{Duncan:2006}. 
He also proved that the subgroup $\Aut_\tau(V^{f\natural})$ of $Spin(24)$ that preserves a given supercurrent $\tau(z)$ is a finite group, isomorphic to the Conway group $Co_0$,
\be \Aut_\tau(V^{f\natural}):=\{g\in \Aut(V^{f\natural})\mid g(\tau)=\tau\}\cong Co_0\ .
\ee The group $Co_0$ is known to be the group of automorphisms of the Leech lattice $\Lambda$, the only $24$-dimensional positive-definite even unimodular lattice with no roots (vectors of length-squared $2$).  More precisely, the central subgroup $\ZZ_2$ of $Co_0$ is generated by $\mathfrak{k}$, so it acts trivially on the NS sector, and the group acting faithfully on $V^{f\natural}$ is $Co_0/\ZZ_2\equiv Co_1$, which is a simple group. 

As an aside, following \cite{Duncan:2014eha}, we can also define 
\be \label{VSnat} V^{s\natural}=V_0\oplus V_c\ ,\qquad\qquad V^{s\natural}_{tw}=V_v\oplus V_s\ ,
\ee which is almost the same as the definitions of $V^{f\natural}, V^{f\natural}_{tw}$, except that we have exchanged the spinor and conjugate spinor representations. One can show that $V^{s\natural}$ is also a well-defined holomorphic SCFT with central charge $c=12$ and $V^{s\natural}_{tw}$ is its canonically twisted module; in fact, one can prove that, as an SCFT, it is isomorphic to $V^{f\natural}$. So, it looks like there is nothing new if we consider $V^{s\natural}$ rather than $V^{f\natural}$. 
However, there is a subtle difference if we consider the action of $Co_0\subset Spin(24)$ fixing a weight $3/2$ field $\tau(z)$ in $V_s$. The field $\tau(z)$ is an NS field in $V^{f\natural}$ (as is natural for an $\CN=1$ supercurrent), but is a Ramond field in $V^{s\natural}$. On the other hand, the action of $Co_0$ does not fix any field of weight $3/2$ in $V_c$, i.e. it does not fix any $\CN=1$ supercurrent in  $V^{s\natural}$. In other words $V^{s\natural}$ and $V^{f\natural}$ are different representations of $Co_0$. Notice that if $G\subset Co_0$ is any subgroup fixing at least one vector in the $24$-dimensional representation of $Co_0$, then $V^{f\natural}$ and $V^{s\natural}$ are isomorphic as $G$-representations. The reason is that in this case there  is a Ramond ground field $\psi\in V^{f\natural}_{tw}(1/2)$ that is fixed by $G$, and the zero mode $\psi_0$ establishes an isomorphism of $G$-representations between $V_s$ and $V_c$.

Notably, in \cite{Creutzig:2017fuk} $V^{f\natural}, V^{fE_8}$, and $F(24)$ are proven to be the three unique (up to isomorphism) self-dual SVOAs of central charge 12. There are many ways to move between these theories by gauging subgroups of their global symmetry groups, which may or may not preserve the superconformal structure of the corresponding theories. See discussions in \cite{Duncan:2006, Duncan:2014eha, Creutzig:2017fuk, Anagiannis:2020hkk, Harrison:2020wxl}. Thus, there are many further constructions of $V^{f\natural}$ via orbifolds of $V^{fE_8}$ and $F(24)$ that we do not describe in detail here.

Finally, we note that there is a unique non-degenerate invariant bilinear form $(\cdot,\cdot):V^{f\natural} \times V^{f\natural}\to \CC$ on $V^{f\natural}$, such that the vacuum is normalized $(1,1)=1$. With resect to this bilinear form, the $L_0$-eigenspaces with different eigenvalues are orthogonal to each other. In physics, this bilinear form is defined in terms of the two-point functions on the sphere
\be (\phi_1,\phi_2):=\lim_{z\to 0} \langle Y(e^{zL_1}(-1)^{L_0^2+2L_0}z^{-2L_0}\phi_1,1/z)Y(\phi_2,0)\rangle\ ,
\ee where $Y(\phi,z)$ is the vertex operator associated to the vector $\phi\in V^{f\natural}$. One can also define an antilinear involution (CPT operator)
\be \theta:V^{f\natural}\to V^{f\natural}
\ee such that the product
\be \langle \phi_1|\phi_2\rangle:=(\theta(\phi_1),\phi_2)\ ,
\ee is the positive definite hermitian product, and defines a unitary structure on $V^{f\natural}$ (see \cite{Carpi:2023onx,Gaudio:2024zxu} for more details). The real subspace of CPT-fixed operators
\be {}^\RR V^{f\natural}\equiv \bigoplus_{n\in \frac{1}{2}\ZZ}{}^\RR V^{f\natural}(n):=\{v\in V^{f\natural}\mid \theta(v)=v\}\ ,
\ee has the structure of real vertex operator superalgebra, with $V^{f\natural}={}^\RR V^{f\natural}\otimes \CC$, and such that the invariant bilinear form is positive definite on ${}^\RR V^{f\natural}$. In particular, the stress-tensor $T(z)$ and the supercurrent $\tau(z)$ are fixed by $\theta$. Furthermore, the hermitian product and the bilinear form are both invariant under the group $\Aut_\tau(V^{f\natural})$ of automorphisms that preserve the $\CN=1$ supercurrent $\tau(z)$, and every element in $\Aut_\tau(V^{f\natural})$ commutes with the CPT operator $\theta$, so that $\Aut_\tau(V^{f\natural})$ acts on each component ${}^\RR V^{f\natural}(n)$ by (special) orthogonal transformations. The bilinear form, positive definite hermitian product, and CPT operator are also defined on the Ramond sector $V^{f\natural}_{tw}$, and the bilinear form is positive definite on the CPT-fixed subspace ${}^\RR V^{f\natural}_{tw}$.

\subsection{Symmetries and twining functions}\label{s:Conwaytwining}

Given a holomorphic SVOA $V$ with $c=12$ (i.e. $V^{fE_8}$, $F(24)$ or $V^{f\natural}$), one can define four torus partition functions $Z_{\NS}^+$, $Z_{\NS}^-$, $Z_{\R}^+$, $Z_{\R}^-$, corresponding to the four choices of spin structure:
\begin{align}
	Z^\pm_{\NS}(V,\tau)&:=\Tr_{V}(q^{L_0-\frac{c}{24}}(\pm 1)^F)\\
		Z^\pm_{\R}(V,\tau)&:=\Tr_{V_{tw}}(q^{L_0-\frac{c}{24}}(\pm 1)^F)\ .
\end{align} It is useful to organize them into a $4$-component vector
\be \CZ:=(Z^+_{\NS}, Z^-_{\NS}, Z^+_{\R}, Z^-_{\R})^t
\ee that transforms as a vector-valued modular function under $SL_2(\ZZ)$-transformations
\be \CZ(\tau+1)=\rho(T)\CZ(\tau)\ ,\qquad\qquad \CZ(-1/\tau)=\rho(S)\CZ(\tau)
\ee where the $SL_2(\ZZ)$-generators $T=\left(\begin{smallmatrix}
	1 & 1\\ 0 &1
\end{smallmatrix}\right)$ and $S=\left(\begin{smallmatrix}
0 & -1 \\1 &0
\end{smallmatrix}\right)$ are represented by the matrices\footnote{For a generic central charge $c\in \frac{1}{2}\ZZ$, the minus signs in $\rho(T)$ should be replaced by $e^{-2\pi i c/24}$. }
\be\label{modularspin} \rho(T)=\begin{pmatrix}
0 & -1 & 0 & 0\\
-1 & 0 & 0 & 0\\
0 & 0 & 1 & 0\\
0 & 0 & 0 & 1
\end{pmatrix}\ ,\qquad \rho(S)=\begin{pmatrix}
1 & 0 & 0 & 0\\
0 & 0 & 1 & 0\\
0 & 1 & 0 & 0\\
0 & 0 & 0 & 1
\end{pmatrix}\ .
\ee
Note that $Z^-_{\R}(V,\tau)$ is modular invariant, and in fact it is a constant that equals the Witten index
\be Z^-_{\R}(V,\tau)=\Tr_{V_{tw}(1/2)}((-1)^F)\ ,
\ee counting the number of bosonic minus fermionic Ramond ground states. This follows because for unitary representations of the Ramond $\CN=1$ algebra, each $L_0$-eigenspace contains the same number of bosons and fermions, except possibly for the lowest possible $L_0$-eigenvalue $h=c/24=1/2$, corresponding to the eigenspace $V_{tw}(1/2)$.  The Witten index of $V=V^{f\natural}$ is $24$, whereas it is zero for the other two self-dual SVOAs of $c=12$, $V^{fE_8}$ and $F(24)$. 

Consider a finite symmetry group $G$ fixing an $\CN=1$ superconformal algebra in $V$, and whose action on both the NS and R sectors $V$ and $V_{tw}$ commutes with $(-1)^F$. Then, one can define a corresponding $g$-twined partition function $\CZ^g:=(Z^{g,+}_{\NS}, Z^{g,-}_{\NS}, Z^{g,+}_{\R}, Z^{g,-}_{\R})^t$ with components
 \begin{align}\label{eq:SymTwineNS}
 	Z^{g,\pm}_{\NS}(V,\tau)&:=\Tr_{V}(g q^{L_0-\frac{c}{24}}(\pm 1)^F)=\sum_{h\in \frac{1}{2}\ZZ} q^{h-\frac{c}{24}}\Tr_{V(h)}(g(\pm 1)^F)\\\label{eq:SymTwineR}
 	Z^{g,\pm}_{\R}(V,\tau)&:=\Tr_{V_{tw}}(g q^{L_0-\frac{c}{24}}(\pm 1)^F)=\sum_{h\in \frac{1}{2}+\ZZ} q^{h-\frac{c}{24}}\Tr_{V_{tw}(h)}(g(\pm 1)^F)\ ,
 \end{align}
 which only depends on the conjugacy of $g\in G$ and which transforms as a vector-valued modular form under a suitable subgroup of $SL(2,\ZZ)$. 
Furthermore, one also has the $g$-twisted partition function $\CZ_g:=(Z^+_{g,\NS}, Z^-_{g,\NS}, Z^+_{g,\R}, Z^-_{g,\R})^t$ with components
 \begin{align}\label{eq:SymTwistNS}
 	Z^\pm_{g,\NS}(V,\tau)&:=\Tr_{V_g}(q^{L_0-\frac{c}{24}}(\pm 1)^F)=\sum_h q^{h-\frac{c}{24}}\Tr_{V_g(h)}((\pm 1)^F)\\\label{eq:SymTwistR}
 	Z^\pm_{g,\R}(V,\tau)&:=\Tr_{V_{g,tw}}(q^{L_0-\frac{c}{24}}(\pm 1)^F)=\sum_h q^{h-\frac{c}{24}}\Tr_{V_{g,tw}(h)}((\pm 1)^F)\ .
 \end{align}
where $V_g, V_{g,tw}$ correspond to $g$-twisted NS and R Hilbert spaces, respectively. Here, we assume that $(-1)^F$ can be extended to a well-defined involution on both $V_g$ and  $V_{g,tw}$ (see section \ref{s:TDLsuperVOA} and \ref{subsec:3.2} for a discussion).  Notably, $\CZ^g$ and $\CZ_g$ are related by the modular S-transformation,
\be
\CZ_g(-1/\tau)=\rho(S)\CZ^g(\tau),
\ee
where $\rho(S)$ is given in \eqref{modularspin}.

Specializing to the case of $V=V^{f\natural}$, where the $\mathcal N=1$--preserving symmetry group is $G=Co_0$ (acting faithfully only on $V^{f\natural}_{tw}$), we can give a simple formula for $\CZ^g$ for all $g\in Co_0$, following \cite{Duncan:2014eha} (see also section 6 of \cite{Harrison:2018joy}).  First, 
to each element $g\in Co_0$, we can associate a 24-dimensional Frame shape $\pi_g$ encoding the eigenvalues of $g$ in $\rho_{24}$, the 24-dimensional irreducible representation of $Co_0$. More precisely, define the formal product
\be \label{FrameShape}
\pi_g:= \prod_{\ell | N} \ell^{k_\ell},
\ee
where $N=o(g)$ is the order of $g$, and $k_\ell$ are integers determined by the characteristic polynomial,
\be
\det(t {\bf 1}_{24}-\rho_{24}(g))= \prod_{\ell |N}(t^\ell-1)^{k_\ell},
\ee
of $g$ in $\rho_{24}$. The roots of this polynomial determine the 24 eigenvalues, appearing in complex conjugate pairs, of $\rho_{24}(g)$, which we will parametrize as
\be\label{gEigenvalues}
\{e^{-2\pi i \beta_{g,1}}, e^{2\pi i \beta_{g,1}}, \ldots,e^{-2\pi i \beta_{g,12}}, e^{2\pi i \beta_{g,12}}\},
\ee
for $\beta_{g,i}\in [0,1/2]$.
Moreover, let $-g \in Co_0$ denote the product of $g$ with the generator $\mathfrak{k}$ of the $\ZZ_2$ center of $Co_0$, to which we can associate the dual Frame shape $\pi_{-g}$ encoding the eigenvalues $\beta_{-g,i}:=1/2-\beta_{g,i}$, $i=1,\ldots,12$. 

Now, given a Frame shape (or dual Frame shape), define the eta product $\eta_g$ to be
\be
\eta_g(\tau):= \prod_{\ell | N}\eta(\ell\tau)^{k_\ell},
\ee
where $\eta(\tau)$ is the Dedekind eta function 
\be
\eta(\tau)=q^{1/24}\prod_{n=1}^\infty (1-q^n).
\ee
Finally, introduce the products $\calC_g:=\pm\prod_{i=1}^{12}(e^{\pi i \beta_{g,i}}-e^{-\pi i \beta_{g,i}})$ (and analogously for $\calC_{-g}$), which are determined up to a sign corresponding to the choice of square root of the 24 eigenvalues of $\rho_{24}(g)$. The two choices of sign correspond to the two lifts of $\rho_{24}(\mathfrak{k}g)\in SO(24)$ to $\Aut(V^{f\natural})\cong Spin(24)$. Only one of these two lifts fixes the supercurrent $\tau$, and this is the correct choice of sign for $\calC_g$; see \cite{Duncan:2015xoa} for a list of the correct values. This can also be determined in terms of irreducible characters of $Co_0$ by
\be \calC_{\pm g}=\Tr_{\bf 1}(g)+\Tr_{\bf 276}(g)+\Tr_{\bf 1771}(g)\mp \Tr_{\bf 24}(g)\mp \Tr_{\bf 2024}(g)\ .
\ee Given this data, we can now determine, for all $g\in Co_0$, a general formula for $\CZ^g$ given by 
 \begin{align}\label{eq:NSgtwining}
 	Z^{g,\pm}_{\NS}(V^{f\natural},\tau)&={1\over 2}\left({\eta_g(\tau/2)\over \eta_g(\tau)} + {\eta_{-g}(\tau/2)\over \eta_{-g}(\tau)} \pm \calC_g\eta_g(\tau) \pm \calC_{-g}\eta_{-g}(\tau)\right )\\\label{eq:Rgtwining}
 	Z^{g,\pm}_{\R}(V^{f\natural},\tau)&={1\over 2}\left(-{\eta_g(\tau/2)\over \eta_g(\tau)} + {\eta_{-g}(\tau/2)\over \eta_{-g}(\tau)} \mp \calC_g\eta_g(\tau) \pm \calC_{-g}\eta_{-g}(\tau)\right ).
 \end{align}
  Note that \eqref{eq:NSgtwining} is symmetric under interchange of the Frame shape $\pi_g$ with its dual Frame shape $\pi_{-g}$, indicating that only $Co_0/\ZZ_2=Co_1$ acts faithfully on $V^{f\natural}$.

Later we will generalize these partition functions to include both non-invertible TDLs \eqref{eq:DefectTwineNS}--
\eqref{eq:DefectTwistR} and twisted--twined versions \eqref{eq:TwistTwineNS},\eqref{eq:TwistTwineR}.

\subsection{The connection with K3 non-linear sigma models}\label{s:VfnatK3conn}

Let us now describe the connection with K3 NLSMs. As mentioned above, the theory $V^{f\natural}$ (the NS sector) does not contain any states with conformal weight $1/2$; however, the Ramond sector $V^{f\natural}_{tw}$ contains a $24$-dimensional space ${}^\RR V^{f\natural}_{tw}(1/2)\cong \mathbb{R}^{24}$  of states with weight $1/2$, and $Co_0\subset SO(24)$ acts on these states in the standard vector representation. 
For any $4$-dimensional subspace $\Pn\subset {}^\RR V^{f\natural}_{tw}(1/2)$, let 
\be G_\Pn:=\{g\in Co_0\subset SO(24)\mid g_{\rvert \Pn}=\mathrm {id}_\Pn\}\subset Co_0\ee 
be the subgroup of $Co_0$ fixing $\Pn$ pointwise. As shown in \cite{Gaberdiel:2011fg}, for each such group $G_\Pn$, there exists a NLSM on K3 whose faithful group of symmetries commuting with the $\CN=(4,4)$ algebra and spectral flow is isomorphic to $G_\Pn$. Conversely, any such group of symmetries of a K3 NLSM is isomorphic to $G_\Pn\subset Co_0$ for some choice of $\Pn$. 

There is also a close relationship between the action of $G_\Pn$ on the fields of $V^{f\natural}$ and its action on the fields of the corresponding K3 sigma model. The choice of $\Pn\subset \mathbb{R}^{24}$ determines a subgroup $SO(4)\times SO(20)\subset  SO(24)$, and the group $G_\Pn \equiv Co_0\cap SO(20)$ commutes with the subgroup $SO(4)\subset SO(24)$. This means that the action of $G_\Pn$ on $V^{f\natural}$ leaves invariant an affine subalgebra $\widehat{so}(4)_1=\widehat{su}(2)_1\oplus \widehat{su}(2)_1$ of $\widehat{so}(24)_1$. Following \cite{Duncan:2015xoa}, choose 
one of the $\widehat{su}(2)_1$ that is fixed by $G_\Pn$. As explained in detail in section 3 of \cite{Cheng:2015kha}, from this $\widehat{su}(2)_1$ and the supercurrent $\tau(z)$ one can construct a copy of the small $\CN=4$ superconformal algebra at central charge $c=6$, precisely the chiral algebra present in K3 NLSMs, which is also fixed by $G_\Pn$. (See equations (3.6) and (3.7) of loc. cit.) 

Now, again following \cite{Duncan:2015xoa}, choose a generator $J_0^3$ in the $\widehat{su}(2)_1$ which forms part of the $c=6, \CN=4$ SCA.\footnote{Though we do not make use of it in this article, we also note that in \cite{Cheng:2015kha} (see also \cite{Taormina:2017zlm}), it is explained that one can further grade the Hilbert space of $V^{f\natural}, V^{f\natural}_{tw}$ by another $U(1)$ current which lies in the {\it second} $\widehat{su}(2)_1 \subset \widehat{so}(4)_1$ and which commutes with the $c=6$, $\CN=4$ SCA described above. See equations (3.14)--(3.18) of loc. cit. } Moreover, for each four--plane--preserving element $g\in G_\Pn$, define
\be \phi^g(V^{f\natural},\tau,z):=\Tr_{V^{f\natural}_{tw}}(g (-1)^Fq^{L_0-\frac{1}{2}}y^{J_0^3})\ .
\ee  
As mentioned in section \ref{s:intro}, for $g=e$, this function is a weight zero, index one weak Jacobi form precisely equal to the elliptic genus of a K3 NLSM.
We can write  an explicit formula for this function given the Frame shape $\pi_g$ of a four--plane--preserving element of $Co_0$ \cite{Duncan:2015xoa},
\be\label{eq:D-MCtwining}
\phi^g(V^{f\natural},\tau,z) = {1\over 2\eta^{12}(\tau)}\sum_{j=1}^4  \epsilon_{g,j} \, \theta_j^2(\tau,z) \prod_{i=1}^{10}  \theta_j^2(\tau,\beta_{g,i}),
\ee
where the $\theta_j(\tau,z)$ are the standard Jacobi theta functions, the $\beta_{g,i}$ are defined in \eqref{gEigenvalues}, we set $\beta_{g,11}=\beta_{g,12}=0$ since $g$ preserves a four--plane, 
and 
 $$
\epsilon_{g,j} = \begin{cases}\mp 1& j=1\\ -{\Tr_{\bf 4096} g \over 4\prod^{10}_{i=1} ( e^{- \pi i\beta_{g,i}}+e^{ \pi i\beta_{g,i}})}&j=2 \\ 1 & j=3\\ -1
&j=4 \end{cases}$$
and where ${\bf 4096} ={\bf 1 + 276 + 1771 + 24 + 2024}$ is the decomposition of the ${\bf 4096}$ into irreps of $Co_0$.  It should be noted that there is sometimes an ambiguity in lifting an element $g$ of $Co_0\cap SO(20)$ to the spin group $Spin(20)$, and this might lead to two different twining genera $\phi^g$ corresponding to the choice of sign in $\epsilon_{g,1}$. This ambiguity can occur only when the subspace of $\RR^{24}$ fixed by $g$ is exactly four-dimensional.  The functions defined in eq. \eqref{eq:D-MCtwining} are weak Jacobi forms of weight zero, index one, and level $N=o(g)$, possibly with multiplier.

The conjecture of \cite{Duncan:2015xoa} is that the functions $\phi^g$ obtained in this way are exactly the twining genera of K3 models for the corresponding symmetry, i.e.\footnote{Note that the twining genera for the groups $G_\Pn\subset Co_0$ are the same in the SVOA $V^{f\natural}$ and $V^{s\natural}$. This is true because, as discussed in section \ref{s:Vfnat_def}, $V^{f\natural}$ and $V^{s\natural}$ are isomorphic as $G$-representations for any subgroup $G\subset Co_0$ fixing at least one vector in the $24$-dimensional representation of $Co_0$.}
\be\label{K3Vfnatmathc}
\phi^g(V^{f\natural},\tau,z)=\phi^g(\calC,\tau,z)
\ee
for some K3 NLSM $\calC$ with symmetry $g$ of the same order and all four--plane--preserving $g\in Co_0$. Moreover, it was conjectured that the $g$-twined elliptic genus in any K3 NLSM can be reproduced by some $\phi^g(V^{f\natural},\tau, z)$ as in \eqref{eq:D-MCtwining}.

In \cite{Cheng:2016org} it was argued that this was true for almost all twining genera arising in K3 NLSMs, but with some exceptions, i.e. there exist a few twining genera in K3 NLSMs which are not reproduced by a formula of the form \eqref{eq:D-MCtwining} for some four--plane--preserving element $g\in Co_0$.\footnote{A review of the subtleties of this connection is beyond the scope of this article. Considerations of worldsheet parity in the K3 moduli space combined with evidence from symmetries of UV Landau--Ginzburg orbifolds \cite{Cheng:2015rby} suggest that a construction in \cite{Cheng:2014zpa} based on the Niemeier lattices may be necessary for describing all twining genera of K3 NLSMs; see \cite{Cheng:2016org} for a more complete discussion and conjectures, and \cite{Paquette:2017gmb} for a physics proof.}  In the cases where there is an ambiguity in lifting to the spin group, it turns out that the both twining genera $\phi^g$ are realized in (possibly different) K3 models.

One of our primary goals is to generalize this relationship to non-invertible TDLs, see Conjecture \ref{conj:K3relation}
in section \ref{s:topdefK3}.

\section{Topological defects in $V^{f\natural}$}
\label{section:TDLs_Vfnat}

In this section, after a general review of topological defects in two dimensional conformal and superconformal field theories (sections \ref{s:topdefCFT} and \ref{s:TDLsuperVOA}), in section \ref{subsec:3.2}, we describe the main results of our article concerning the tensor category of $\CN=1$--preserving topological defects in the SVOA $V^{f\natural}$. Furthermore, we formulate a conjecture relating particular subcategories of such defects in $V^{f\natural}$ to topological defects in K3 sigma models in section \ref{s:topdefK3}. Section \ref{s:proof} contains the proof of Theorem \ref{th:main}.

\subsection{Generalities on topological defects}\label{s:topdefCFT}

In this section, we briefly review some of the main properties of TDLs in two dimensional conformal field theory. In particular, we  focus on compact and unitary theories, and ultimately we will specialize the discussion to the holomorphic case (i.e. for holomorphic vertex operators algebras). See, for example, \cite{Moller:2024xtt,Carqueville:2023jhb, Chang_2019,Frohlich:2009gb} for more details.

Let us start by considering a unitary bosonic conformal field theory with a unique vacuum on a $2$-dimensional Euclidean spacetime (worldsheet) $\Sigma$. One can consider correlation functions
\be \langle O_1(z_1)\cdots O_n(z_n) \CL_1(\gamma_1)\cdots \CL_m(\gamma_m)\rangle_\Sigma\ ,
 \ee where, besides the local operators $O_i(z_i)$ supported at points $z_1,\ldots, z_n\in \Sigma$, we have also insertions of defects $\CL_i(\gamma_i)$ supported on oriented lines $\gamma_i\subset \Sigma$. The lines can be either closed or open, although in the latter case suitable `defect starting' and `defect ending' point operators must be specified at the endpoints (see below). A line defect $\CL(\gamma)$ is \emph{topological} if all correlation functions are invariant under infinitesimal deformations of the support $\gamma$, as long as the line is not moved across the support of some other (point or line) operator insertion. One can define a topological defect line $\CL$ by specifying how correlation functions change when the support of $\CL$ is moved across any local operator $O(z)$. All line defects in this article will be topological, so we will sometimes omit the word `topological' for short.

 We will say that a local operator $O(z)$ is preserved by a topological defect line $\CL (\gamma)$, or that $\CL$ and $O$ are transparent to each other, if the support of $\CL $ can be moved across the support of $O(z)$ without changing any correlation function. In particular, the holomorphic and anti-holomorphic stress-energy tensors $T(z)$ and $\tilde T(\bar z)$ are always transparent to any topological line defect $\CL$. More generally, if a certain set $\{T(z),\phi_1(z),\phi_2(z),\ldots \}$ of holomorphic fields is preserved by a given defect $\CL$, then  $\CL$ automatically preserves  the full chiral algebra $\mathcal{A}$  generated by these fields. In this case, we will also say that $\mathcal{A}$ commutes with $\CL$. An analogous statement holds for the anti-chiral algebra, which is generated by the anti-holomorphic fields that are preserved by $\CL$.

Let us consider the case where $\Sigma=S^1\times \RR$, where we interpret $S^1$ as the compactified `space' direction and $\RR$ as the Euclidean time direction. We denote by $\Hh$ the Hilbert space of states on $S^1$. By the standard conformal mapping of $S^1\times \RR$ to $\CC\setminus \{0\}$, an asymptotic state at time $t\to - \infty$, or $t\to +\infty$, is mapped to a local operator at the origin, or at $\infty$, of the Riemann sphere $ \hat{\CC}=\CC\cup \{\infty\}$, thus  implementing the usual state-operator correspondence.

In this setup, a given topological line defect can be inserted along the space circle $S^1$ at fixed time; in particular, if one can move the support of $\CL$ to $t\to -\infty$ without crossing any other operator, then the initial state $|\psi\rangle$ is mapped to a new state, that we denote by $\hat\CL|\psi\rangle$. Thus, with every $\CL$ one can associate a linear operator $\hat\CL:\Hh\to \Hh$ on the space of states. By conformal mapping to the Riemann sphere, the equivalent statement is that if a line defect $\CL$ on a small circle surrounding a local operator $\psi(0)$ at the origin is shrunk to a point, then we get a new local operator $(\hat\CL\psi)(0)$\footnote{Strictly speaking, there can in principle be a phase mismatch in passing from the operator $\hat\CL$ on the cylinder $S^1\times \RR$ to the corresponding operator acting on the Riemann sphere $\hat\CC$. This is sometimes called an isotopy anomaly (see for example \cite{Chang_2019} for more details). For definiteness, we will define the operator $\hat\CL$ in the cylinder.}, see Figure \ref{fig1}. The linear map $\hat\CL:\Hh\to\Hh$ commutes with the (holomorphic and antiholomorphic) Virasoro algebras, and more generally with all modes in the chiral and anti-chiral algebras $\mathcal{A}$ and $\tilde{\mathcal{A}}$ that are preserved by $\CL$. This means that $\hat\CL$ maps primary states  of $\mathcal{A}\times\tilde{\mathcal{A}}$ to primary states in the same representation, and that the action on such primary states is sufficient to determine $\hat\CL$ completely. Note that the vacuum $|0\rangle \in \Hh$ is necessarily an eigenstate of $\hat\CL$
\be \hat\CL|0\rangle=\langle\CL\rangle |0\rangle\ ,
\ee where the eigenvalue $\langle \CL\rangle$ is called the quantum dimension of $\CL$. In unitary, compact theories (i.e. where $L_0$ and $\bar L_0$ have discrete spectrum) with a unique vacuum, $\langle \CL\rangle$ is a real number satisfying the condition $\langle \CL\rangle\ge 1$ \cite{Chang_2019}.

\begin{figure}[h!]
    \centering
    \includegraphics[width=0.9\linewidth]{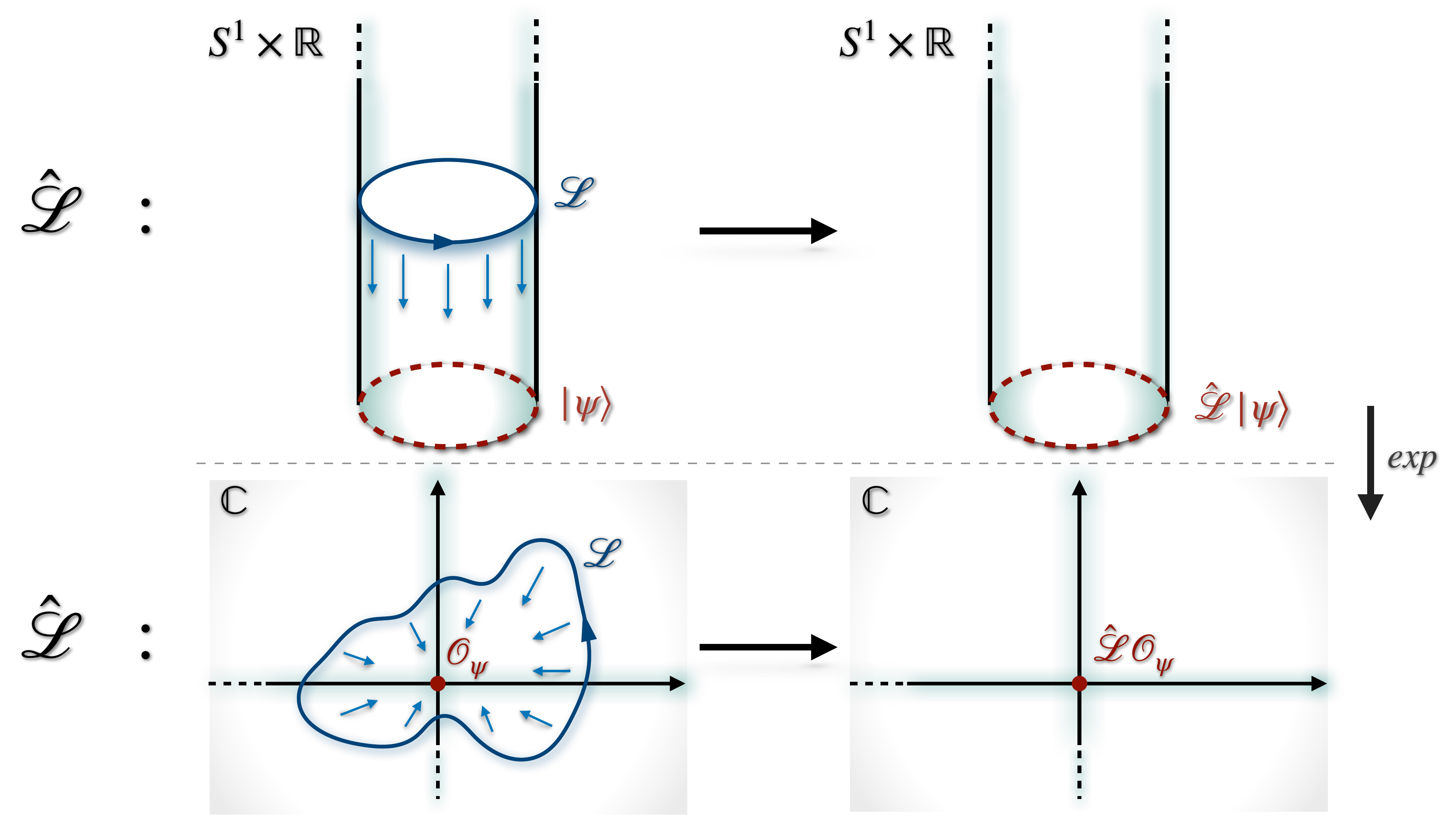}
    \caption{\small{Representation of the action of the linear operator $\hat{\mathcal{L}}$ on the Hilbert space of states on $S^1$ and on the corresponding set of local operators on $\hat{\mathbb{C}}$ as related by the conformal mapping.}  }
    \label{fig1}
\end{figure}

One can also insert a line defect $\CL$ on $S^1\times \RR$ along the Euclidean time direction at a fixed  point in space. In this case, the space of states $\Hh$ on $S^1$ is modified to a new Hilbert space $\Hh_\CL$ (states on $S^1$ in the presence of the defect), that we will call the $\CL$-twisted Hilbert space. By conformal mapping to $\hat\CC$, the space $\Hh_\CL$ is identified with the space of point operators starting a defect line $\CL$; see Figure \ref{fig2}. The space $\Hh_\CL$ is an honest representation of the preserved subalgebras  $\mathcal{A}$ and $\tilde{\mathcal{A}}$. More generally, one expects  the OPE between a local operator $\psi(z)$, with $\psi\in \Hh$, and a defect starting operator $\phi(0)$, $\phi\in \Hh_\CL$, to produce a new operator in $\Hh_\CL$. This OPE is in general not local due to the presence of the line defect. In particular, it is not invariant under moving the local operator $\psi(z)$ along a circle around $\phi(0)$.

\begin{figure}[h!]
    \centering
    \includegraphics[width=0.6\linewidth]{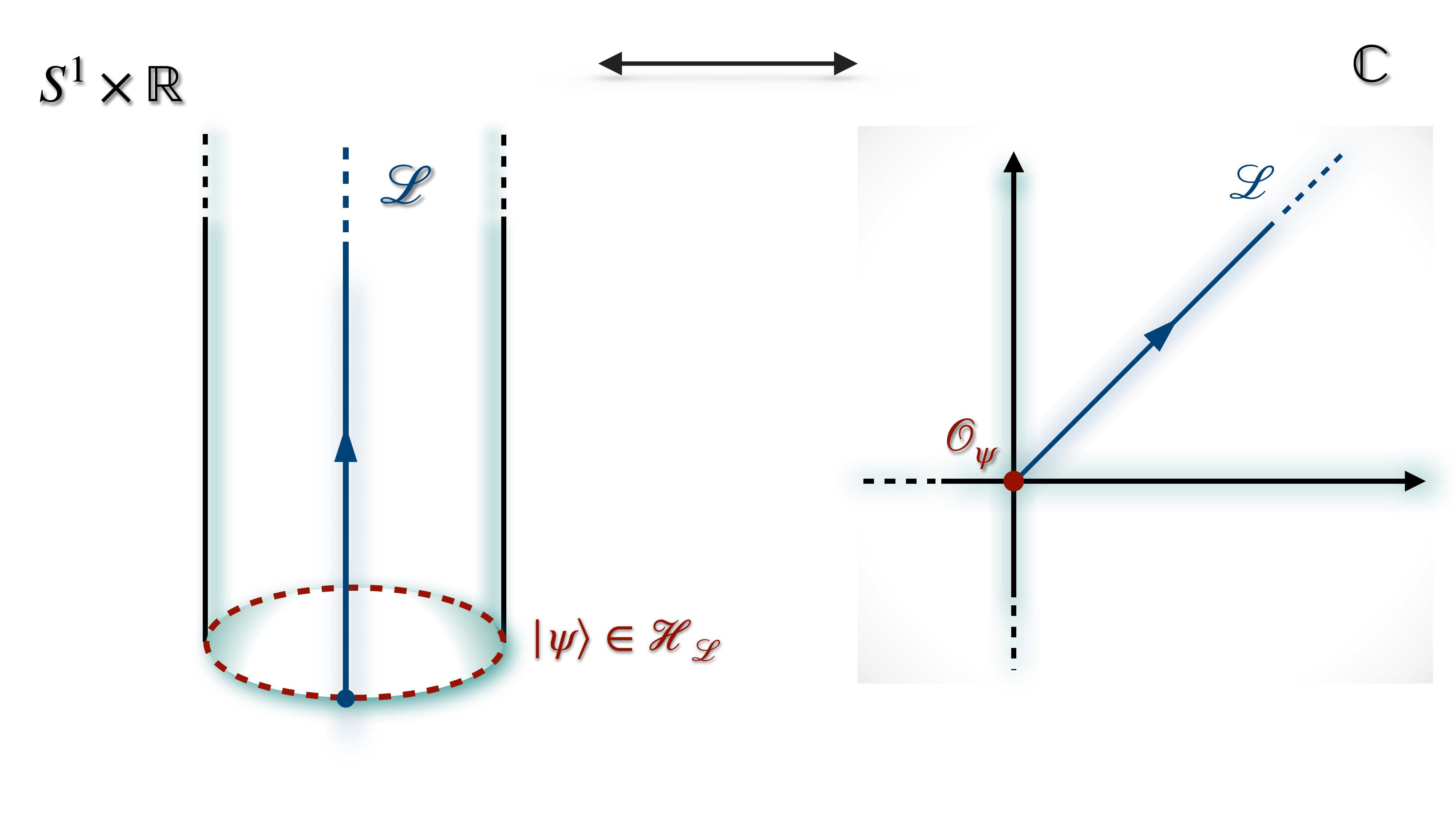}
    \caption{\small{The insertion of a line defect $\mathcal{L}$ along the Euclidian time direction on $S^1 \times \mathbb{R}$ generates the new $\mathcal{L}-$twisted Hilbert space $\mathcal{H}_{\mathcal{L}}$, which states are mapped in defect starting operator for the defect $\mathcal{L}$ in $\hat{\mathbb{C}}$.}}
    \label{fig2}
\end{figure}

In summary, each type of topological defect $\CL$  is associated with a linear operator $\hat\CL:\Hh\to \Hh$ and an $\CL$-twisted space $\Hh_\CL$. These two pieces of data are related by modularity.  Specifically, consider the CFT on a torus $S^1\times S^1$, where the first circle is the space direction and the second circle the compactified Euclidean time, with complex structure determined by the modular parameter $\tau$ in the  upper half-space  $\mathbb{H}:=\{\tau\in\CC\mid \Im\tau>0\}$ . Then by inserting the line defect along the space $S^1$ at fixed time, we get the $\CL$-twined partition function
\be \label{eq:genLtwine} Z^\CL(\tau):=\Tr_{\Hh}(q^{L_0-\frac{c}{24}}\bar q^{\bar L_0-\frac{c}{24}}\hat\CL)\ ,
\ee while inserting $\CL$ along the time circle at fixed space we get the $\CL$-twisted partition function
\be  Z_\CL(\tau):=\Tr_{\Hh_\CL}(q^{L_0-\frac{c}{24}}\bar q^{\bar L_0-\frac{c}{24}})\ .
\ee Therefore, one expects that $Z^\CL$ and $Z_\CL$ are related by a modular transformation that exchanges the two circles 
\be Z_\CL(\tau)=Z^\CL(-1/\tau)\ .
\ee

From a mathematical perspective, topological defects are expected to be described as objects in a fusion or, more generally, a tensor category (see for example \cite{Etingof:2015} for an introduction to the subject).  In particular, the set of topological defect lines in a given CFT is expected to exhibit the following properties:
\begin{itemize}
\item There is always an identity defect $\CI$, whose insertion does not modify any correlation function. The associated linear operator and $\CI$-twisted sector are $\hat\CI={\rm id}_{\Hh}$ and $\Hh_{\CI}=\Hh$.  
    \item By moving two parallel lines $\CL_1$ and $\CL_2$ very close to each other, we get a new topological line, which is the fusion $\CL_1\otimes \CL_2$, or simply $\CL_1\CL_2$. The corresponding line operator is just the product of operators $\widehat{\CL_1\CL_2}=\hat\CL_1\hat\CL_2$, and the $\CL_1\CL_2$-twisted sector is a fusion product $\Hh_{\CL_1\CL_2}\cong \Hh_{\CL_1}\otimes\Hh_{\CL_2}$. 
    \item There is an involution $\CL\mapsto \CL^*$ acting on the set of topological defects, which corresponds to changing the orientation of the support, i.e.
    \be \CL^*(\gamma)\cong \CL(\bar\gamma)\ ,
    \ee where $\bar\gamma$ is the line $\gamma$ with reverse orientation. As we defined the $\CL$-twisted sector as the space of $\CL$-defect starting operators, the $\CL^*$-twisted sector space $\Hh_{\CL^*}$ can be interpreted as the space of $\CL$-defect ending operators. We assume the existence of a non-degenerate two-point function $\langle \phi(z)\psi(w)\rangle_{\hat \CC}$ between $\phi\in \Hh_\CL$ and $\psi\in \Hh_{\CL^*}$. This establishes an isomorphism $\Hh_{\CL^*}\cong (\Hh_\CL)^*$ between $\Hh_{\CL^*}$ and the dual of the Hilbert space $\Hh_\CL$. Furthermore, the operators $\hat\CL$ and $\hat{\CL^*}$ are related by
    \be (\hat{\CL^*}u,v)=(u,\hat\CL v)\ ,
    \ee where $(\cdot,\cdot)$ is the bilinear form defined by the $2$-point function. This means that the adjoint $\hat\CL^\dag$ with respect to the Hilbert space scalar product is given by
    \be \hat\CL^\dag=\theta \hat{\CL^*}\theta\ ,
    \ee where $\theta=\theta^{-1}$ is the CPT antilinear involution. For invertible defects, $\hat{\CL}$ is a unitary operator and $\hat{\CL}^*$ is its inverse, so that $\hat{\CL^*}=\hat\CL^\dag$, i.e  $\hat\CL$ and $\hat\CL^*$ commute with the CPT operator $\theta$. The relation $\hat{\CL^*}=\hat\CL^\dag$ is actually true also for all examples of non-invertible defects in unitary theories that we are aware of.
    \item There is a superposition $\CL_1+\CL_2$ such that $\widehat{\CL_1+\CL_2}=\hat\CL_1+\hat\CL_2$ and $\Hh_{\CL_1+\CL_2}=\Hh_{\CL_1}\oplus\Hh_{\CL_2}$.
    \item The set of morphisms $\Hom(\CL_1,\CL_2)$ is a finite dimensional $\CC$-vector space of topological junction operators that join an (incoming) line $\CL_1$ to an (outgoing) line $\CL_2$. As usual, the word `topological' means that the junction can be moved (while staying attached to the lines $\CL_1$ and $\CL_2$) without changing any correlation function, as long as the deformation does not cross the support of another operator. Every $u\in \Hom(\CL_1,\CL_2)$, Figure \ref{fig3}-(a), corresponds to a linear map $u:\Hh_{\CL_1}\to \Hh_{\CL_2}$ that commutes with the preserved algebras $\mathcal{A}$ and $\tilde{\mathcal{A}}$. One can consider topological three way junctions $v\in \Hom(\CL_1\otimes \CL_2,\CL_3)$, Figure \ref{fig3}-(b), or $w\in \Hom(\CL_1, \CL_2\otimes \CL_3)$, or more general $k$-way topological junctions, that are obtained by composing $3$-way junctions.

    \begin{figure}[h!]
        \centering
        \includegraphics[width=0.6\linewidth]{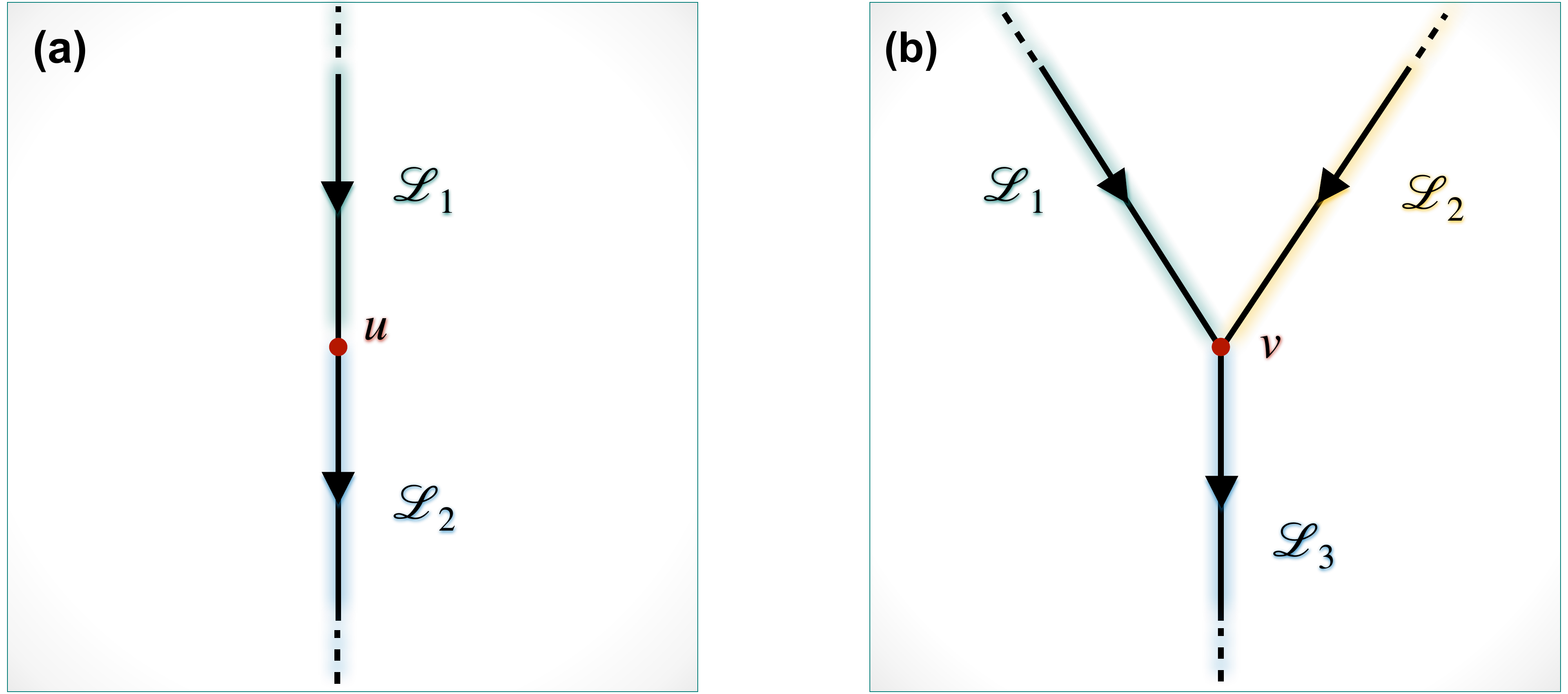}
        \caption{\small{(a) Topological 2-junction corresponding to the operator $u \in \Hom(\CL_1,\CL_2)$. (b) Topological 3-junction corresponding to the operator $v \in \Hom(\CL_1\otimes \CL_2,\CL_3)$.}}
        \label{fig3}
    \end{figure}
    
    \item For each topological defect $\CL$, there is always the identity topological junction between $\CL$ and itself, so that $\dim_\CC \Hom(\CL,\CL)\ge 1$. A defect $\CL$ is \emph{simple} if $\dim_\CC \Hom(\CL,\CL)= 1$. We will always assume that the category of topological defects is semisimple, i.e. the identity $\CI$ is simple and every $\CL$ can be decomposed as the superposition of simple defects $\CL=\sum_{\text{simple }i}n_i\CL_i$, $n_i\in \ZZ_{\ge 0}$. In general, fusion categories contain only a finite number of (isomorphism classes of) simple defects. In this article, we do not want to impose this condition, so, strictly speaking, we are in the realm of tensor category.
    \item The fusion of two simple defects $\CL_i$ and $\CL_j$ is not necessarily simple, and decomposes as
    \be \CL_i\CL_j=\sum_{\text{simple }k} n_{ij}^k \CL_k\ ,
    \ee for some non-negative integral fusion coefficients $n_{ij}^k\in \ZZ_{\ge 0}$; see Figure \ref{fig4} for a pictorial representation. One has \be n_{ij}^k=\dim_\CC\Hom(\CL_i\otimes\CL_j,\CL_k)\ ,\ee i.e. $n_{ij}^k$ is the dimension of the space of $3$-way junctions with incoming $\CL_i$, $\CL_j$ and outgoing $\CL_k$. The Grothendieck ring\footnote{This is a fusion ring when the number of simple objects is finite; more generally, it is a unital based ring, see \cite{Etingof:2015}.} of the tensor category is the ring given by formal $\ZZ$-linear combinations of (isomorphism classes of) objects, with a distinguished basis given by simple objects, the sum given by superposition and the product by the fusion product, i.e. by the coefficients $n_{ij}^k$. Quantum dimensions satisfy
    \be \langle \CL_i\rangle\langle \CL_j\rangle=\sum_k n_{ij}^k \langle \CL_k\rangle\ .
    \ee Given the fact that quantum dimensions are always $\ge 1$, this implies that the fusion of any two simple defects decomposes into a finite number of simple defects.
    \begin{figure}
        \centering
        \includegraphics[width=0.6\linewidth]{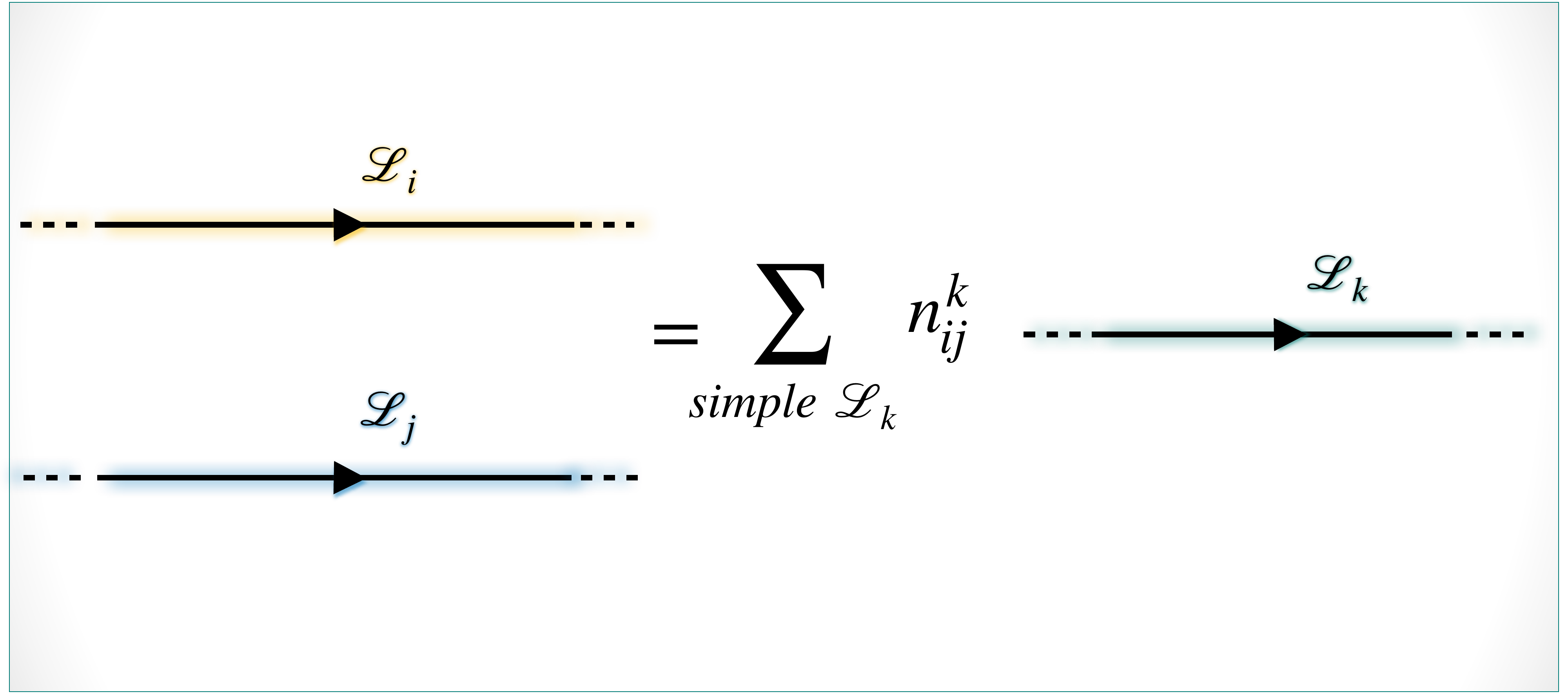}
        \caption{\small{Decomposition of the fusion product of two simple defects $\mathcal{L}_i$, $\mathcal{L}_j$ in terms of simple defects with non-negative integral coefficients.}}
        \label{fig4}
    \end{figure}
    \item For each (not necessarily simple) $\CL$, we have
    \be \CL\CL^*=n\CI +\ldots,
    \ee where $\ldots$ is a sum over simple defects not isomorphic to the identity, and $n\ge 1$, with $n=1$ if and only if $\CL$ is simple. The defect $\CL$ is called invertible if and only if $\CL\CL^*=\CI$. Note that $\CL$ is invertible if and only if its quantum dimension  $\langle \CL\rangle=1$. The set of all invertible topological defects forms a group with respect to tensor products, and it can be identified with the group of global symmetries preserving the chiral algebra of the CFT.
\end{itemize}

We will be interested in the case where all fields of the CFT are holomorphic, so that the  CFT itself can be identified with its (maximally extended) chiral algebra, while the anti-chiral algebra is trivial. In this case, the CFT can be described mathematically as a (unitary) vertex operator algebra (VOA) $V$ where the vector space $V$ is identified with the Hilbert space as $V\equiv \Hh$. The state-operator correspondence assigns to each state $v\in V$ a vertex operator $Y(v,z):=\sum_{n\in \ZZ} v(n)z^{-n-1}$, where $v(n)\in {\rm End}_\CC(V)$. In order to describe a well-defined CFT, $V$ must be a holomorphic VOA, i.e. there must be a  unique (up to isomorphism)  irreducible representation of $V$, which is just $V$ itself.  

One can provide a rigorous mathematical definition for the fusion category of topological defects in $\Hh\cong V$ preserving a subVOA $\mathcal{A}\subset V$ that is strongly rational (i.e. rational, $C_2$-cofinite, and of CFT type; see \cite{Moller:2024xtt} for more details). In this case, it is known that the category $\mathrm{Rep}(\CA)$ of representations of $\CA$ is a modular tensor category \cite{Huang1,Huang2}. The VOA $V$ itself can be seen as a representation over the subalgebra $\CA$, so that it corresponds to an object $V \in {\rm Rep}(\CA)$. Furthermore, the OPE between fields in $V$ determines a morphism $m\in \Hom_{{\rm Rep}(A)}(V\otimes V,V)$ satisfying a number of consistency conditions; this morphism makes $V$ an \emph{algebra object} in ${\rm Rep}(\CA)$. Now, the physical intuition suggests that for every topological defect $\CL$ preserving $\CA$ there should be an $\CL$-twisted sector $V_\CL$ that is a representation of $\CA$. Therefore, every $\CL$ is associated with an object $V_\CL\in {\rm Rep}(\CA)$. Furthermore, one expects a (possibly non-local) OPE between any vertex operator in $V$ and any operator in $V_\CL$ that produces another operator in $V_\CL$. This is formalized by the existence of a morphism $m_\CL\in \Hom_{{\rm Rep}(\CA)}(V\otimes V_{\CL},V_{\CL})$ satisfying suitable compatibility conditions with $m:V\otimes V\to V$;  the object $V_\CL\in {\rm Rep}(\CA)$, equipped with the morphism $m_\CL$,  is called a (non-local) module for the algebra object $V$. Thus, we can identify the category of topological defects preserving $\CA$ in $V$ as the category of modules for the algebra object $V$ in ${\rm Rep}(\CA)$; this is indeed a fusion category satisfying the properties that are expected from physics. See \cite{Moller:2024xtt} and \cite{Volpato:2024goy} for more details and discussions.

The best studied example, discussed in more detail in section \ref{s:orbifolds}, is the case where $\CA=V^G$ is the subalgebra of $V$ that is fixed by a finite group $G\subset \Aut(V)$ of automorphisms of $V$. Assuming that $V^G$ is strongly rational\footnote{This has been proved when $G$ is an abelian or, more generally, a solvable finite group \cite{Carnahan:2016guf}.}, the category of topological defects of $V$ preserving $V^G$ contains only the invertible defects $\CL_g$, $g\in G$, as simple defects, with fusion given by the group law $\CL_g\otimes \CL_h\cong \CL_{gh}$. In this case, the objects $V_g\equiv V_{\CL_g}$ in the abstract category of modules for the algebra object $V\in {\rm Rep}(V^G)$ have a  concrete description as irreducible admissible $g$-twisted  $V$-modules. The latter are defined as $L_0$-graded vector spaces $V_g=\oplus_{r\in \lambda+\frac{1}{N}\ZZ} V_g(r)$, for some $\lambda\in \QQ_{\ge 0}$, equipped with a linear map $V\ni v\mapsto Y_g(v,z)$, where the vertex operator $Y_g(v,z)$ is a $z$-dependent endomorphism of $V_g$ satisfying a `$g$-twisted analogue' of the standard VOA axioms (see for example \cite{DongLepowsky,DLM1,DLM3,DLM4,DLM5}). Physically, $Y_g(v,z)$ represent the fields of the original theory $V$ when acting on $g$-twisted sector states. As far as we know, no analogous mathematically rigorous definition of $\CL$-twisted vertex operators $V\ni v\mapsto  Y_\CL(v,z)$ as $z$-dependent endomorphisms of $V_\CL$ have been provided for generic (possibly non-invertible) topological defects $\CL$.

For a given (finite) group $G$, the inequivalent fusion categories generated by simple defects $\CL_g$, $g\in G$, with $G$-group fusion rules are in one-to-one correspondence with classes $[\alpha]\in H^3(G,U(1))$ in the third cohomology group of $G$ \cite{Dixon:1986jc,Narain:1986qm,Dijkgraaf:1989pz,Roche:1990hs}. 
Physically, $[\alpha]$ is interpreted as the 't Hooft anomaly, i.e. the obstruction to gauging (i.e. taking the orbifold) the global symmetry $G$. For each representative $\alpha$ of $[\alpha]$ the corresponding fusion category is usually denoted by $Vec^\alpha_G$, and is the category of $G$-graded finite dimensional vector spaces, with an associator isomorphism $(V_g\otimes V_h)\otimes V_k\stackrel{\cong}{\to} V_g\otimes (V_h\otimes V_k)$ `twisted' by $\alpha(g,h,k)\in U(1)$. 

In this article, we are interested in the case where the preserved subalgebra $\CA$ is not necessarily rational. This case is much less understood from a mathematical viewpoint; we will assume that most of the properties expected from physics still hold for the defects we are interested in.

\subsection{Topological defects in supersymmetric VOAs}\label{s:TDLsuperVOA}

In this section we extend the construction from the previous section to holomorphic fermionic VOAs (or super vertex operator algebras, SVOAs). In these settings, the space of states $V$ (NS sector) is $\ZZ_2$-graded by the fermion number $(-1)^F$, which acts on the states with $L_0$-eigenvalue $\frac{r}{2}+\ZZ$, $r\in \ZZ/2\ZZ$ by multiplication by $(-1)^r$. Furthermore, there is a Ramond sector $V_{tw}$, i.e. a $(-1)^F$-twisted module for $V$; in all holomorphic SVOAs that we consider, one can choose a $\ZZ_2$-grading by $(-1)^F$ on $V_{tw}$ compatible with the one on $V$. 

Consider a category of topological defects in $V$ preserving a rational subSVOA $\CA\subset V$; for simplicity let us here assume that $\CA$ is strongly rational, where we say that an SVOA is rational if its even (bosonic)  subalgebra is.
In our cases of interest, $\CA$ always contains the $\CN=1$ superVirasoro algebra, and thus the odd part is non-zero.

In this case, every topological defect $\CL$ commuting with $\CA$ is associated with an operator $\hat\CL: V\to V$ preserving the $L_0$ eigenspaces and therefore the $(-1)^F$-grading. We also want to associate with $\CL$ a linear operator $\hat\CL:V_{tw}\to V_{tw}$ acting on the Ramond sector, as well as the $\CL$-twisted and the $(-1)^F\CL$-twisted sectors $V_{\CL}$ and $V_{\CL,tw}$ that are, respectively, untwisted and $(-1)^F$-twisted representations for the superalgebra $\CA$. However, defining these concepts in a way that is compatible with the $(-1)^F$ grading can sometimes present complications, namely:
\begin{enumerate}
    \item Even if $\hat\CL:V\to V$ commutes with $(-1)^F$, it might not be possible to define $\hat\CL:V_{tw}\to V_{tw}$ such that it commutes with the $\ZZ_2$-grading by $(-1)^F$ on the Ramond sector.
    \item The irreducible $\CL$-twisted and $(-1)^F\CL$-twisted sectors $V_\CL$ and $V_{\CL,tw}$ might not admit a well-defined $\ZZ_2$-grading compatible with the one on $\CA$. Suppose, for example, that $(-1)^F$ is not well-defined on $V_\CL$. What happens in this case is that one can define two irreducible $\CL$-twisted modules $V_{\CL}$ and $V'_{\CL}$, that are the same as vector spaces, but with different vertex operators $Y_{V_\CL}(v,z)$ and $Y_{V'_\CL}(v,z)$, $v\in \CA$ acting on them, with the relation $Y_{V'_\CL}(v,z)=(-1)^{F}Y_{V_\CL}(v,z)(-1)^{F}$. This means that $(-1)^F$ is well-defined only on the direct sum $V_{\CL}\oplus V'_{\CL}$, which is not irreducible as an $\CL$-twisted module.
\end{enumerate}
See \cite{Chang:2022hud,Runkel:2022fzi,Grigoletto:2021zyv,Bhardwaj:2024ydc} for more details.
An example of the first issue can be found for an invertible $\ZZ_2$ defect in the SVOA $V$ generated by two  holomorphic spin-$1/2$ free Majorana fermions $\psi^1(z),\psi^2(z)$, that can be combined into a complex fermion $\chi=\frac{1}{\sqrt{2}}(\psi^1+i\psi^2)$ and $\chi^*=\frac{1}{\sqrt{2}}(\psi^1-i\psi^2)$. The Ramond sector of the theory contains two ground states $|\pm\rangle$, satisfying $\chi_0|-\rangle=0$ and $\chi^*_0|-\rangle=|+\rangle$, such that $\chi_0^*|+\rangle=0$. There are two consistent definitions of $(-1)^F$ as an involution on the Ramond sector, acting either by $(-1)^F|\pm\rangle =\pm |\pm \rangle$ or with the opposite signs on the ground states. Consider the symmetry $g$ acting by $g(\psi^1)=\psi^1$ and $g(\psi^2)=-\psi^2$ on the fermions. Clearly, $g$ commutes with $(-1)^F$ on the NS sector. However, because $g(\chi^*)=\chi$, the action of $g$ on the Ramond ground states must exchange $|+\rangle$ and $|-\rangle$, and therefore it cannot commute with $(-1)^F$.

An example of the second issue arises for the Ramond sector of the SVOA $V$ of a single free Majorana fermion $\psi(z)$ of spin $1/2$. We would like to define the Ramond space $V_{tw}$ to be an irreducible highest weight representation for the algebra of integrally moded fermionic modes $\psi_r$, $r\in \ZZ$, obeying the usual algebra $\{\psi_r,\psi_s\}=2\delta_{r,-s}$. In such a module, there is a single Ramond ground state that is an eigenvector of the zero mode $\psi_0$ and is annihilated by all $\psi_r$ with $r>0$. Because $(\psi_0)^2=1$, there are actually two different irreducible $(-1)^F$-twisted representations $V_{tw}$ and $V'_{tw}$ for the SVOA $V$, one with a ground state $|+\rangle$ such that $\psi_0|+\rangle=+|+\rangle$ and one with ground state $|-\rangle$ such that $\psi_0|-\rangle=-|-\rangle$. In either of these modules, one cannot assign a well-defined fermion number to the single ground state $|+\rangle$ or  $|-\rangle$ in a way compatible with the operators $\psi_r$ being odd. If one insists on having a well-defined fermion number in the Ramond sector, then one needs to take the direct sum $V_{tw}\oplus V'_{tw}$, so that there are two ground states $|-\rangle$ and $|+\rangle$, and define $(-1)^F|\pm \rangle=|\mp\rangle$. In this case, however, the Ramond sector is not an irreducible module for the algebra $\psi_r$, $r\in \ZZ$. A consequence of this problem is that the naive modularity relations do not hold. In particular, if we define the NS $\ZZ_2$-graded partition function in the usual way \be
Z_{\NS}^-(\tau):=\Tr_V(q^{L_0-\frac{c}{24}}(-1)^F)\ ,
\ee then the S-transformation yields the Ramond partition function only up to factors $\sqrt{2}$: 
\be Z_{\NS}^-(-1/\tau)=\sqrt{2}\Tr_{V_{tw}}(q^{L_0-\frac{c}{24}})=\sqrt{2}\Tr_{V'_{tw}}(q^{L_0-\frac{c}{24}})=\frac{1}{\sqrt{2}}\Tr_{V_{tw}\oplus V'_{tw}}(q^{L_0-\frac{c}{24}})\ .
\ee
As discussed, for example, in \cite{DongNgRen:2021} or \cite{DongRenYang:2022}, this phenomenon can arise in general for invertible defects preserving a strongly rational subSVOA.

For a group $G$ of invertible defects, the interplay between the group action and the fermion number can be described in terms of an anomaly, represented by a class in the supercohomology group $SH^3(G,U(1))$, that generalizes the group $H^3(G,U(1))$ of 't Hooft anomalies in the bosonic case \cite{Gu:2012ib,Kapustin:2014dxa,Gaiotto:2015zta,Brumfiel:2016vpy,Gaiotto:2017zba}\footnote{These supercohomology classes admit a more geometrical interpretation as elements of $ {\rm Hom}(\Omega^{spin}_3(BG),U(1))\cong SH^3(G,U(1))$, i.e. as  $U(1)$-valued homomorphisms defined from the group $\Omega^{spin}_3(BG)$ of spin-bordism classes $[(M,f)]$ \cite{Gaiotto:2015zta,Brumfiel:2016vpy,Gaiotto:2017zba}. Here, $M$ is a $3$-manifold with spin structure, and $f:M\to BG$ is a continuous map to the universal classifying space of $G$. This isomorphism fits with the general understanding of anomalies as classes in suitable cobordism groups.}. An element in $SH^3(G,U(1))$ can be described as a triple $(\omega,\mu,\alpha)$, where $\omega:G\to \ZZ_2$ (the so-called Majorana layer) is a $\ZZ_2$-valued $1$-cocycle (i.e. a homomorphism), $\mu:G\times G\to \ZZ_2$ (Gu-Wen layer) a $\ZZ_2$-valued $2$-cochain, and $\alpha:G\times G\times G\to U(1)$ (bosonic or 't Hooft layer) a $U(1)$-valued $3$-cochain. In particular, $\omega(g)$ is $+1$ or $-1$ depending on whether the $\CL_g$ twisted sector admits or not a well-defined fermion number (the latter case corresponds to issue (2) above). The Gu-Wen layer $\mu(g,h)$ indicates whether the isomorphism $V_g\otimes V_h\stackrel{\cong}{\longrightarrow} V_{gh}$ is even or odd. Finally, $\alpha$ is analogous to the 't Hooft anomaly in bosonic theories. In this article, we will only consider cases where the Majorana layer is trivial. In this situation, $\mu$ must be a $2$-cocyle (i.e. it satisfies $\mu(g,h)\mu(gh,k)=\mu(g,hk)\mu(h,k)$), and therefore it defines a $\ZZ_2$ central extension $\tilde G=\ZZ_2.G$ of the group $G$. Physically, $G$ is the group acting faithfully on the NS sector, while $\tilde G$ has a well-defined action on both the NS and Ramond sector; in particular, $\tilde G$ is the extension of $G$ by the order $2$ symmetry acting by $-1$ in the Ramond sector. The $2$-cocycle $\mu$ and the $3$-cochain $ \alpha$ can be combined into a $3$-cocycle $\tilde\alpha$ for the group $\tilde G$. If $G$ includes the fermion number as a central element, then in order to avoid issue (1), we also require $\omega$ and $\alpha$ to be trivial whenever one of their arguments is $(-1)^F$. In practice, when the Majorana layer is trivial, and there are no (mixed) anomalies involving the fermion number, one can safely ignore the complications due to the fermions, and work as if one had a bosonic VOA with symmetry group $\tilde G$ and 't Hooft anomaly $\tilde\alpha$. An important example of non-trivial Gu-Wen layer is given by $V=V^{f\natural}$ itself, where $G\cong Co_1$ acts faithfully on the NS sector, while the group with a well-defined (non-projective) action on the Ramond sector is the double cover $\tilde G\cong \ZZ_1.Co_1\cong Co_0$ \cite{Johnson-Freyd:2017ble}.

This analysis generalizes to the case of a category including non-invertible simple defects \cite{Chang:2022hud,Runkel:2020zgg,Runkel:2022fzi}, even though, as far as we know, the various possibilities cannot be represented by different classes in some (super-)cohomology. The analogue of the Majorana layer is the distinction between simple defects of $m$-type or of $q$-type, i.e. such that $\CL$-twisted sector admits or not a well-defined fermion number \cite{Gaiotto:2015zta,Aasen:2017ubm}. Notice that the fusion product of two $m$-type defects is still $m$-type; this is analogous to the cocycle condition for $\alpha$. The analogue of the Gu-Wen layer is the possibility that some topological $3$-junction operators $V_{\CL_i}\otimes V_{\CL_j}\to V_{\CL_k}$ are odd, i.e. they invert the fermion number. In analogy with the group-like case, when all defects are of $m$-type, one can avoid introducing such odd junctions by simply replacing them by even junctions $V_{\CL_i}\otimes V_{\CL_j}\to V_{\xi\CL_k}$, where $ V_{\xi\CL_k}$ is the same as $V_{\CL_k}$ but with reversed fermion number. Compatibility with modular transformations implies that the defect $\xi\CL_k$ is just the fusion of $\CL_k$ with the invertible order $2$ defect $\xi$, acting by $-1$ on the Ramond sector and trivially on the NS sector. Considering $\CL$ and $\xi\CL$ as distinct defects is analogous to working with the double cover $\tilde G$ of $G$.

In the next section, for simplicity, we will  restrict ourselves to categories of topological defects where neither issue (1) nor issue (2) occurs; in particular, all topological defects are of $m$-type and the fusion matrices involving the fermion number are trivial.

\subsection{Topological defects in $V^{f\natural}$ and main theorem}
 \label{subsec:3.2}

Let us focus on topological defects for a particular holomorphic SVOA $V$ with central charge $c\in \frac{1}{2}\NN$ that commute with a given $\CN=1$ superconformal algebra and are well-behaved with respect to the fermion number $(-1)^F$. More specifically,  consider defects $\CL$ satisfying the following conditions:
\begin{enumerate}
	\item The linear map $\hat\CL:V\to V$ commutes with $(-1)^F$ and with the action of the $\CN=1$ superVirasoro algebra. The same holds for the induced linear map $\hat\CL:V_{tw}\to V_{tw}$.
	\item Both the $\CL$-twisted sector $V_\CL$ and the $(-1)^F\CL$-twisted sector $V_{\CL,tw}$ are unitary representations of, respectively, the NS and Ramond $\CN=1$ superVirasoro algebras, with a well-defined and compatible $\ZZ_2$-grading by $(-1)^F$. Furthermore, the spectrum of $L_0$ is discrete,\footnote{It may be possible to reproduce the results of this article for defects obeying weaker conditions, for example by allowing $L_0$ to have continuous spectrum (maybe with a gap) on the $\CL$-twisted sector; see section \ref{s:discussion} for further comments. In this article we will not consider these more general possibilities.} and each $L_0$-eigenspace is a finite dimensional $\ZZ_2$-graded vector space.  \end{enumerate}
   In the language of section \ref{s:TDLsuperVOA}, this implies that $\CL$ is of $m$-type. The spaces $V_\CL $ and $V_{\CL,tw}$ are called the NS and Ramond $\CL$-twisted sectors, respectively. Note that unitarity implies that the conformal weights (i.e. $L_0$-eigenvalues) obey $h\in \RR_{\ge 0}$ and $h\in \RR_{\ge \frac{c}{24}}$ in the NS and R $\CL$-twisted sectors, respectively, so that we have a decomposition
   \be V_\CL=\oplus_{h\in\RR_{\ge 0}} V_{\CL}(h)\ ,\qquad  V_{\CL,tw}=\oplus_{h \in\RR_{\ge c/24}} V_{\CL,tw}(h)\ .\ee
It is  useful to observe that a topological defect $\CL$ of $V$ obeying properties 1 and 2 also defines a topological defect in the SVOA $V'=V/\langle (-1)^F\rangle$ obtained by taking the orbifold of $V$ by the fermion number symmetry $(-1)^F$ (assuming that such an orbifold is a consistent SVOA). In general, however, the induced defect $\CL$ might not preserve any $\CN=1$ supercurrent in the orbifold SVOA $V'$. In particular, when $V=V^{f\natural}$, this means that $\CL$ defines also a topological defect in the SVOA $F(24)$ of $24$ free fermions. As discussed in section \ref{s:Vfnat_def}, the two theories have the same even subVOA, and are obtained from each other by exchanging the odd NS sector of one theory with the even Ramond sector of the other. 
   
   If properties 1 and 2 hold, then we have a well-defined  $\CL$-twined partition function $\CZ^\CL:=(Z^{\CL,+}_{\NS}, Z^{\CL,-}_{\NS}, Z^{\CL,+}_{\R}, Z^{\CL,-}_{\R})^t$ with components
 \begin{align}\label{eq:DefectTwineNS}
 	Z^{\CL,\pm}_{\NS}(\tau)&:=\Tr_{V}(q^{L_0-\frac{c}{24}}(\pm 1)^F\hat\CL)=\sum_{h\in \frac{1}{2}\ZZ} q^{h-\frac{c}{24}}\Tr_{V(h)}((\pm 1)^F\hat\CL)\\\label{eq:DefectTwineR}
 	Z^{\CL,\pm}_{\R}(\tau)&:=\Tr_{V_{tw}}(q^{L_0-\frac{c}{24}}(\pm 1)^F\hat\CL)=\sum_{h\in \frac{1}{2}+\ZZ} q^{h-\frac{c}{24}}\Tr_{V_{tw}(h)}((\pm 1)^F\hat\CL)\ ,
 \end{align}
 and an $\CL$-twisted partition function $\CZ_\CL:=(Z^+_{\CL,\NS}, Z^-_{\CL,\NS}, Z^+_{\CL,\R}, Z^-_{\CL,\R})^t$ with components
 \begin{align}\label{eq:DefectTwistNS}
 	Z^\pm_{\CL,\NS}(\tau)&:=\Tr_{V_\CL}(q^{L_0-\frac{c}{24}}(\pm 1)^F)=\sum_h q^{h-\frac{c}{24}}\Tr_{V_\CL(h)}((\pm 1)^F)\\\label{eq:DefectTwistR}
 	Z^\pm_{\CL,\R}(\tau)&:=\Tr_{V_{\CL,tw}}(q^{L_0-\frac{c}{24}}(\pm 1)^F)=\sum_h q^{h-\frac{c}{24}}\Tr_{V_{\CL,tw}(h)}((\pm 1)^F)\ .
 \end{align} We require the following:
\begin{enumerate}\setcounter{enumi}{2}
	\item $\CZ^\CL$ and $\CZ_\CL$ transform into one another under modular S-transformations
\be\label{Stransftwin} \CZ_\CL(-1/\tau)=\rho(S)\CZ^\CL(\tau)
\ee where $\rho(S)$ is the same as in eq.\eqref{modularspin}. 
\end{enumerate}
By property 2, $V_{\CL,tw}$ carries a unitary representation of the Ramond $\CN=1$ superVirasoro algebra with discrete $L_0$-spectrum. Therefore, an argument similar to the one given for the Witten index $Z_\R^-$ implies that $Z^{\CL,-}_{\R}$ and $Z^-_{\CL,\R}$ must be independent of $\tau$,
\be Z^{\CL,-}_{\R}=\Tr_{V_{tw}(c/24)}((-1)^F\hat\CL)\ ,\qquad Z^{-}_{\CL,\R}=\Tr_{V_{\CL,tw}(c/24)}((-1)^F)\ .
\ee Furthermore, by \eqref{Stransftwin}, they must be equal: $Z^{\CL,-}_{R}=Z^-_{\CL,R}$. It follows that 
\be \Tr_{V_{tw}(c/24)}((-1)^F\hat\CL)=\Tr_{V_{\CL,tw}(c/24)}((-1)^F)\in \ZZ\ ,\ee
and the latter is clearly an integral number.

Let $G\subseteq \Aut_\tau(V)$ be a group of symmetries preserving the $\CN=1$ supercurrent $\tau(z)$, and such that all invertible topological defects $\CL_g$, $g\in G$, satisfy the properties 1, 2, and 3 above. Let $\cTop$ be a tensor category of topological defects satisfying 1, 2, 3, and suppose that $\cTop$ contains all invertible defects $\CL_g$, $g\in G$.
Then any $\CL\in \cTop$ must satisfy
\be\label{Cardy} \Tr_{V_{tw}(c/24)}((-1)^F\hat\CL_g\hat\CL )\in \ZZ\ ,\qquad \forall g\in G\ .
\ee

The application of the constraint \eqref{Cardy} to the case $V=V^{f\natural}$, leads us to the central theorem of this article, that we describe now. Let $\CC^{24}$ be a $24$-dimensional Hilbert space, and $\Lambda\subset \CC^{24}$ an embedded copy of the Leech lattice, i.e. a free $\ZZ$-module of rank $24$ such that $\Lambda\otimes \CC\cong \CC^{24}$ and $\Lambda$ is isometric to the Leech lattice. Then the ring  ${\rm End}(\Lambda)$ of $\ZZ$-linear maps from $\Lambda$ to itself is naturally a subring of ${\rm End_\CC}(\mathbb{C}^{24})$, the rings of $\CC$-linear maps from $\CC^{24}\cong \Lambda\otimes \CC$ to itself. Furthermore, $\Lambda\otimes\RR\cong \RR^{24}\subset \CC^{24}$ is a $24$-dimensional real Euclidean space, one has ${\rm End}(\Lambda)\subset {\rm End}_\RR(\RR^{24})\subset {\rm End_\CC}(\mathbb{C}^{24})$, and hermitian conjugation on ${\rm End_\CC}(\mathbb{C}^{24})$ restricts to transposition on ${\rm End}_\RR(\RR^{24})$.
\begin{theorem}\label{th:main}
    Let $\cTop$ be a tensor category of topological defects $\mathcal{L}$ of $V^{f\natural}$ containing all invertible defects $\mathcal{L}_g$, $g\in Co_0$, and such that all objects $\CL\in \cTop$ satisfy properties 1, 2 and 3 above.  Then there is an embedding $\Lambda\hookrightarrow {}^\RR V^{f\natural}_{tw}(1/2)$ of the Leech lattice $\Lambda$ into ${}^\RR V^{f\natural}_{tw}(1/2)\subset V^{f\natural}_{tw}(1/2)\cong \mathbb{C}^{24}$, such that for every $\mathcal{L}\in \cTop$, the $\CC$-linear map $\hat\CL_{\rvert V^{f\natural}_{tw}(1/2)}:V^{f\natural}_{tw}(1/2)\to V^{f\natural}_{tw}(1/2)$ is contained in ${\rm End}(\Lambda)$. The assignment \begin{align}
       \rho:\cTop &\to {\rm End}(\Lambda)\ ,\\ \CL &\mapsto \rho(\CL):=\hat\CL_{\rvert V^{f\natural}_{tw}(1/2)} 
    \end{align} defines a surjective, non-injective ring homomorphism from the Grothendieck ring $Gr(\cTop)$ to ${\rm End}(\Lambda)$, such that $\rho(\CL^*)=\rho(\CL)^\dag=\rho(\CL)^t$.
\end{theorem}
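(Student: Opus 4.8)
The plan is to exploit the integrality constraint \eqref{Cardy} together with the explicit knowledge of the Ramond ground space ${}^\RR V^{f\natural}_{tw}(1/2)\cong\RR^{24}$, on which $Co_0$ acts as the standard $\mathbf{24}$ and which carries a $Co_0$-invariant positive definite inner product. First I would show that the Witten-index pairing appearing in \eqref{Cardy} reduces, for $V=V^{f\natural}$, to a trace on the $24$-dimensional ground space alone: since the full twined Ramond partition function $Z^{\CL,-}_{\R}$ is $\tau$-independent by the supersymmetry argument of section~\ref{subsec:3.2}, only $V^{f\natural}_{tw}(c/24)=V^{f\natural}_{tw}(1/2)$ contributes, and on this space $(-1)^F$ acts as $+1$ (these are the $24$ bosonic ground fields, the superpartners of the free fermions of $F(24)$). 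Hence \eqref{Cardy} becomes $\Tr_{V^{f\natural}_{tw}(1/2)}\!\big(\rho(\CL_g)\rho(\CL)\big)\in\ZZ$ for all $g\in Co_0$, where I write $\rho(\CL):=\hat\CL_{\rvert V^{f\natural}_{tw}(1/2)}$ and $\rho(\CL_g)$ is the image of $g$ in the $\mathbf{24}$.

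Next I would pin down the lattice. The integers $\Tr(\rho(\CL_g)\rho(\CL))$ for $g$ ranging over $Co_0$ determine $\rho(\CL)$ as an element of a $\ZZ$-lattice inside $\mathrm{End}_\CC(\CC^{24})$: the pairing $(A,B)\mapsto\Tr(AB)$ is non-degenerate on $\mathrm{End}_\RR(\RR^{24})$, and the condition ``$\Tr(gA)\in\ZZ$ for all $g\in Co_0$'' cuts out the dual of the $\ZZ$-span of $\{\rho_{24}(g)\}_{g\in Co_0}$. The key input is that this $\ZZ$-span, as a subring of $\mathrm{End}(\RR^{24})$, is an order whose saturation (or the dual lattice with respect to the trace form, appropriately normalized) is isometric to $\mathrm{End}(\Lambda)$ for the Leech lattice $\Lambda$ — equivalently, there is a $Co_0$-stable lattice $\Lambda\subset\RR^{24}$ (necessarily the Leech lattice, up to scale, by Conway's characterization cited in section~\ref{s:Vfnat_def}), and $\{A:\Tr(gA)\in\ZZ\ \forall g\}=\mathrm{End}(\Lambda)$. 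I would establish this by: (i) noting $\ZZ[Co_0]$ acts on some lattice $L\cong\Lambda$ in $\RR^{24}$; (ii) showing $\mathrm{End}_{\ZZ}(L)$ consists exactly of the integral-trace-pairing elements, using that $L$ is unimodular so $L^*=L$ and $\mathrm{Hom}(L,L)=\mathrm{Hom}(L,L^*)$ is self-dual under the trace form; (iii) using that the only $24$-dimensional even unimodular lattice with no roots stable under a large enough subgroup of $Co_0$ is Leech. Then I would identify $\Lambda$ with its image under an appropriate isometric embedding $\Lambda\hookrightarrow{}^\RR V^{f\natural}_{tw}(1/2)$, which exists and is $Co_0$-equivariant precisely because $Co_0\subset SO(24)$ acts on this real inner-product space as the automorphism group of $\Lambda$.

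It then remains to check the ring-homomorphism properties, which are formal. Functoriality of restriction gives $\rho(\CL_1\CL_2)=\widehat{\CL_1\CL_2}_{\rvert}=\hat\CL_{1\rvert}\hat\CL_{2\rvert}=\rho(\CL_1)\rho(\CL_2)$, using $\widehat{\CL_1\otimes\CL_2}=\hat\CL_1\hat\CL_2$ from section~\ref{s:topdefCFT}; additivity on superpositions is immediate; and $\rho(\CL^*)=\rho(\CL)^t$ follows from $(\hat{\CL^*}u,v)=(u,\hat\CL v)$ together with the fact that the invariant bilinear form is positive definite real on ${}^\RR V^{f\natural}_{tw}(1/2)$, so the adjoint with respect to $(\cdot,\cdot)$ is the transpose and equals the Hilbert-space adjoint since $\theta=\mathrm{id}$ there. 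Surjectivity of $\rho$ onto $\mathrm{End}(\Lambda)$ follows because all the invertible defects $\CL_g$, $g\in Co_0$, lie in $\cTop$ and their images $\rho_{24}(g)$ already generate $\mathrm{End}(\Lambda)$ as a ring (this is essentially a consequence of step (ii)–(iii): the $\ZZ$-algebra generated by $Co_0$ in $\mathrm{End}(\Lambda)$ has the same rank $24^2$, and one checks it is in fact all of $\mathrm{End}(\Lambda)$, e.g.\ because $\mathrm{End}(\Lambda)\otimes\QQ=M_{24}(\QQ)$ is simple and the $\ZZ$-order it generates is maximal/saturated). Non-injectivity is witnessed concretely: there exist non-trivial defects — e.g.\ the Verlinde-type or orbifold defects constructed later in section~\ref{s:orbifolds}, or simply a nontrivial element of the kernel such as a defect built from a symmetry acting trivially on $V^{f\natural}_{tw}(1/2)$ but nontrivially on higher modes — whose restriction to the ground space is the identity; alternatively one notes $\dim Gr(\cTop)=\infty$ while $\mathrm{rk}\,\mathrm{End}(\Lambda)=576$.

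The main obstacle I anticipate is step (ii)–(iii): proving that the elements of $\mathrm{End}_\RR(\RR^{24})$ with integral trace-pairing against all of $\rho_{24}(Co_0)$ form \emph{exactly} $\mathrm{End}(\Lambda)$ — neither larger nor smaller. The inclusion $\mathrm{End}(\Lambda)\subseteq\{A:\Tr(gA)\in\ZZ\}$ is easy (any $A\in\mathrm{End}(\Lambda)$ and $g\in Co_0\subset\mathrm{Aut}(\Lambda)$ compose to an endomorphism of $\Lambda$, whose trace is an integer since $\Lambda$ admits a $\ZZ$-basis); the reverse inclusion requires knowing that $\ZZ[Co_0]$ is ``big enough'' inside $\mathrm{End}(\Lambda)$, i.e.\ that its $\ZZ$-span is a full-rank saturated sublattice — equivalently that $\mathrm{End}(\Lambda)$ is the unique maximal order, or a direct argument that the trace form restricted to $\ZZ[Co_0]\subset\mathrm{End}(\Lambda)$ is unimodular. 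This is where the specific arithmetic of the Leech lattice and the Conway group enters essentially, and it is the technical heart of the theorem.
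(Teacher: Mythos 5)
Your reduction of \eqref{Cardy} to the trace condition on $V^{f\natural}_{tw}(1/2)$, the dualization set-up, the formal ring-homomorphism properties, and the non-injectivity remark all match the paper's structure. However, there is a genuine gap exactly where you flag ``the technical heart'': you never prove that $\{A\in{\rm End}_\CC(\CC^{24})\mid \Tr(\rho_{24}(g)A)\in\ZZ\ \forall g\in Co_0\}$ equals ${\rm End}(\Lambda)$, i.e. that the $\ZZ$-span $\rho(\ZZ Co_0)$ is all of $\Lambda\otimes\Lambda^*$ rather than a proper full-rank sublattice. Your proposed fallbacks do not close this: ``${\rm End}(\Lambda)$ is the unique maximal order'' is false ($M_{24}(\QQ)$ has many maximal orders, all conjugate but not unique), and ``the $\ZZ$-order generated by $Co_0$ is maximal/saturated'' is precisely the statement to be proved, not an input; simplicity of $M_{24}(\QQ)$ only gives full rank over $\QQ$. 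That this step cannot follow from generic unimodularity/irreducibility arguments is shown by the paper's own counterexample (Discussion, item 2): for the Niemeier lattice $N$ with root system $A_1^{24}$ and ${\rm Aut}(N)\cong\ZZ_2^{24}\rtimes M_{24}$, the analogous equivalence fails — the matrix with two rows of entries $1/2$ has integral trace pairing with every automorphism but is not in ${\rm End}(N)$.

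The paper closes the gap with a concrete Leech-specific argument (Theorem \ref{th:LeechEndo}, implication (a)$\Rightarrow$(b), from which (b)$\Leftrightarrow$(c) and surjectivity follow by self-duality of $\Lambda\otimes\Lambda^*$, as you anticipated): (i) the $\ZZ_2^{12}$ subgroup of ${\rm Aut}(\Lambda)$ of Golay-code sign changes $\epsilon_C$ admits an integral combination $\sum_C n_C\epsilon_C={\rm diag}(8,0,\ldots,0)$, because the vectors $\lambda_C=\tfrac{2}{\sqrt 8}\sum_{i\in C}e_i$ generate the sublattice $\Lambda_0\ni(\sqrt 8,0,\ldots,0)$; (ii) $Co_0$ acts transitively on the norm-$8$ vectors $\Lambda(8)$, so for $\lambda,\mu\in\Lambda(8)$ one computes $\sum_C n_C\Tr(g_\lambda\epsilon_C g_\mu^{-1}\hat\CL)=\mu\cdot\hat\CL(\lambda)\in\ZZ$; (iii) a lemma that $\Lambda(8)$ generates $\Lambda$, whence $\mu\cdot\hat\CL(\lambda)\in\ZZ$ for all $\lambda,\mu\in\Lambda$ and $\hat\CL(\Lambda)\subseteq\Lambda^*=\Lambda$. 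Without supplying this (or an equivalent) arithmetic argument, your proposal establishes the easy inclusion ${\rm End}(\Lambda)\subseteq\rho(\ZZ Co_0)^*$ but not the theorem.
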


See section \ref{s:proof} for a proof of this theorem. A note of caution is necessary about the hypotheses of this theorem. We are not able to prove that if two defects $\CL$ and $\CL'$ satisfy conditions 1, 2, and 3, then their fusion $\CL\CL'$ satisfies such properties as well -- in fact, this might not be true in general, especially when the subVOA preserved by $\CL$, $\CL'$ is non-rational.\footnote{For example, the condition that the $\CL$-twisted sector has discrete $L_0$-spectrum is quite delicate if $\CL$ does not preserve a rational algebra, and in general it seems unlikely that such a property closes under fusion of defects.} If $\CL\CL'$ fails one of these three conditions, then a tensor category that includes both $\CL$ and $\CL'$ would not satisfy the hypotheses of Theorem \ref{th:main}.   In particular, a single tensor category containing \emph{all} topological defects satisfying properties 1, 2, and 3, and only such defects, might not exist. 
This observation affects the discussion in the following sections, where we will describe some examples of topological defects $\CL$ in $V^{f\natural}$ that satisfies  properties 1, 2, and 3, and cannot be written as superposition of invertible defects. We will (somehow implicitly) assume that all such defects are contained in a large tensor category $\cTop$ satisfying the properties of theorem \ref{th:main}. While it seems plausible to us that such a category exists, at least for the examples that we discuss explicitly, we warn the reader that we haven't rigorously proved this fact.

We also stress that $\rho$ in theorem \ref{th:main} is just a homomorphism of rings and not of fusion (or, more generally, based) rings; in particular, there is no distinguished basis of ${\rm End}(\Lambda)$ to which the (isomorphism classes of) simple objects of $\cTop$ are mapped.

The fact that the homomorphism is not injective implies that one cannot reconstruct the Grothendieck ring $Gr(\cTop)$ from ${\rm End}(\Lambda)$. However, one can use this result to put restrictions on the properties of the topological defects in $\cTop$. For example:
 \begin{enumerate}[label=(\alph*)]
    \item The theorem establishes that, for a given choice of the $\CN=1$ supercurrent $\tau(z)$ in $V^{f\natural}$, there is a distinguished lattice $\Lambda\subset {}^\RR V^{f\natural}_{tw}(1/2)$ of Ramond ground states. We have the following:
  \begin{corollary}\label{th:integralqdim}  If $\CL\in \cTop$ of quantum dimension $\langle \CL\rangle$ preserves a field $\lambda\in\Lambda\subset {}^\RR V^{f\natural}_{tw}(1/2)$, i.e. if $\hat\CL$ acts by $\hat\CL|\lambda\rangle=\langle\CL\rangle |\lambda\rangle$, then necessarily $\langle \CL\rangle\in \ZZ_{\ge 1}$. More generally, if $\CL$ preserves any field in $V^{f\natural}_{tw}(1/2)$ (not necessarily in $\Lambda$), then the quantum dimension $\langle \CL\rangle$ must be an algebraic integer of degree at most $24$. 
  \end{corollary}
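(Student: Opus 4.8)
The plan is to derive both statements directly from Theorem~\ref{th:main}, which already contains all the analytic input; what is left is elementary lattice theory. By the theorem, once the $\CN=1$ supercurrent $\tau(z)$ is fixed there is an embedded copy of the Leech lattice $\Lambda\subset {}^\RR V^{f\natural}_{tw}(1/2)$ such that $\rho(\CL):=\hat\CL_{\rvert V^{f\natural}_{tw}(1/2)}$ restricts to a $\ZZ$-linear endomorphism of $\Lambda$, and extends $\CC$-linearly to an endomorphism of $V^{f\natural}_{tw}(1/2)\cong\CC^{24}$. Fixing a $\ZZ$-basis of $\Lambda$, I would represent $\rho(\CL)$ by a $24\times 24$ integer matrix $M$ and write $\chi_M(t):=\det(t\,\one_{24}-M)\in\ZZ[t]$ for its characteristic polynomial, a monic polynomial of degree $24$.

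For the general claim, suppose $\hat\CL$ preserves a nonzero $\lambda\in V^{f\natural}_{tw}(1/2)$, i.e. $\hat\CL|\lambda\rangle=\langle\CL\rangle|\lambda\rangle$. Then $\langle\CL\rangle$ is an eigenvalue of $M$ regarded as a complex matrix, hence a root of $\chi_M$. Therefore $\langle\CL\rangle$ is an algebraic integer, and the degree of its minimal polynomial over $\QQ$, which divides $\chi_M$, is at most $24$.

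For the stronger claim, assume in addition $\lambda\in\Lambda$. I would first reduce to the case that $\lambda$ is primitive: writing $\lambda=d\mu$ with $d\in\ZZ_{\ge1}$ and $\mu\in\Lambda$ primitive, dividing the eigenvalue equation by $d$ gives $\rho(\CL)\mu=\langle\CL\rangle\mu$, and $\rho(\CL)\mu\in\Lambda$ since $\rho(\CL)\in{\rm End}(\Lambda)$. A primitive vector in a free $\ZZ$-module of finite rank extends to a $\ZZ$-basis, so I would pick a basis $\{\mu=e_1,e_2,\dots,e_{24}\}$ of $\Lambda$; relative to it $\rho(\CL)e_1=\langle\CL\rangle e_1$ must have integer coordinates, and reading off the $e_1$-coordinate yields $\langle\CL\rangle\in\ZZ$. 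Finally, the general bound $\langle\CL\rangle\ge1$ for quantum dimensions in unitary compact theories recalled in section~\ref{s:topdefCFT} upgrades this to $\langle\CL\rangle\in\ZZ_{\ge1}$.

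I do not expect a genuine obstacle here: the only point that is not completely automatic is the reduction to a primitive eigenvector together with the standard fact that primitive vectors extend to lattice bases. (Alternatively one can bypass the primitive reduction entirely: $\langle\CL\rangle$ is already an algebraic integer by the previous paragraph, and $\langle\CL\rangle\lambda=\rho(\CL)\lambda\in\Lambda$ forces $\langle\CL\rangle$ to be rational, whence a rational algebraic integer and thus in $\ZZ$.) All the real content is in Theorem~\ref{th:main} itself.
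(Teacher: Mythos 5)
Your proof is correct and follows essentially the same route as the paper: Theorem \ref{th:main} makes $\rho(\CL)$ an integer $24\times 24$ matrix in a basis of $\Lambda$, so any eigenvalue, in particular $\langle\CL\rangle$, is a root of a monic integral polynomial of degree at most $24$. The only addition is that you spell out the step the paper leaves implicit for $\lambda\in\Lambda$ (via the primitive-vector basis extension, or equivalently via rationality of $\langle\CL\rangle$ plus the rational-root fact), together with the unitarity bound $\langle\CL\rangle\ge 1$; both are fine.
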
 The reason is that the linear operator $\hat\CL_{\rvert V^{f\natural}_{tw}(1/2)}$ is represented, with respect to the basis of the lattice $\Lambda$, by a $24\times 24$ matrix with integral entries, so that its minimal polynomial is a monic polynomial with integral coefficients and degree at most $24$. Thus, if the quantum dimension is one of its eigenvalues, it must also be a root of that polynomial. 
    \item Let $\bar\QQ$ denote the algebraic closure of $\QQ$, i.e. the field containing all roots of univariate polynomials with rational coefficients (algebraic numbers). Then the $\bar\QQ$-vector space $\Lambda\otimes \bar\QQ\subset \Lambda\otimes \CC$ is a countable subset of $\Lambda\otimes \CC\cong\CC^{24}$, and in particular it is zero measure  with respect to Lebesgue measure on $\CC^{24}$. We have the following: 
\begin{corollary}\label{identity} Suppose $\hat\CL$ acts trivially  on some $\psi\in V^{f\natural}_{tw}(1/2)$ (i.e. $\hat\CL|\psi\rangle=\langle\CL\rangle|\psi\rangle$) such that $\psi^\perp \cap (\Lambda\otimes \bar\QQ)=0$. Then, $\CL$ is multiple of the identity defect, i.e. $\CL=d\CI$, $d\in \NN$ (and in particular $\langle \CL\rangle=d$ is integral).
\end{corollary}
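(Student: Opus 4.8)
The plan is to reduce the statement to linear algebra inside the ring ${\rm End}(\Lambda)$ supplied by Theorem~\ref{th:main}. Write $d:=\langle\CL\rangle$, a real number with $d\ge 1$ since $V^{f\natural}$ is unitary, compact and has a unique vacuum, and put $A:=\rho(\CL)=\hat\CL_{\rvert V^{f\natural}_{tw}(1/2)}\in{\rm End}(\Lambda)$. By the theorem $A$ is an integral operator on the Leech lattice; its dual $A^{\ast}:=\rho(\CL^{\ast})=\rho(\CL)^{t}$ is the adjoint of $A$ with respect to the (integral, hence $\QQ$-valued) Leech pairing, is similar to $A$, and also lies in ${\rm End}(\Lambda)$; and the hypothesis $\hat\CL|\psi\rangle=d|\psi\rangle$ reads $A\psi=d\psi$ with $d\in\RR$. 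I would then show, in order: (i) $A$ has no eigenvalue other than $d$; (ii) consequently $d\in\ZZ_{\ge 1}$ and $A=d\cdot\mathrm{id}_\Lambda$; (iii) the operator identity $\rho(\CL)=d\cdot\mathrm{id}_\Lambda$ forces $\CL=d\CI$ at the level of objects.

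For (i), suppose $\mu\ne d$ is an eigenvalue of $A$. Then it is also an eigenvalue of $A^{\ast}$, whose $\mu$-eigenspace is defined over $\QQ(\mu)\subseteq\bar\QQ$ because $A^{\ast}$ has integral entries in any $\ZZ$-basis of $\Lambda$; pick $0\ne w\in\ker(A^{\ast}-\mu)$ with coordinates in $\bar\QQ$, so $w\in\Lambda\otimes\bar\QQ$. Using the adjoint relation and $d\in\RR$ one finds $\overline{\mu}\,(w,\psi)=(A^{\ast}w,\psi)=(w,A\psi)=d\,(w,\psi)$, and since $d$ is real, $\mu\ne d$ implies $\overline{\mu}\ne d$, so $(w,\psi)=0$. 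Thus $0\ne w\in\psi^{\perp}\cap(\Lambda\otimes\bar\QQ)$, contradicting the hypothesis. Hence the characteristic polynomial of $A$ is $(x-d)^{24}\in\ZZ[x]$, which forces $d\in\ZZ$ (compare the coefficient of $x^{23}$) and $d\ge 1$. Now $N:=A-d\cdot\mathrm{id}_\Lambda$ is a nilpotent \emph{integral} operator on $\CC^{24}\cong\Lambda\otimes\CC$ with $\psi\in\ker N$, and, being integral, $\ker N$ is defined over $\QQ$. If $N\ne 0$ then $\ker N$ is proper, so its orthogonal complement for the $\QQ$-valued Leech pairing is a nonzero $\QQ$-rational subspace; choosing a nonzero $w$ in it with rational coordinates gives $w\in\Lambda\otimes\QQ\subseteq\Lambda\otimes\bar\QQ$ with $(w,\psi)=0$ (since $w\perp\ker N\ni\psi$), again contradicting the hypothesis. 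Therefore $N=0$, i.e. $\rho(\CL)=d\cdot\mathrm{id}_\Lambda$ with $d\in\ZZ_{\ge 1}$.

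Step (iii) --- upgrading ``$\hat\CL=d\cdot\mathrm{id}$ on $V^{f\natural}_{tw}(1/2)$'' to ``$\CL=d\CI$'' --- is the step I expect to be the main obstacle, precisely because $\rho$ is not injective, so the conclusion cannot come from Theorem~\ref{th:main} alone. The route I would take is that the Ramond ground fields control the rest of the theory by supersymmetry. Since $\hat\CL$ commutes with the $\CN=1$ superVirasoro algebra and equals $d\cdot\mathrm{id}$ on $V^{f\natural}_{tw}(1/2)$, it equals $d\cdot\mathrm{id}$ on the whole $\CN=1$-submodule generated by these ground states; moreover, as for the invertible case \eqref{eq:NSgtwining}--\eqref{eq:Rgtwining}, the $\CL$-twined modular vector $\CZ^{\CL}$ should be determined by the action of $\hat\CL$ on the space of Ramond ground fields, so that $\rho(\CL)=d\cdot\mathrm{id}$ gives $\CZ^{\CL}=d\,\CZ$. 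Comparing Fourier coefficients then yields $\Tr_{V^{f\natural}(h)}(\hat\CL)=d\dim V^{f\natural}(h)$ for all $h$, and together with the unitarity bound $|\lambda|\le\langle\CL\rangle=d$ on the eigenvalues $\lambda$ of the (diagonalizable) operator $\hat\CL$ restricted to primaries \cite{Chang_2019}, this forces $\hat\CL=d\cdot\mathrm{id}$ on all of $V^{f\natural}$, and likewise on $V^{f\natural}_{tw}$. Finally the $S$-transformation relation \eqref{Stransftwin} gives $\CZ_\CL=d\,\CZ$, so the $\CL$-twisted sector $V_\CL$ is $d$ copies of $V^{f\natural}$; hence $\CL=d\CI$ and in particular $\langle\CL\rangle=d\in\NN$. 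The point I expect to require the most care is this propagation ``$\hat\CL$ scalar on the Ramond ground fields $\Rightarrow$ $\hat\CL$ scalar on the whole module''.
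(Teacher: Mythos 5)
Your steps (i) and (ii) are correct and arrive at the same intermediate statement as the paper, namely $\rho(\CL)=\langle\CL\rangle\,\mathrm{id}$ on all of $V^{f\natural}_{tw}(1/2)$ with $\langle\CL\rangle\in\ZZ_{\ge 1}$; the paper gets there in one stroke by noting that the $\langle\CL\rangle$-eigenspace $\Sigma$ of the integral operator $\rho(\CL)$ and its orthogonal complement are cut out by equations with coefficients in $\bar\QQ$, so a nonzero $\Sigma^\perp$ would contain a nonzero vector of $\Lambda\otimes\bar\QQ$ orthogonal to $\psi$; your split into semisimple and nilpotent parts via $A^{t}$ is a fine variant. (Two small repairs: the coefficient of $x^{23}$ only gives $24d\in\ZZ$, and integrality then follows because $d$ is also an algebraic integer; and the pairing manipulations should consistently use either the hermitian product or the bilinear form, which is harmless since $\Lambda\otimes\bar\QQ$ is stable under complex conjugation.)

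The genuine gap is step (iii), exactly where you flagged it. The pivot of your argument is the claim that $\CZ^{\CL}$ is determined by $\hat\CL_{\rvert V^{f\natural}_{tw}(1/2)}$, so that $\rho(\CL)=d\,\mathrm{id}$ would give $\CZ^{\CL}=d\,\CZ$. This is unjustified and cannot be assumed: it is essentially the statement being proved (it holds precisely because $\CL=d\,\CI$, the desired conclusion), and as a general principle it is false, since $\rho$ does not determine twining data --- e.g.\ the continuous families $\CL_{g(\vec\alpha)}+\CL_{g(-\vec\alpha)}$ of simple $\CN=1$--preserving defects inherited from $V^{fE_8}$ (see section \ref{s:discussion}) have constant $\rho$ while their action on $V^{f\natural}$, and hence their twined partition functions, varies continuously with $\vec\alpha$. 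The formulae \eqref{eq:NSgtwining}--\eqref{eq:Rgtwining} use that $g$ is an automorphism of the SVOA, not merely its matrix on the Ramond ground states; likewise, knowing $\hat\CL=d\,\mathrm{id}$ on the $\CN=1$ module generated by the ground states controls only a small piece of $V^{f\natural}_{tw}$ and nothing in $V^{f\natural}$. The missing idea, which is how the paper closes the argument, is that the $24$ Ramond ground fields \emph{generate} the theory: since $\hat\CL$ has eigenvalue $\langle\CL\rangle$ on each of them, $\CL$ preserves (is transparent to) each of them, and transparency propagates through operator products. Concretely, one passes to $F(24)=V^{f\natural}/\langle(-1)^F\rangle$, where these fields become the $24$ free NS fermions generating the whole SVOA while the preserved supercurrent $\tau$ lies in the Ramond sector of $F(24)$; a defect transparent to all of $F(24)$ (NS and R) must be proportional to the identity, and the conclusion transfers back to $V^{f\natural}$. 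Without this (or some other input on $\CL$ beyond $\rho(\CL)$), your step (iii) does not go through; note also that, even granting $\CZ^{\CL}=d\,\CZ$, the appeal to diagonalizability of $\hat\CL$ and the eigenvalue bound $|\lambda|\le\langle\CL\rangle$ on every graded component is an additional assumption for non-invertible defects.
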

Given $\bar\lambda\in \Lambda\otimes\bar\QQ$, let $\bar\lambda^\perp\subset\Lambda\otimes\CC$ be its orthogonal complement. The condition $\psi^\perp \cap (\Lambda\otimes \bar\QQ)=0$ is equivalent to
\be \psi\notin \bigcup_{0\neq \bar\lambda \in \Lambda\otimes\bar\QQ} \bar\lambda^\perp\ .
\ee The set on the right is a countable union of codimension $1$ subspaces in $\Lambda\otimes\CC$, so the condition on $\psi$  
is the generic situation: for a randomly chosen vector $\psi\in \Lambda\otimes \CC$, the probability that it is orthogonal to some $0\neq \bar\lambda \in \Lambda\otimes\bar\QQ$ is zero (with respect to Lebesgue measure on $\CC^{24}$). 
\begin{proof}[Proof of corollary \ref{identity}.]
 The assumption implies that $\langle\CL\rangle\in \bar\QQ$ is an eigenvalue of $\hat\CL$ acting on  $V^{f\natural}_{tw}(1/2)=\Lambda\otimes \CC$, so that the corresponding eigenspace $\Sigma\subset \Lambda\otimes \CC$ has dimension $1\le \dim_\CC\Sigma\le 24$, and contains $\psi$. Because both $\Sigma$ and its orthogonal complement $\Sigma^\perp\subset \Lambda\otimes \CC$ are determined by linear equations with algebraic coefficients, they are spanned by vectors in $\Lambda\otimes\bar\QQ$. But because no vector in $\Lambda\otimes \bar\QQ$ is orthogonal to $\psi\in \Sigma$, it follows that $\Sigma^\perp=0$, so that $\dim\Sigma=24$ and $\hat\CL$ preserves all Ramond fields in $V^{f\natural}_{tw}(1/2)$. 
 A topological defect that preserves all Ramond ground fields and the supercurrent of $V^{f\natural}$ must be proportional to the identity defect. 
 This is most easily seen by considering the induced defect $\CL$ on the theory of $24$ free fermions $F(24)=V^{f\natural}/\langle (-1)^F\rangle$. The Ramond ground fields of $V^{f\natural}$ become spin $1/2$ fields in the NS sector of $F(24)$, and they generate the whole SVOA. On the other hand, the supercurrent $\tau(z)$ of $V^{f\natural}$ is in the Ramond sector of $F(24)$. Thus, a defect that preserves all $24$ free fermions of $F(24)$ and a field in the Ramond sector must preserve all the NS and Ramond sector of $F(24)$, and therefore must be proportional to the identity defect. But then the same conclusion must hold for the defect $\CL$ acting on $V^{f\natural}$, since the two theories are related by exchanging the odd NS sector with the Ramond even sector.
 
\end{proof}
\end{enumerate}

In general, if a topological defect preserves a Ramond ground field $\psi(z)\in V^{f\natural}_{tw}(1/2)$, then  it also preserves a weight $2$ NS field $\normord{\psi\partial\psi}(z)\in V^{f\natural}$, whose modes generate a copy of the $c=1/2$ Virasoro algebra $Vir_{c=1/2}$. If $W\subset V^{f\natural}$ is a proper subSVOA of $V^{f\natural}$, then one expects that the category of topological defects preserving $W$ is non-trivial, i.e. it is not just generated by the identity defect. 
Thus, proposition \ref{identity} strongly suggests that when the preserved field $\psi\in V^{f\natural}_{tw}(1/2)=\Lambda\otimes\CC$ satisfies $\psi^\perp\cap (\Lambda\otimes\bar\QQ)=0$, then the only SVOA containing both $\tau(z)$ and $\normord{\psi\partial\psi}(z)$ is $V^{f\natural}$ itself, i.e. that these two fields generate the whole $V^{f\natural}$.

An alternative explanation would be that all non-trivial topological defects preserving both $\tau(z)$ and $\normord{\psi\partial\psi}(z)$ fail to satisfy properties 1, 2 and 3 above, so that Theorem \ref{th:main} does not apply.

\subsection{The connection with topological defects in K3 non-linear sigma models}\label{s:topdefK3}

Theorem \ref{th:main} and Corollaries \ref{th:integralqdim} and \ref{identity} are formally very similar to the properties of topological defects in K3 NLSMs that were proved in \cite{Angius:2024evd} (see in particular section 3.2, properties (a), (b), (c) and Claims 1, 2, and 3), and that we recall here. Let $\calC$ be a K3 NLSM, and let $\cTop^{K3}_\calC$ be the category of topological defects of $\calC$ preserving the $\CN=(4,4)$ superconformal algebra and the spectral flow. It was shown in \cite{Angius:2024evd} that the restriction of $\hat\CL$ to the $24$-dimensional space $\Hh^{K3}_{\HRR,gr}$ of RR ground states of $\calC$ defines an endomorphism of the lattice $\Gamma^{4,20}\subset \Hh^{K3}_{\HRR,gr}$ of D-brane charges, which is an even unimodular lattice of signature $(4,20)$. This is of course very reminiscent of Theorem \ref{th:main} in the present article, with the Leech lattice $\Lambda$ replaced by $\Gamma^{4,20}$. On the other hand, for all $\CL\in \cTop^{K3}_\calC$, the lattice endomorphism $\hat\CL$ is also required to preserve a $4$-dimensional positive definite subspace $\Pi\subset \Gamma^{4,20}\otimes \RR$, which depends on the particular NLSM K3 $\calC$. Physically, $\Pi$ is identified with the subspace of spectral flow generators in  $\Hh^{K3}_{\HRR,gr}$.\footnote{It should be emphasized that we always use $\Pn\subset {}^\RR V^{f\natural}_{tw}(1/2)$ to denote a four--plane associated with $V^{f\natural}$, whereas $\Pi\subset \Gamma^{4,20}\otimes \RR$ is always a four--plane associated with a K3 NLSM.}

The analogy with the results of \cite{Angius:2024evd} for K3 sigma models---as well as the established  connections between four--plane--preserving symmetry groups and twining functions \cite{Gaberdiel:2011fg,Duncan:2015xoa,Cheng:2016org, Paquette:2017gmb} of the two theories---suggests the following construction in $V^{f\natural}$.
Let $\Pi^\natural\subset {}^\RR V^{f\natural}_{tw}(1/2)$ be a $4$-dimensional real subspace 
and define $\cTop_\Pn$ as the full subcategory of $\cTop$ whose objects $\CL\in \cTop_\Pn$ satisfy $\hat\CL_{\rvert \Pi^\natural}=\langle \CL\rangle {\rm id}_\Pn$:
\be\label{TopPi} \mathrm{Obj}(\cTop_\Pn)=\{\CL\in  \mathrm{Obj}(\cTop)\mid \hat\CL_{\rvert \Pi^\natural}=\langle \CL\rangle {\rm id}_{\Pi^\natural}\}\ .
\ee
This implies that every $\CL\in \cTop_\Pn$ commutes with all Ramond ground fields in $\Pi^\natural$, and therefore with the subalgebra $\widehat{so}(4)_1 = \widehat{su}(2)_1 \oplus \widehat{su}(2)_1$ of $\widehat{so}(24)_1$. As described in section \ref{s:VfnatK3conn}, choose a generator $J_0^3$ of $\widehat{su}(2)_1$ in  a $c=6$, $\CN=4$ superconformal algebra. Then, for each $\CL\in \cTop_\Pn$, let us define the $\CL$-twined genus
\be\label{twiningVf} \phi^\CL(V^{f\natural},\tau,z):=\Tr_{V^{f\natural}_{tw}}(\hat\CL q^{L_0-\frac{c}{24}}y^{J_0^3}(-1)^F)\ ,
\ee which is the obvious generalization of \eqref{eq:D-MCtwining} to the case of (possibly) non-invertible defects (see section \ref{s:Conwaytwining}).

It is tempting to conjecture that the connection between invertible symmetries in $V^{f\natural}$ and K3 models described in section \ref{s:VfnatK3conn} extends to the topological defects in $\cTop_\Pn$. In particular, 
given a K3 model $\calC$, for each $\CL\in \cTop^{K3}_{\calC}$, one can define the $\CL$-twined genus
\be\label{K3twining} \phi^{\CL}(\calC,\tau,z):=\Tr_{\Hh^{K3}_{\HRR}}(\hat\CL q^{L_0-\frac{c}{24}}{\bar q}^{\bar L_0-\frac{\bar c}{24}}y^{J_0^3}(-1)^{F+\bar F})\ee
which is the analogue of \eqref{eq:gTwineK3}. Then, one can expect the $\CL$-twined genera $\phi^\CL(V^{f\natural})$ to match with the $\CL$-twined genera $\phi^\CL(\calC)$ in some suitable K3 model $\calC$. 

More precisely, we propose the following conjecture:
\begin{conjecture}\label{conj:K3relation}
    Let $\Pn\subset \Lambda\otimes\RR\subset V^{f\natural}_{tw}$ be a subspace of dimension $4$ and $\cTop_\Pn$ the full subcategory of $\cTop$ defined in \eqref{TopPi}. Then, there exists a non-linear sigma model $\calC$ on K3 such that:\begin{enumerate}[label=(\roman*)]
        \item there in an equivalence $F:\cTop_\Pn\longrightarrow \cTop^{K3}_\calC$ of tensor categories between $\cTop_\Pn$ and the   category $\cTop^{K3}_\calC$ of topological defects of $\calC$ preserving $\CN=(4,4)$ and spectral flow;
        \item for every $\CL\in \cTop_\Pn$, the $\CL$-twining genus $\phi^\CL(V^{f\natural},\tau,z)$  computed in $V^{f\natural}$ (eq.\eqref{twiningVf}) coincides with the $F(\CL)$-twining genus in the K3 model $\calC$ (eq.\eqref{K3twining}):
 \be        \phi^\CL(V^{f\natural},\tau,z)=\phi^{F(\CL)}(\calC,\tau,z).
        \ee 
\item there is an isometry of Hilbert spaces $\varphi:V^{f\natural}_{tw}(1/2)\stackrel{\cong}{\longrightarrow} \Hh^{K3}_{\HRR,gr}$ such that for each $\CL\in \cTop_\Pn$
\be \varphi\circ \hat\CL_{\rvert V^{f\natural}_{tw}(1/2)} =\widehat{F(\CL)}_{\rvert \Hh^{K3}_{\HRR,gr}}\circ \varphi\ .
\ee
    \end{enumerate}
\end{conjecture}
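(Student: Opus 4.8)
The plan is to treat the three assertions (i), (ii), (iii) separately, in increasing order of difficulty, while keeping in mind that the statement is a conjecture and that a complete proof is beyond present techniques; the realistic target is to reduce each part to a concrete lattice- or index-theoretic core and to isolate the genuine obstacle. First I would pin down the model $\calC$: by \cite{Gaberdiel:2011fg}, for the stabilizer $\{g\in Co_0\mid g_{\rvert\Pn}={\rm id}\}$ there is a K3 NLSM whose group of $\CN=(4,4)$- and spectral-flow-preserving symmetries is exactly this $G_\Pn$; I take this as the candidate $\calC$, noting that it carries a positive-definite four-plane $\Pi\subset\Gamma^{4,20}\otimes\RR$ spanned by the spectral-flow generators inside $\Hh^{K3}_{\HRR,gr}$.

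For (iii) I would construct $\varphi$ by matching lattice data. Theorem \ref{th:main} says every $\CL\in\cTop_\Pn$ acts on $V^{f\natural}_{tw}(1/2)\cong\CC^{24}$ as an element of ${\rm End}(\Lambda)$ that is $\langle\CL\rangle\,{\rm id}$ on $\Pn$ and integral on the $20$-dimensional complement $\Pn^\perp$; by \cite{Angius:2024evd} every defect in $\cTop^{K3}_\calC$ acts on $\Hh^{K3}_{\HRR,gr}$ as an element of ${\rm End}(\Gamma^{4,20})$ fixing $\Pi$ and integral on $\Pi^\perp$. The goal is a single unitary $\varphi:V^{f\natural}_{tw}(1/2)\to\Hh^{K3}_{\HRR,gr}$ sending $\Pn$ to $\Pi$ and conjugating the ${\rm End}(\Lambda)$-action to the ${\rm End}(\Gamma^{4,20})$-action. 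Since $\Lambda$ and $\Gamma^{4,20}$ have different signatures, $\varphi$ is only an isometry of Hilbert spaces, not of lattices; what must be matched is the pair $(\Lambda^{G_\Pn},(\Lambda^{G_\Pn})^\perp)$ of invariant and coinvariant sublattices of $\Lambda$ against the corresponding sublattices of $\Gamma^{4,20}$ cut out by $\Pi$. The natural tool is Nikulin's theory of primitive embeddings of even lattices, exactly as in the consistency argument of appendix \ref{a:conjcons}; establishing that the two genera of $20$-dimensional lattices coincide, and lift to an identification compatible with the $G_\Pn$-action, is the content of (iii).

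Part (ii) does \emph{not} follow from (iii) alone, and isolating why is important. The twined genus \eqref{twiningVf} is a supersymmetric index for the $c=6$, $\CN=4$ algebra with which $\hat\CL$ commutes, so long $\CN=4$ multiplets cancel and $\phi^\CL$ is a sum over the (finitely many types of) short, BPS multiplets weighted by $\Tr$ of $\hat\CL$ on each multiplicity space. The $1/2$-BPS multiplicity space is precisely $V^{f\natural}_{tw}(1/2)$, on which $\hat\CL=\rho(\CL)$ is controlled by (iii); but the $1/4$-BPS multiplicity spaces sit at higher $L_0$ and their $\hat\CL$-action is not determined by $\rho(\CL)$ for a non-invertible defect. (For invertible $g$ this problem is absent, since $g$ acts through $Spin(24)$ and every BPS trace is fixed by the $24$-dimensional eigenvalues $\beta_{g,i}$, which is exactly why the closed formula \eqref{eq:D-MCtwining} exists.) I would therefore try to promote (iii) to a BPS-level intertwiner: show that, for defects in $\cTop_\Pn$, the higher BPS multiplicity spaces are built functorially from the ground states through the preserved $\widehat{su}(2)_1$ and $\CN=4$ action, so that $\varphi$ and the functor $F$ respect the $\hat\CL$-action on all of them; granting this, the two index expansions of $\phi^\CL(V^{f\natural})$ and $\phi^{F(\CL)}(\calC)$ coincide term by term. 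This functoriality of the BPS tower is the first genuinely new input the proof requires.

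The hard part will be (i), and I do not expect any of the data entering (ii) and (iii) to suffice for it. Theorem \ref{th:main} already shows that $\rho$ is \emph{non-injective}, so the action on $V^{f\natural}_{tw}(1/2)$ — and hence the entire twined-genus data — fails to determine the object $\CL$, let alone the fusion coefficients $n_{ij}^k=\dim_\CC\Hom(\CL_i\otimes\CL_j,\CL_k)$ or the associators. A real proof of the tensor equivalence $F$ must match the two categories at the level of simple objects, $\CL$-twisted sectors $V_\CL$, junction spaces and $F$-symbols, which in turn requires control of the module category over the (non-rational) preserved subVOA that is presently unavailable in closed form. My expectation is that (i) can only be reached case by case: identify the simple objects and fusion rules of $\cTop_\Pn$ for explicit four-planes $\Pn$ — for instance the Tambara--Yamagami examples of the companion paper \cite{Angius:2025xxx} — and build $F$ by hand on those generators, verifying compatibility with (ii) and (iii). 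A uniform identification of $\cTop_\Pn$ with $\cTop^{K3}_\calC$ as module categories is the true crux, and the reason the statement is posed as a conjecture.
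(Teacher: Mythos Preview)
Your assessment is on target: the statement is a conjecture, and the paper does not prove it. What the paper offers instead is (a) a partial consistency argument for part (iii) in appendix \ref{a:conjcons}, restricted to the case where all defects in $\cTop_\Pn$ have integral quantum dimension, and (b) explicit verification of (i)--(iii) for a specific fusion subcategory arising from the $S_3$ orbifold of the GTVW model (section \ref{s:K3matching}). Your outline correctly identifies the same architecture: lattice-embedding techniques for (iii), the BPS-tower obstruction for (ii), and case-by-case construction for (i); and your diagnosis that (i) is the genuine crux, with the non-injectivity of $\rho$ obstructing any argument through the ground-state action alone, matches the paper's own framing.

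One substantive point of difference in your sketch of (iii): you propose matching the $G_\Pn$-invariant and coinvariant sublattices $(\Lambda^{G_\Pn},(\Lambda^{G_\Pn})^\perp)$. The paper's appendix \ref{a:conjcons} instead works with $\Lambda_\Pn:=\Lambda\cap\Pn^\perp$, the sublattice orthogonal to the four-plane, and embeds $\Lambda_\Pn(-1)$ primitively into an even unimodular $\Gamma$ of signature $(4,20)$ via an auxiliary $E_8\oplus_\perp\Lambda(-1)$ construction and an isotropic quotient. The distinction matters for non-invertible $\CL$: such defects need not act through $G_\Pn$, so their non-trivial action is confined to $\Lambda_\Pn$ rather than to the $G_\Pn$-coinvariant lattice, and it is $\Lambda_\Pn$ (of rank $\le 20$, not necessarily $20$) that must be matched. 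Relatedly, your candidate $\calC$ chosen via \cite{Gaberdiel:2011fg} is pinned down only by its invertible symmetry group; the paper instead specifies $\calC$ through the D-brane lattice $\Gamma$ and a choice of positive four-plane $\Pi_{K3}\subset\Gamma^{\Pi_{K3}}\otimes\RR$, which is what is needed to check compatibility with the full ${\rm End}(\Lambda)$-action coming from Theorem \ref{th:main}.
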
\noindent
This conjecture proposes an answer to Questions \ref{q1},  \ref{q2}, and \ref{q3} from section \ref{s:intro}.
We emphasize that even if, on both sides, we restrict to the subcategories of topological defects generated by invertible defects $\CL_g$, $g\in G_\Pn$, Conjecture \ref{conj:K3relation} is slightly stronger than  previous conjectures in \cite{Duncan:2015xoa,Cheng:2016org, Paquette:2017gmb}, discussed in section \ref{s:VfnatK3conn}. Indeed, the claim that the `group-like' categories of invertible defects on the two sides are equivalent, does not only imply that the underlying groups of symmetries are isomorphic, but also that their 't Hooft anomalies are represented by the same cohomology class $[\alpha]\in H^3(G_\Pn,U(1))$ in both CFTs. The fact that the twining genera match exactly (see \eqref{K3Vfnatmathc}) provides strong evidence that the anomalies are indeed the same, but as far as we know it is not sufficient to prove the anomaly matching for all possible subgroups $G_\Pn\subset Co_0$.
In section \ref{s:K3matching}, we will provide some more evidence in favor of Conjecture \ref{conj:K3relation}, by showing examples of categories non-invertible defects in $V^{f\natural}$ and K3 NLSMs constructed from torus orbifolds where the matching is verified.

From a more conceptual point of view, it would be desirable to show  at least that the conjecture is compatible with what is currently known about topological defects on the two sides of the correspondence. In particular, one should prove that, for each choice of $\Pi^\natural\subset {}^{\RR} V^{f\natural}_{tw}(1/2)$, there is a choice of a vector space isomorphism $\varphi:{}^\RR V^{f\natural}_{tw}(1/2)\stackrel{\cong}{\longrightarrow} \Gamma^{4,20}\otimes \RR$ such that
\begin{enumerate}
    \item The space $\Pi:=\varphi(\Pi^\natural)$ is a positive definite $4$-dimensional subspace of $\Gamma^{4,20}\otimes \RR$ such that $\Pi$ is not orthogonal to any root $\delta\in \Gamma^{4,20}$, $\delta^2=-2$. This ensures that there is a NLSM $\calC_\Pi$ on K3 such that $\Gamma^{4,20}\otimes \CC\cong \Hh^{K3}_{\HRR,gr}$, $\Gamma^{4,20}$ is the lattice of D-brane charges, and $\Pi$ is the subspace of spectral flow generators.
    \item $\varphi:V^{f\natural}_{tw}\stackrel{\cong}{\longrightarrow} \Hh^{K3}_{\HRR,gr}$ is an isometry between Hilbert spaces, where we identified $\Gamma^{4,20}\otimes \CC\cong \Hh^{K3}_{\HRR,gr}$ and extended $\varphi$ by linearity.
    \item For every $\CL\in \cTop_{\Pi^{\natural}}$, the map $\varphi\circ \rho(\CL)\circ \varphi^{-1}$ is an endomorphism of the lattice $\Gamma^{4,20}$.
\end{enumerate}
These properties ensure that all $\varphi\circ \rho(\CL)\circ \varphi^{-1}$ satisfy the expected consistency conditions for the action of a topological defects in $\cTop^{K3}_{\calC_\Pi}$ on the space $\Hh^{K3}_{\HRR,gr}$.

In appendix \ref{a:conjcons}, we will show that such a map $\varphi$ can always be defined when the category $\cTop_{\Pi^\natural}$ contains only defects with integral quantum dimension. As shown in Corollary \ref{th:integralqdim}, this happens for example when $\Pi^\natural$ contains a non-zero vector in the lattice $\Lambda$.
It would be desirable to extend this argument to the case where $\cTop_\Pn$ contains defects with non-integral quantum dimension. 

As mentioned in the introduction, in \cite{Cheng:2016org} it was argued that the twining genera of certain invertible symmetries of K3 models are not reproduced by any (invertible) twining genus in $V^{f\natural}$. In fact, the arguments of \cite{Cheng:2016org} really show that the 't Hooft anomaly of the group of symmetries of certain K3 models is not the same as the anomaly for the isomorphic subgroups of $Co_0$ in $V^{f\natural}$. This means that the corresponding fusion categories of invertible defects, while having isomorphic fusion ring, are not equivalent.  We stress that this result does not contradict conjecture \ref{conj:K3relation}, that states that every suitable fusion category of defects in $V^{f\natural}$ admits a corresponding category in some K3 model. It is rather an obstruction to the inverse conjecture: in a sense, there are `too many' fusion categories $\cTop_\calC^{K3}$ of defects in K3 models, and not all of them can be reproduced by some $\cTop_{\Pi^\natural}$ in $V^{f\natural}$. This argument provides a negative answer to question $1$ in the introduction, at least in its strongest form. The weaker version of this question, concerning just the fusion rings structures, is still open.

\subsection{Proof of theorem \ref{th:main}}\label{s:proof}

In order to prove Theorem \ref{th:main}, we will use an explicit description of the Golay code $\CG_{24}$ and of the Leech lattice $\Lambda$, suitable for our purposes.  We refer to \cite{ConwaySloane} for proofs and more details.

The (extended binary) Golay code $\CG_{24}$ is a $12$-dimensional subspace of the vector space $\FF_2^{24}$ over the finite field $\FF_2=\{0,1\}$ with two elements. Let us denote every element of $\FF_2^{24}$ by a sequence $u\equiv (u_1,\ldots,u_{24})$ with $u_i=0$ or $1$, and let the weight of $u$ be given by 
$$ wt(u):=\sum_{i=1}^{24}u_i\in \NN$$ i.e. by the number of non-zero entries in $(u_1,\ldots, u_{24})$. Then, the  $2^{12}$ elements (words) of $\CG_{24}$ have all weight $wt(u)\in 4\ZZ$ (i.e. $\CG_{24}$ is doubly even), and the shortest non-zero elements have weight $8$. Furthermore, $\CG_{24}$ is self-dual,  $\CG_{24}=\CG_{24}^\perp$, i.e. it coincides with the dual space
$$ \CG_{24}^\perp := \{ (u_1,\ldots,u_{24})\in \FF_2^{24}\mid \forall u'\in \CG_{24},\ \sum_{i=1}^{24} u_iu_i'\in 2\ZZ\}. 
$$ It can be proved that there is a unique subspace of $\FF_2^{24}$ satisfying these properties, up to permutations in $S_{24}$. We choose $\CG_{24}\subset \FF_2^{24}$ to be the subspace generated by
\begin{align*}
    (1 , 1 , 1 , 1 , 1 , 1 , 1 , 1 , 0 , 0 , 0 , 0 , 0 , 0 , 0 , 0 , 0 , 0 , 0 , 0 , 0 , 0 , 0 , 0)\ ,
   \\
( 1 , 1 , 1 , 1 , 0 , 0 , 0 , 0 , 1 , 1 , 1 , 1 , 0 , 0 , 0 , 0 , 0 , 0 , 0 , 0 , 0 , 0 , 0 , 0)\ ,
   \\
 (1 , 1 , 0 , 0 , 1 , 1 , 0 , 0 , 1 , 1 , 0 , 0 , 1 , 1 , 0 , 0 , 0 , 0 , 0 , 0 , 0 , 0 , 0 , 0)\ ,
   \\
 (1 , 0 , 1 , 0 , 1 , 0 , 1 , 0 , 1 , 0 , 1 , 0 , 1 , 0 , 1 , 0 , 0 , 0 , 0 , 0 , 0 , 0 , 0 , 0)\ ,
   \\
 (1 , 0 , 0 , 1 , 1 , 0 , 0 , 1 , 1 , 0 , 0 , 1 , 1 , 0 , 0 , 1 , 0 , 0 , 0 , 0 , 0 , 0 , 0 , 0)\ ,
   \\
 (1 , 0 , 1 , 0 , 1 , 0 , 0 , 1 , 1 , 1 , 0 , 0 , 0 , 0 , 0 , 0 , 1 , 1 , 0 , 0 , 0 , 0 , 0 , 0)\ ,
   \\
 (1 , 0 , 0 , 1 , 1 , 1 , 0 , 0 , 1 , 0 , 1 , 0 , 0 , 0 , 0 , 0 , 1 , 0 , 1 , 0 , 0 , 0 , 0 , 0)\ ,
   \\
 (1 , 1 , 0 , 0 , 1 , 0 , 1 , 0 , 1 , 0 , 0 , 1 , 0 , 0 , 0 , 0 , 1 , 0 , 0 , 1 , 0 , 0 , 0 , 0)\ ,
   \\
 (0 , 1 , 1 , 1 , 1 , 0 , 0 , 0 , 1 , 0 , 0 , 0 , 1 , 0 , 0 , 0 , 1 , 0 , 0 , 0 , 1 , 0 , 0 , 0)\ ,
   \\
 (0 , 0 , 0 , 0 , 0 , 0 , 0 , 0 , 1 , 1 , 0 , 0 , 1 , 1 , 0 , 0 , 1 , 1 , 0 , 0 , 1 , 1 , 0 , 0)\ ,
   \\
 (0 , 0 , 0 , 0 , 0 , 0 , 0 , 0 , 1 , 0 , 1 , 0 , 1 , 0 , 1 , 0 , 1 , 0 , 1 , 0 , 1 , 0 , 1 , 0)\ ,
   \\
 (0 , 0 , 0 , 0 , 0 , 0 , 0 , 0 , 0 , 0 , 0 , 0 , 0 , 0 , 0 , 0 , 1 , 1 , 1 , 1 , 1 , 1 , 1 , 1)\ .
\end{align*} For any (possibly empty) subset $S\subseteq \{1,\ldots,24\}$, let $u_S:=(u_1,\ldots,u_{24})\in \FF_2^{24}$ be the element with non-zero entries in positions $i\in S$, i.e
\be u_i=\begin{cases}
    1 & \text{if }i\in S\\
    0 & \text{if }i\notin S\ ,
\end{cases}
\ee so that $wt(u_S)=|S|$. A $\CG_{24}$-set is a subset $S\subseteq \{1,\ldots, 24\}$ such that the word $u_S$ is in $\CG_{24}\subset \FF_2^{24}$.  Besides the null vector $(0,\ldots,0)\in \CG_{24}$ of weight $wt(0)=0$, $\CG_{24}$ contains $759$ vectors of weight $8$, $1771$ of weight $12$, $759$ of weight $16$, and the vector $(1,\ldots,1)$ of weight $24$. A $\CG_{24}$-set $S$ is called an (special) octad if $|S|=8$, a (umbral) dodecad if $|S|=12$, and an octad complement if $|S|=16$.

Following \cite{ConwaySloane} (Ch.10, Section 3.2, Th. 25), the Leech lattice can be defined as the subset $\Lambda\subset \RR^{24}$ containing all vectors of the form
$$ \frac{1}{\sqrt{8}}(x_1,\ldots, x_{24})^t\in \RR^{24}\ ,
$$ where $x_1,\ldots, x_{24}$ satisfy:
\begin{enumerate}
    \item $x_1,\ldots, x_{24}$ are all integral, and they are either all odd or all even (i.e. $x_i\equiv x_j\mod 2$);
     \item $\sum_i x_i\in 8\ZZ$ if all $x_i$ are even, and $\sum_i x_i\in 4+8\ZZ$ if all $x_i$ are odd;
    \item for each $m\in \ZZ/4\ZZ$, the set of $i\in\{1,\ldots, 24\}$ for which $x_i\equiv m\mod 4$ is a $\CG_{24}$-set, where $\CG_{24}$ is the (extended) binary Golay code.
\end{enumerate}
For example, the following shapes of $(x_1,\ldots,x_{24})$ are allowed (notation: $(a^{m_a};b^{m_b};...)$ means that the entry $a$ is repeated $m_a$ times, etc):
\begin{itemize}
    \item $(\pm 4^2;0^{22})$, with non zero entries in any position and any choice of signs;
    \item $(\pm 2^{8};0^{16})$ with non-zero entries on a $\CG_{24}$-set (a Golay octad), and even number of minus signs;
    \item $(-3^1;1^{23})$ with $-3$ in any position.
\end{itemize}
The vectors listed above all have squared length $4$. We are particularly interested in the vectors of squared length $8$, that can be of the form
\begin{itemize}
    \item $(\pm 8^1; 0^{23})$, with the $\pm 8$ in any position;
    \item $(\pm 4^4;0^{20})$, with non zero entries in any position and any choice of signs;
    \item $(\pm 2^{16};0^{8})$ with non-zero entries on a $\CG_{24}$-set (the complement of a Golay octad), and even number of minus signs;
    \item $(-3^5;1^{19})$ with $-3$ in any position.
\end{itemize}
The group ${\rm Aut}(\Lambda)$ of automorphisms of $\Lambda$ is a subgroup of $SO(\Lambda\otimes\RR)=SO(24)$ isomorphic to the Conway group ${\rm Aut}(\Lambda)\cong Co_0$.

Let ${}^\RR V_{tw}^{f\natural}(1/2)\subset V_{tw}^{f\natural}(1/2)$ be the real $24$-dimensional space of Ramond ground states that are CPT self-conjugate, i.e. fixed points of the antilinear involution $\theta$. The hermitian product on $V^{f\natural}_{tw}$ restricts to a positive definite bilinear product on ${}^\RR V_{tw}^{f\natural}(1/2)$ which is given by the $2$-point functions. 

Let $\Aut_\tau(V^{f\natural})\cong Co_0$ be the group  of automorphisms of $V^{f\natural}$ preserving a fixed $\CN=1$ supercurrent $\tau(z)$. This group commutes with the antilinear CPT involution $\theta$ and preserves the bilinear form, so it acts (faithfully) on ${}^\RR V^{f\natural}_{tw}$ by orthogonal transformations $\Aut_\tau(V^{f\natural})\subset SO({}^\RR V^{f\natural}_{tw}(1/2))$.

Because the real $24$-dimensional representation of $Co_0$ is unique up to isomorphism, there exists an isometric embedding $\Lambda\hookrightarrow {}^\RR V_{tw}^{f\natural}(1/2)\subset V_{tw}^{f\natural}(1/2)$ such that the action of ${\rm Aut}(\Lambda)\subset SO(\Lambda\otimes\RR)$ coincides with $\Aut_\tau(V^{f\natural})\subset SO({}^\RR V^{f\natural}_{tw})$. Given a choice of this embedding, in the following we will identify ${}^\RR V_{tw}^{f\natural}(1/2)\cong \Lambda\otimes\RR$ and $V_{tw}^{f\natural}(1/2)\cong \Lambda\otimes\CC$.

Each topological defect $\CL\in \cTop$ is associated with a $\CC$-linear map $\rho(\CL):=\hat\CL_{\rvert V^{f\natural}_{tw}(1/2)}$ in ${\rm End}_\CC(V^{f\natural}_{tw}(1/2))$. In particular, for invertible defects $g\equiv \CL_g$, the map $g\mapsto \rho(g)$ provides the representation of the Conway group $Co_0$ on the $24$-dimensional space $V^{f\natural}_{tw}(1/2)\cong \Lambda\otimes\CC\cong\CC^{24}$.

In this section, we will prove the following:
\begin{theorem}\label{th:LeechEndo}
    Let $\hat\CL:\Lambda\otimes\CC \to \Lambda\otimes\CC$ be a $\CC$-linear map on $\Lambda\otimes\CC\cong\CC^{24}$. Then, the following are equivalent:
    \begin{enumerate}
        \item[(a)] $ \Tr_{\CC^{24}}(\hat\CL \rho(g))\in \ZZ\ \qquad \forall g\in {\rm Aut}(\Lambda)\cong Co_0\ ,
    $
    \item[(b)] $\hat\CL(\Lambda)\subseteq \Lambda$,
    \item[(c)] $\hat\CL=\sum_{g\in {\rm Aut}(\Lambda)} n(g)\rho(g)$ for some $n(g)\in \ZZ$.
    \end{enumerate} 
\end{theorem}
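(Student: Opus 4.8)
The plan is to prove the cycle of implications $(c)\Rightarrow(a)\Rightarrow(b)\Rightarrow(c)$, which is the natural order since $(c)\Rightarrow(a)$ is essentially a triviality, $(a)\Rightarrow(b)$ is the representation-theoretic heart of the argument, and $(b)\Rightarrow(c)$ is a structural fact about the lattice $\ZZ[\mathrm{Aut}(\Lambda)]$ inside $\mathrm{End}(\Lambda)$. For $(c)\Rightarrow(a)$: if $\hat\CL=\sum_g n(g)\rho(g)$ with $n(g)\in\ZZ$, then $\Tr(\hat\CL\rho(h))=\sum_g n(g)\Tr(\rho(g)\rho(h))=\sum_g n(g)\Tr(\rho(gh))=\sum_g n(g)\chi_{24}(gh)$, where $\chi_{24}$ is the character of the $24$-dimensional representation of $Co_0$, which is integer-valued on every group element (the $\mathbf{24}$ is a rational representation). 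Hence the trace is a $\ZZ$-linear combination of integers, so it lies in $\ZZ$.

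For $(a)\Rightarrow(b)$: the key point is that the map $\hat\CL\mapsto\big(\Tr(\hat\CL\rho(g))\big)_{g\in Co_0}$ identifies $\mathrm{End}_\CC(\CC^{24})$ (or rather the relevant subspace) with a space of class functions, and the integrality condition $(a)$ is dual to an integral structure on the endomorphism algebra. Concretely, I would decompose $\mathrm{End}_\CC(\CC^{24})\cong \mathbf{24}\otimes\mathbf{24}^*$ into $Co_0$-irreducibles; the given formula in the paper, $\mathbf{4096}=\mathbf{1}+\mathbf{276}+\mathbf{1771}+\mathbf{24}+\mathbf{2024}$, together with the fact that $\mathbf{24}$ is self-dual, tells us $\mathbf{24}\otimes\mathbf{24}$ contains exactly one copy of each of a specific list of irreducibles (all with rational, indeed integral, characters). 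Using the standard non-degenerate pairing $\langle A,B\rangle=\Tr(AB^t)$ on $\mathrm{End}(\CC^{24})$, condition $(a)$ says that $\hat\CL$ pairs integrally with every $\rho(g)^t=\rho(g^{-1})$, hence with every element of $\ZZ[\rho(Co_0)]$. So $\hat\CL$ lies in the dual lattice of $\ZZ[\rho(Co_0)]$ with respect to this pairing. The claim $(b)$ — that $\hat\CL$ preserves $\Lambda$ — must then be extracted by showing that this dual lattice is precisely $\mathrm{End}(\Lambda)$. Here I would invoke the known fact (see \cite{ConwaySloane}) that the $Co_0$-invariant integral structures on $\CC^{24}$, together with the Golay-code/Leech description given in the proof setup, pin down $\mathrm{End}(\Lambda)$ as a self-dual-up-to-scaling lattice under the trace form; alternatively, one argues directly that any $\ZZ$-linear combination of Conway group elements preserves $\Lambda$ (giving one inclusion), and that an endomorphism pairing integrally with all of these must send the minimal vectors of $\Lambda$ (the norm-$4$ vectors, listed explicitly in the excerpt) into $\Lambda$, using that these vectors and their images under $Co_0$ span $\Lambda$ over $\ZZ$.

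For $(b)\Rightarrow(c)$: this is the statement that $\mathrm{End}(\Lambda)=\ZZ[\mathrm{Aut}(\Lambda)]$, i.e.\ every $\ZZ$-linear endomorphism of the Leech lattice is an integer combination of Conway group elements. I would prove this by a dimension/index count: both sides are full-rank sublattices of the $576$-dimensional space $\mathrm{End}_\QQ(\Lambda\otimes\QQ)$; after tensoring with $\QQ$, $\QQ[Co_0]$ surjects onto $\mathrm{End}_\QQ(\Lambda\otimes\QQ)\cong M_{24}(\QQ)$ because $\mathbf{24}$ is an absolutely irreducible rational representation (so by the double centralizer / Burnside theorem the group algebra image is all of $M_{24}(\QQ)$). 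Thus $\ZZ[\mathrm{Aut}(\Lambda)]\otimes\QQ=\mathrm{End}(\Lambda)\otimes\QQ$, giving $\ZZ[\mathrm{Aut}(\Lambda)]\subseteq\mathrm{End}(\Lambda)$ with finite index, and I then need the index to be $1$. This last step — verifying that the integral span of Conway elements is \emph{all} of $\mathrm{End}(\Lambda)$, not a proper finite-index sublattice — is the main obstacle, and I expect it is handled either by an explicit construction (exhibiting enough group elements whose $\ZZ$-span contains a set of matrix units for $M_{24}(\ZZ)$ relative to a basis of $\Lambda$, e.g.\ using reflections in norm-$4$ vectors and coordinate permutations from the Golay code automorphisms, following \cite{ConwaySloane}) or by a known result that the Conway group generates the full endomorphism ring of the Leech lattice. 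If a residual finite index survived, one would have to rule it out prime by prime using the $C_2$-cofiniteness-style integrality already established in $(a)$, but I anticipate the explicit generators suffice.
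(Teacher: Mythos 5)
Your skeleton (two trivial inclusions plus one hard core) matches the structure of the problem, but the two steps that carry all the content are not actually proven. For (a)$\Rightarrow$(b) you assert that the dual lattice of $\ZZ[\rho(Co_0)]$ under the trace pairing is exactly ${\rm End}(\Lambda)$ and defer the verification to ``known facts'' or an unspecified argument about norm-$4$ vectors; but this equality \emph{is} the theorem, and the paper has to earn it with a concrete computation: using the $\ZZ_2^{12}$ subgroup of sign changes $\epsilon_C$ on Golay sets one exhibits an integral combination $\sum_C n_C\,\epsilon_C={\rm diag}(8,0,\ldots,0)$; then, since $Co_0$ acts transitively on the norm-$8$ vectors $\Lambda(8)$, one chooses $g_\lambda,g_\mu\in Co_0$ whose first column (resp.\ first row) is $\lambda/\sqrt{8}$ (resp.\ $\mu/\sqrt{8}$), and applying hypothesis (a) to the products $g_\lambda\,\epsilon_C\, g_\mu^{-1}$ yields $\mu\cdot\hat\CL(\lambda)\in\ZZ$ for all $\lambda,\mu\in\Lambda(8)$; a separate lemma that $\Lambda(8)$ generates $\Lambda$, together with $\Lambda^*=\Lambda$, then gives $\hat\CL(\Lambda)\subseteq\Lambda$. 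Nothing in your sketch supplies a mechanism of this kind for converting the finitely many trace conditions into integrality of the individual matrix coefficients $\mu\cdot\hat\CL(\lambda)$.

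For (b)$\Rightarrow$(c) you correctly reduce (via absolute irreducibility of the $\mathbf{24}$) to showing that $\ZZ[{\rm Aut}(\Lambda)]$ has index one in ${\rm End}(\Lambda)$, but you explicitly leave this open, and your proposed explicit generators fail: reflections in norm-$4$ vectors are not automorphisms of $\Lambda$ (the Leech lattice is rootless and ${\rm Aut}(\Lambda)\subset SO(24)$ contains no reflections), so they cannot be used to build matrix units. You also miss the observation that collapses the whole proof to a single implication: $\Lambda\otimes\Lambda^*\cong{\rm End}(\Lambda)$ is itself an even \emph{self-dual} lattice under $(A,B)=\Tr(A^tB)$, and one has the chain $\rho(\ZZ{\rm Aut}(\Lambda))\subseteq\Lambda\otimes\Lambda^*\subseteq\rho(\ZZ{\rm Aut}(\Lambda))^*$; consequently (a)$\Rightarrow$(b), i.e.\ $\rho(\ZZ{\rm Aut}(\Lambda))^*=\Lambda\otimes\Lambda^*$, is equivalent by dualizing to (b)$\Rightarrow$(c), so proving the former by the computation above gives the latter for free, and no separate index argument is needed. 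As it stands, your proposal establishes only the easy implications.
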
 
Before proving this theorem, let us show how it implies Theorem \ref{th:main}. We know that the map $\CL\mapsto \rho(\CL)$ provides a ring homomorphism from $Gr(\cTop)$ to ${\rm End}_\CC(V^{f\natural}_{tw})$ such that $\rho(\CL^*)=\theta \rho(\CL)^\dag\theta$. On the other hand, all defects $\CL\in \cTop$ must satisfy eq.\eqref{Cardy}, which is essentially condition (a) of Theorem \ref{th:LeechEndo} (the operator $(-1)^F$ is just the identity on  $V^{f\natural}_{tw}(1/2)$). Thus, Theorem \ref{th:LeechEndo} implies that the image $\rho(\CL)$ also satisfies condition (b), i.e. $\rho(\CL)\in {\rm End}_\ZZ(\Lambda)\subset {\rm End}_\RR(\Lambda\otimes \RR)$. Furthermore, because on $\Lambda\otimes \RR\cong {}^\RR V^{f\natural}_{tw}(1/2)$ the involution $\theta$ is the identity and the adjoint reduces to the transpose on ${\rm End}_\RR(\Lambda\otimes \RR)\subset {\rm End}_\CC(\Lambda\otimes \CC)$, we have $\rho(\CL^*)=\rho(\CL)^t$. The equivalence of (b) and (c) in Theorem \ref{th:LeechEndo} shows that the ring ${\rm End}_\ZZ(\Lambda)$ is spanned by integral linear combinations of the form $\sum_{g\in {\rm Aut}(\Lambda)} n(g)\rho(g)$, so that the homomorphism $\rho$ is already surjective if we restrict to the subcategory of $\cTop$ generated by invertible defects. Finally, the homomorphism is obviously not injective: for example, if $\mathfrak{k}$ is the non-trivial central element of $\Aut_\tau(V^{f\natural})\cong Co_0$, then $\CI+\CL_\mathfrak{k}\in \cTop$ is mapped to $\rho(\CI)+\rho(\CL_\mathfrak{k})=0$.

It is useful to re-formulate Theorem \ref{th:LeechEndo} using the vector space isomorphisms
$$ {\rm End}(\RR^{24})\cong \RR^{24}\otimes (\RR^{24})^*\ ,\qquad  {\rm End}(\CC^{24}) \cong\CC^{24}\otimes (\CC^{24})^*
$$ as well as the $\ZZ$-module isomorphism
\be {\rm End}_\ZZ(\Lambda)\cong \Lambda\otimes \Lambda^*\subset \RR^{24}\otimes(\RR^{24})^*\ ,
\ee
where the last inclusion follows from $\Lambda\subset \RR^{24}\subset \CC^{24}$. There are positive definite hermitian form $\langle\cdot,\cdot\rangle$ on $\CC^{24}\otimes (\CC^{24})^*$ and bilinear form on $(\cdot,\cdot)$ on $\RR^{24}\otimes (\RR^{24})^*$ induced by the ones on $\Lambda\otimes \RR$ and $\Lambda\otimes\CC$,  and given, respectively, by
\be \langle A,B\rangle =\Tr_{\CC^{24}}(A^\dag B)\ ,\qquad (A,B)=\Tr_{\RR^{24}}(A^t B)\ .
\ee In particular, for any $\lambda,\mu\in \Lambda=\Lambda^*$, one has
\be (\lambda\otimes\mu,\lambda\otimes\mu)=(\lambda\cdot\lambda)(\mu\cdot\mu)\in 2\ZZ_{\ge 0}\ ,
\ee which shows that $\Lambda\otimes \Lambda^*$ is itself a positive definite $24^2$-dimensional even self-dual lattice.\footnote{Self-duality follows because if $\lambda_1\ldots,\lambda_{24}$ and $\lambda_1^*,\ldots,\lambda^*_{24}$ are dual bases of $\Lambda$, i.e. $\lambda_i\cdot\lambda_j^*=\delta_{ij}$, then $\{\lambda_i\otimes\lambda_j^*\}$ and $\{\lambda_i^*\otimes \lambda_j\}$ are dual bases of $\Lambda\otimes\Lambda^*$.} Each $g\in\Aut(\Lambda)$ defines a vector $\rho(g)\in {\rm End}_\ZZ(\Lambda)\cong \Lambda\otimes\Lambda^*$ and one can consider the sublattice generated by integral linear combinations all such vectors
\be \rho(\ZZ\Aut(\Lambda)):=\{\sum_{g\in {\rm Aut}(\Lambda)} n(g)\rho(g),\ n(g)\in \ZZ\}\subseteq \Lambda\otimes\Lambda^*\ ,
\ee
which is the image by $\rho$ of the group ring $\ZZ\Aut(\Lambda)$. Because $\Lambda\otimes\Lambda^*$ is self-dual, we have a chain of inclusions
\be\label{oneinside} \rho(\ZZ\Aut(\Lambda))\subseteq \Lambda\otimes\Lambda^* \subseteq \rho(\ZZ\Aut(\Lambda))^*\ ,
\ee where 
\be \rho(\ZZ\Aut(\Lambda))^*:=\{A\in \CC^{24}\otimes(\CC^{24})^*\mid  \langle B,A\rangle \in \ZZ, \forall B\in \rho(\ZZ\Aut(\Lambda)) \}\ee is the dual of the lattice $\rho(\ZZ\Aut(\Lambda))$. Thus, it is clear that the conditions (a), (b), (c) of Theorem \ref{th:LeechEndo} can be equivalently reformulated as, respectively, conditions (a$'$), (b$'$), and (c$'$) in the following theorem:

\medskip
\noindent {\bf Theorem  \ref{th:LeechEndo}$'$.}
   \emph{Let $\hat\CL\in \CC^{24}\otimes (\CC^{24})^*$. The following are equivalent: 
   \begin{enumerate}
        \item[(a$'$)] $\hat\CL\in \rho(\ZZ\Aut(\Lambda))^*$;
    \item[(b$'$)] $\hat\CL\in \Lambda\otimes \Lambda^*$,
    \item[(c$'$)] $\hat\CL\in \rho(\ZZ\Aut(\Lambda))$.
    \end{enumerate} }
 In particular, the obvious implications (b)$\Rightarrow$(a) and (c)$\Rightarrow$(b) just correspond to the inclusions in eq.\ref{oneinside}. On the other hand, proving (a)$\Rightarrow$(b) amounts to showing the equality
\be \rho(\ZZ\Aut(\Lambda))^*=\Lambda\otimes \Lambda^*\ .
\ee  But if this is true, then because of self-duality of $\Lambda\otimes \Lambda^*$, one also obtains
\be \rho(\ZZ\Aut(\Lambda))=\Lambda\otimes \Lambda^*\ ,
\ee which amounts to the implication (b)$\Rightarrow$(c). Therefore, it is sufficient to prove the implication (a)$\Rightarrow$(b) of Theorem \ref{th:LeechEndo}, and all the other results follow immediately.

Let 
\be \Lambda(2n):=\{\lambda\in\Lambda\mid \lambda^2=2n\}\ee denote the set of Leech lattice vectors of length-squared $2n$. 
The following lemma is easy to prove.
\begin{lemma}\label{Lambda8gener} The lattice $\Lambda$ is generated by the vectors in $\Lambda(8)$.
\end{lemma}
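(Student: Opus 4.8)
The plan is to exhibit, for the explicit coordinate description of $\Lambda$ given above, a spanning set of $\Lambda$ all of whose members have squared length $8$, and then argue that the $\ZZ$-span of $\Lambda(8)$ already contains a generating set of the full lattice. First I would recall that $\Lambda$ is generated (as a $\ZZ$-module) by its minimal vectors, i.e. by $\Lambda(4)$ — this is standard for the Leech lattice (see \cite{ConwaySloane}), since $\Lambda$ has covering radius small enough that the minimal vectors span. So it suffices to show that every vector of $\Lambda(4)$ lies in the $\ZZ$-span of $\Lambda(8)$. The cleanest route is to show that $\Lambda(8)$ already contains a $\ZZ$-basis of $\Lambda$; equivalently, that the sublattice $\Lambda' := \langle \Lambda(8)\rangle_\ZZ$ equals $\Lambda$, which (since $\Lambda'\subseteq\Lambda$ and $\Lambda$ is unimodular) will follow once we check $\Lambda'$ has index $1$, or directly that each of the $196560$ minimal vectors can be written as an integer combination of norm-$8$ vectors.

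The key computational step is to handle the three shapes of norm-$4$ vectors — $\frac{1}{\sqrt8}(\pm4^2;0^{22})$, $\frac{1}{\sqrt8}(\pm2^8;0^{16})$ on an octad, and $\frac{1}{\sqrt8}(-3;1^{23})$ — by writing each as a difference or sum of norm-$8$ vectors that also lie in $\Lambda$. For instance, a vector of shape $(4,4,0,\dots,0)$ (norm $4$) is the difference of $(4,4,4,4,0,\dots)$-type and $(0,0,4,4,0,\dots)$-type vectors; but one must check these are genuinely in $\Lambda$, which forces the supports to be Golay-compatible, and the $(\pm4^4;0^{20})$ shape in $\Lambda(8)$ is allowed in \emph{any} four positions with any signs, so this is immediate. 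The shape $(\pm8;0^{23})$ in $\Lambda(8)$ lets us adjust single coordinates by $8$; combined with the $(\pm2^{16};0^8)$ vectors (supported on octad complements) and the $(-3^5;1^{19})$ vectors, one gets enough moves to reach all of $\Lambda(4)$. A slicker alternative that avoids case-chasing: note that for a generic norm-$4$ vector $v$ and a suitable norm-$4$ vector $w$ with $v\cdot w = 0$, one has $(v+w)^2 = 8$ and $(v-w)^2 = 8$, so $v = \tfrac12\big((v+w)+(v-w)\big)$ — this is not quite an \emph{integer} combination, so instead use $v\cdot w = -2$ (two norm-$4$ minimal vectors of the Leech lattice can have inner product $-2$), giving $(v+w)^2 = 4 + 4 - 4 = 4$ — not $8$ either. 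The correct trick is: if $x$ is a norm-$8$ vector and $r$ a norm-$4$ vector with $x\cdot r = 2$, then $x - r$ has norm $8 + 4 - 4 = 8$, so $r = x - (x-r)$ expresses $r$ as a difference of two norm-$8$ vectors. Thus it suffices to check that every minimal vector $r$ occurs as $x\cdot r = 2$ for some $x\in\Lambda(8)$, which follows from transitivity of $Co_0$ on each shape together with one explicit example per shape.

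I expect the main obstacle to be purely bookkeeping: verifying in the explicit Golay-code coordinates that the norm-$8$ vectors I write down actually satisfy the three defining congruence conditions (integrality/parity, sum condition mod $8$, and the $\CG_{24}$-set condition on residues mod $4$), and that the chosen pairs have the stated inner products. None of this is deep, but it requires care with the octad/dodecad combinatorics — e.g. the difference of two octads need not be an octad, so one must pick the auxiliary vectors so that their supports interact correctly with the Golay code. Once a single worked example is given for each of the three shapes of $\Lambda(4)$, the transitivity of $Co_0 = \Aut(\Lambda)$ on vectors of each shape (which permutes $\Lambda(8)$ among itself) upgrades it to all minimal vectors, and hence — using that $\Lambda$ is generated by $\Lambda(4)$ — to a proof that $\Lambda = \langle\Lambda(8)\rangle_\ZZ$.
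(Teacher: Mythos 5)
Your strategy is correct and genuinely different from the paper's. The paper decomposes $\Lambda=\Lambda_0\cup(v+\Lambda_0)$, with $\Lambda_0$ the sublattice of vectors having all coordinates even, invokes Conway--Sloane for the fact that $\Lambda_0$ is generated by the Golay-set vectors $\lambda_C=\frac{2}{\sqrt{8}}\sum_{i\in C}e_i$, notes that those with $|C|=16$ already lie in $\Lambda(8)$, writes the octad and dodecad ones as differences of two norm-$8$ vectors via symmetric differences of $16$-sets, and finally chooses the odd-coset representative $v=\frac{1}{\sqrt{8}}(-3^5,1^{19})\in\Lambda(8)$. You instead reduce to the minimal vectors and use the identity $r=x-(x-r)$ for $x\in\Lambda(8)$ with $x\cdot r=2$, so that $(x-r)^2=8$; combined with transitivity of $\Aut(\Lambda)$ (which permutes $\Lambda(8)$), a single explicit witness per orbit suffices, and such witnesses do exist, e.g. $r=\frac{1}{\sqrt{8}}(4,4,0^{22})$ with $x=\frac{1}{\sqrt{8}}(4,0,4,4,4,0^{19})$, and analogous choices for the octad and $(-3,1^{23})$ shapes. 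This is arguably cleaner than the paper's code-theoretic bookkeeping (your mid-paragraph false starts are harmless, since the final inner-product trick is the right one). Two caveats, though. First, your reduction rests on the statement that $\Lambda$ is generated by $\Lambda(4)$; this is true and standard, but the reason you give (covering radius) is not a justification of it, so you would need a proper citation or a short proof, whereas the paper sidesteps this by generating $\Lambda_0$ directly from the $\lambda_C$ (which include norm-$4$ and norm-$6$ vectors). Second, $Co_0$ does not preserve the three coordinate shapes (those are orbits of the monomial subgroup generated by the $\epsilon_C$ and $M_{24}$); either argue shape by shape with the monomial group, or simply use that $Co_0$ is transitive on all of $\Lambda(4)$, which the paper already cites, so that one witness suffices. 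With those two points fixed and the witnesses written out, your proof closes.
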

\begin{proof} The lattice $\Lambda$ can be written as the disjoint union
    \be \Lambda_0 \cup (v+\Lambda_0)
    \ee where $\Lambda_0$ is the sublattice \be\label{Lambda0}\Lambda_0:=\{\frac{1}{\sqrt{8}}(x_1,\ldots,x_{24})\in\Lambda\mid x_i\in 2\ZZ\}\subset \Lambda\ ,\ee while $v=\frac{1}{\sqrt{8}}(y_1,\ldots,y_{24})\in \Lambda$ is any vector with all odd $y_i$. 
    It is shown in \cite{ConwaySloane}, Chapter 10, that  $\Lambda_0$ is generated by all the vectors $\lambda_C\in \Lambda$ of the form 
    \be\label{lambdaC}\lambda_C=\frac{2}{\sqrt{8}}\sum_{i\in C} e_i\ ,\ee where $C\subseteq\{1,\ldots, 24\}$ is a $\CG_{24}$-set.
 Recall that the cardinality $|C|$ of a $\CG_{24}$-set can be $|C|=0$ (empty set), $|C|=8$ (octads), $|C|=12$ (dodecads), $|C|=16$ (octads complements), or $|C|=24$ (the full set $\Omega=\{1,\ldots, 24\}$). In particular, when $|C|=16$, one has $\lambda_C\in \Lambda(8)$. Furthermore, any octad $O$ or dodecad $D$ can be obtained as the symmetric difference $C\triangle C'$ of suitable $\CG_{24}$-sets $C,C'$ of cardinality $|C|=|C'|=16$, with $|C\cap C'|=12$ or $|C\cap C'|=10$, respectively. Thus, one can obtain the element $\lambda_O$ or $\lambda_D$ as the sum $\lambda_C-\tilde\lambda_{C'}$, where
 \be \tilde\lambda_{C'}=\frac{2}{\sqrt{8}}(\sum_{i\in C\cap C'} e_i - \sum_{i\in C'\setminus (C\cap C')} e_i)\in \Lambda(8)\ .
 \ee Furthermore, the vector $\frac{1}{\sqrt{8}}(2,\ldots, 2)$ can be obtained from the sum $\lambda_O+\lambda_{C}$ for $O$ an octad and $C$ its complement $\{1,\ldots, 24\}\setminus O$. Thus, the lattice generated by vectors in $\Lambda(8)$ contains $\Lambda_0$. Finally, the vector $v$ can be chosen to be of length $8$, for example
     \be v=\frac{1}{\sqrt{8}}(-3,-3,-3,-3,-3,1,\ldots,1)\ ,
    \ee so that also $v+\Lambda_0$ is in the lattice generated by $\Lambda(8)$.
\end{proof}

The group $\Aut(\Lambda)$ contains a subgroup isomorphic to $\ZZ_2^{12}$ and generated by diagonal matrices $\epsilon_C$, where $C$ is a $\CG_{24}$-set and
\be \epsilon_C={\rm diag}(\nu_1,\ldots,\nu_{24})\ ,\qquad \nu_i=\begin{cases}
    -1 & \text{for }i\in C\\
    1 & \text{for }i\notin C\ .
\end{cases}
\ee Note  that, given a Leech lattice vector of the form $\lambda_C\equiv  ((\lambda_C)_1,\ldots,(\lambda_C)_{24})^t=\frac{2}{\sqrt{8}}\sum_{i\in C}e_i$, where $C$ is a $\CG_{24}$-set, then one has
\be \mathbf{1}_{24}-\epsilon_C=\diag(\sqrt{8}(\lambda_C)_1,\ldots,\sqrt{8}(\lambda_C)_{24})\ .
\ee
Recall that the vectors of the form $\lambda_C$, for $C$ a $\CG_{24}$-set, generate the sublattice $\Lambda_0\subset \Lambda$ in \eqref{Lambda0}. In particular, since this sublattice contains the vector $(\sqrt{8},0,0,\ldots,0)\in \Lambda_0$, this means that there is a suitable integral linear combination $\sum_C n_C \epsilon_C$ of elements $\epsilon_C\in \ZZ_2^{12}\subset \Aut(\Lambda)$ such that
\be\label{Amatr} \sum_C n_C \epsilon_C= {\rm diag}(8,0,\ldots,0)\ ,\qquad\qquad \text{for some }n_C\in\ZZ\ .
\ee

It is also known (see \cite{ConwaySloane}, Thm. 27, Chapter 10) that $Co_0$ acts transitively on the sets $\Lambda(4)$, $\Lambda(6)$, and $\Lambda(8)$. This means that, given any pair of vectors $x,y\in \Lambda$, $x\neq y$, with same length $x^2=y^2\le 8$, there exists $g\in Co_0$ such that $g(x)=y$. In particular, for any $\lambda\in\Lambda(8)$, there is an element $g_\lambda\in Co_0$ that maps the vector
\be (\sqrt{8},0,0,\ldots,0)^t\in \Lambda(8)\ ,
\ee to $\lambda$. Thus, for any $\lambda,\mu\in\Lambda(8)$, the group elements $g_\lambda$ and $g_\mu^{-1}$ are represented by matrices of the form
\be\label{gilambdagimu} g_\lambda=\frac{1}{\sqrt{8}}\begin{pmatrix}
    \lambda_1 & \ldots\\
    \lambda_2 & \ldots\\
    \vdots & \ddots\\
    \lambda_{24} & \ldots
\end{pmatrix}\qquad g^{-1}_\mu=(g_\mu)^t=\frac{1}{\sqrt{8}}\begin{pmatrix}
    \mu_1 & \mu_2 &\ldots &\mu_{24}\\
    \vdots & \vdots & \ddots &\vdots
\end{pmatrix}
\ee i.e. the first column of $g_\lambda$ is given by components of $\frac{1}{\sqrt{8}}\lambda\in \RR^{24}$ and the first row of $g_\mu^{-1}$ is given by the components of $\frac{1}{\sqrt{8}}\mu\in \RR^{24}$.

We are now ready to prove that (a)$\Rightarrow$(b) in Theorem \ref{th:LeechEndo}.
Suppose that $\hat\CL:\CC^{24}\to \CC^{24}$ is a $\CC$-linear map such that $\Tr(g\hat\CL)\in \ZZ$ for all $g\in \Aut(\Lambda)\cong Co_0$. 
For any pair of vectors $\lambda,\mu\in\Lambda(8)$, consider the $Co_0$ elements $g_\lambda$ and $g^{-1}_\mu$ in \eqref{gilambdagimu} and the integral linear combination $\sum_C n_C \epsilon_C$ of elements $\epsilon_C\in \Aut(\Lambda)$ in eq.\eqref{Amatr}. Then, we have
\be \ZZ\ni \sum_Cn_C  \Tr(g_\lambda\epsilon_C g^{-1}_\mu\hat\CL ) =\Tr(\sum_Cn_C\epsilon_C g^{-1}_\mu\hat\CL g_\lambda)= \mu\cdot \hat\CL(\lambda)\ ,\ee
 so that 
 \be\label{integr} \mu\cdot \hat\CL(\lambda)\in \ZZ\ ,
 \ee for all $\lambda,\mu\in \Lambda(8)$. Because, by lemma \ref{Lambda8gener}, every vector in $\Lambda$ can be written as an integral linear combination of vectors in $\Lambda(8)$, eq.\eqref{integr} must be satisfied for all $\lambda,\mu\in \Lambda$.  This implies that, for all $\lambda\in\Lambda$, $\hat\CL(\lambda)$ is in the dual $\Lambda^*$ of the Leech lattice
 \be \hat\CL(\lambda)\in \Lambda^*=\Lambda\ ,\qquad \forall\lambda\in\Lambda
 \ee where in the last step we used that $\Lambda$ is self-dual.

\section{Topological defects in orbifold CFTs}\label{s:orbifolds}

In section \ref{section:TDLs_Vfnat}, we showed that with every topological defect $\CL\in \cTop$ of $V^{f\natural}$ satisfying properties 1, 2, and 3 of section \ref{subsec:3.2}, is associated a linear operator $\rho(\CL)=\hat\CL_{\rvert V^{f\natural}(1/2)}$ that is an endomorphism of a copy of the Leech lattice $\Lambda\subset {}^\RR V^{f\natural}(1/2)\subset V^{f\natural}(1/2)$. Furthermore, the map $\rho:\cTop\to {\rm End}(\Lambda)$ is already surjective when we restrict to the subcategory of topological defects generated by the invertible ones, see section \ref{s:proof}. Based on these observations, it is natural to wonder whether the category $\cTop$ contains any non-trivial simple defect other than the invertible ones.  In this section, we will show that there is a large number of simple non-invertible TDLs satisfying properties 1, 2, and 3 of section \ref{subsec:3.2}. We will construct such defects using techniques from non-abelian orbifolds of vertex operator superalgebras. We stress that the methods described in this section are certainly not new (see for example \cite{Frohlich:2009gb,Bhardwaj:2017xup}); here, we simply provide a description of these ideas that is well-suited for applications to $V^{f\natural}$ and K3 sigma models. 

As we will show, all defects obtained in this way have integral quantum dimension. However, $\cTop$ also contains simple defects with non-integral (and, in fact, irrational) quantum dimension; some examples of dimension $\sqrt{2}$ and $\sqrt{3}$ will be discussed in a companion paper \cite{Angius:2025xxx}.

\subsection{VOAs with finite groups of symmetries}

In this section, we summarize some of the main properties of non-abelian orbifolds of VOAs; we refer to \cite{Coste:2000tq,Bhardwaj:2017xup} for more information.

Given a finite group $G$, we will use the following notation:
\begin{itemize}
	\item $[g]_G$ a $G$-conjugacy class with representative $g\in G$. We write simply $[g]$ when the group is clear.
	\item The centralizer of $g\in G$ in $G$:
	$$C_G(g):=\{h\in G\mid hg=gh\}\ .$$
	\item $Irr(G)$ is the set of (isomorphism classes of) irreducible representations of $G$. 
\end{itemize}
We will often use the orthogonality and completeness relation for finite group characters:
\be\label{orth1} \frac{1}{|C_G(h)|}\sum_{\rho\in Irr(G)}\Tr_\rho(g)\Tr_\rho(h)^*=\delta_{[g],[h]}\ ,\qquad g,h\in G\ .\ee
\be\label{orth2} \frac{1}{|G|}\sum_{g\in G}\Tr_\rho(g)\Tr_{\rho'}(g)^*=\delta_{\rho,\rho'}\ ,\qquad \rho,\rho'\in Irr(G)\ .\ee

Consider a holomorphic bosonic VOA $V$, such that $V$ is the only irreducible $V$-module up to isomorphism, and furnishes a modular invariant, chiral CFT. Suppose that $V$ admits a finite symmetry group of automorphisms $G\subseteq Aut(V)$. We will assume, for all groups that we consider, that the fixed point subalgebra $V^G$ is rational, $C_2$-cofinite, unitary, and has positive conformal weight. Then, as discussed in section \ref{s:topdefCFT}, one has invertible defects $\CL_g$, $g\in G$, that preserve the $G$-invariant subVOA $V^G$, and define a fusion category $Vec^\alpha_G$. We also assume that the symmetry is not anomalous, i.e. the topological junctions $V_g\otimes V_h\to V_{gh}$ can be chosen so that the associator $\alpha$ is trivial (see section \ref{s:topdefCFT}). The results of this section can be generalized to the case where $[\alpha]$ is non-trivial, but we will not consider that case in this article.

The $g$-twisted sectors $V_g$, $g\in G$, can be decomposed as direct sums of irreducible $V^G$-modules (see below). Up to isomorphism, the irreducible $V^G$-modules $M_{[g],\rho}$ obtained in this way can be labeled by a $G$-conjugacy class $[g]$ and by $\rho\in Irr(C_G(g))$ a representation of the centralizer $C_G(g)=\{h\in G\mid gh=hg\}$. Every irreducible $V^G$-module is isomorphic  to some $M_{[g],\rho}$.  We denote by
\be \chi_{[g],\rho}(\tau):=\Tr_{M_{[g],\rho}}(q^{L_0-\frac{c}{24}})\ ,
\ee the corresponding characters. 

The T- and S-matrices for the $\chi_{[g],\rho}$ characters are given by \cite{Coste:2000tq}
\be T_{([g],\rho),([h],\rho')}=\delta_{[g],[h]}\delta_{\rho,\rho'}\frac{\Tr_\rho(g)}{\Tr_\rho(e)}
\ee
\be \label{eq:Sgh} S_{([g],\rho),([h],\rho')}=\frac{1}{|C_G(g)||C_G(h)|}\sum_{k\in G(g,h)}\Tr_\rho(khk^{-1})^*\Tr_{\rho'}(k^{-1}gk)^*,
\ee
where, for any two $g,h\in G$, we let
\be G(g,h):=\{k\in G\mid gkhk^{-1}=khk^{-1}g\}
\ee be the (possibly empty) set of elements $k\in G$ for which $g$ and $khk^{-1}$ commute, so that $khk^{-1}\in C_G(g)$ and $k^{-1}gh\in C_G(h)$.

The VOA $V$ decomposes into irreducible modules of $G\times V^G$ as \cite{DLM2}
\be\label{Vdecomp} V=\bigoplus_{\rho\in Irr(G)} W_e^{\rho}\otimes   M_{[e],\rho}\ ,\ee  where $W_e^{\rho}$ is a finite dimensional vector space furnishing an irreducible representation $\rho:G\to GL(W_e^\rho)$. Physically, $W_e^{\rho}$ can be interpreted as the subspace of states that are highest weight vectors  for $V^G$-modules isomorphic to  $M_{[e],\rho}$ (or, equivalently, the subspace of $V^G$-primary operators in the $M_{[e],\rho}$ representation). We denote by $\{\Phi_{[e],\rho}^\alpha\}_{\alpha=1,\ldots,\dim\rho}$ a basis for the space $W_e^{\rho}$ of these $V^G$-primary operators.  Therefore, the $g$-twined  partition function can be written as
\be Z^g(\tau)=\sum_{\rho\in Irr(G)} \Tr_\rho(g)\chi_{[e],\rho}(\tau)\ .
\ee

When $g=e$, one has $G(e,h)=G$, and
\be S_{([e],\rho),([h],\rho')}=\frac{1}{|C_G(h)|}\Tr_\rho(h)^*\dim(\rho')
\ee so that the $g$-twisted partition function is
\begin{align} Z_g(\tau)&=\sum_{\rho\in Irr(G)} \Tr_\rho(g)\sum_{[h]}\sum_{\rho'\in Irr(C_G(h))}S_{([e],\rho),([h],\rho')}\chi_{[h],\rho'}(\tau)\\
	&=\sum_{\rho\in Irr(G)}\sum_{[h]}\sum_{\rho'\in Irr(C_G(h))}\Tr_\rho(g)\frac{1}{|C_G(h)|}\Tr_\rho(h)^*\dim(\rho')\chi_{[h],\rho'}(\tau)\\
	&= \sum_{\rho'\in Irr(C_G(g))}\dim(\rho')\chi_{[g],\rho'}(\tau),
\end{align} where we used \eqref{orth1}. In particular, one can check that the partition function $Z(\tau)$ is modular invariant, and this justifies the decomposition \eqref{Vdecomp}.

More generally, the twisted-twining partition function is
\be Z^h_g(\tau)=\sum_{\rho'\in Irr(C_G(g))}\Tr_{\rho'}(h)\chi_{[g],\rho'}(\tau)\ ,
\ee and this relation can be inverted using \eqref{orth1} to give
\be \chi_{[g],\rho}(\tau)=\frac{1}{|C_G(g)|}\sum_{h\in C_G(g)}\Tr_{\rho}(h)^* Z_g^h(\tau)\ .
\ee The S-matrix \eqref{eq:Sgh} is such that the usual S-transformation holds 
\be Z^h_g(-1/\tau)=Z_{h^{-1}}^g(\tau)\ .
\ee

These formulae imply that, for each $g\in G$, the $g$-twisted sector $V_g\equiv V_{\CL_g}$ decomposes as
\be V_g=\bigoplus_{\rho\in Irr(C_G(g))} W_g^{\rho}\otimes M_{[g],\rho}\ ,
\ee where $W_g^\rho$ is the vector space of all $V^G$-primary $\CL_g$-defect operators in $V_g$ of type $M_{[g],\rho}$, i.e. that belong to a irreducible $V^G$-module isomorphic to $M_{[g],\rho}$.  The operators $\hat\CL_h$, $h\in C_G(g)$, have a well defined action $\hat\CL_h: V_g\to V_g$; in particular,  they act on the space $W_g^\rho$ via the representation $\rho:C_G(g)\to GL(W_g^{\rho})$. 

More generally, the operator $\hat\CL_x$, for $x\in G$ not necessarily commuting with $g$, maps $V_g$ to $V_{xgx^{-1}}$, and in particular gives an isomorphism of $C_G(g)$ representations
\be \hat\CL_x:W_g^\rho\stackrel{\cong}{\longrightarrow}W_{g'}^{\rho'}
\ee  where $g'=xgx^{-1}$ and $\rho':C_{G(g')}\to GL(W_g')$ is such that $\rho'(xhx^{-1})=\rho(h)$. Indeed, it must map primaries  in $V_g$ in the $V^G$-module $M_{[g],\rho}$ to primaries in the isomorphic $V^G$-module $M_{[g'],\rho'}\cong M_{[g],\rho}$ in $V_{g'}$.\footnote{The spaces $V_{g}$ and $V_{xgx^{-1}}$ are isomorphic as $V^G$-modules (in particular, as modules for the Virasoro algebra), but not as twisted $V$-modules, i.e. there is no isomorphism of vector spaces $f:V_{g}\stackrel{\cong}{\to}V_{xgx^{-1}}$ such that $f^{-1} Y_{xgx^{-1}}(v,z)f=Y_g(v,z)$ for all $v\in V$.}

\subsection{Topological defects from orbifolds by subgroups}\label{s:orbdefects}

Let us suppose, as before, that the self-dual VOA $V$ has a finite symmetry group $G$, such that $V^G$ is strongly rational. Suppose that the group is anomaly free, so that we can choose the topological $3$-junctions $\mu_{g,h}:\CL_g\otimes\CL_h \to \CL_{gh}$ for all $g,h\in G$ in such a way that the $3$-cocycle $\alpha$ is trivial. Let $H\subseteq G$ be a (not necessarily normal) subgroup, and let us consider the orbifold theory $V/H$. By definition, $V/H$ is obtained by considering the space $\oplus_{h\in H} V_h$ and projecting on the $H$-invariant subspace
\be V/H\cong (\bigoplus_{h\in H} V_h)^H\ .
\ee The parent theory $V$ and the orbifold $V/H$ have  the $H$-invariant subalgebra $V^H\subset V$ in common. We know that the topological defects in $V/H$ preserving the subalgebra $V^H$ are given by Wilson lines for the group $H$ and form a fusion category isomorphic to $Rep(H)$ \cite{Frohlich:2009gb, Bhardwaj:2017xup}.

We are interested in understanding the larger category of topological defects of the orbifold $V/H$ that preserve the subalgebra $V^G\subseteq V^H$, but not necessarily $V^H$. Recall that the spectrum and all correlation functions in the orbifold theory $V/H$ can be completely determined in terms of correlation functions in $V$ in the presence of defects. Thus, it must be possible to determine the defects $\CL$ in the orbifold theory completely in terms of defects in $V$. In particular, each $V^G$-primary local or defect operator in $V$ should give rise to a $V^G$-primary local or defect operator in $V/H$. Thus, if we include all twisted sectors, the set of $V^G$-primary operators is the same for the two theories. In fact, the main difference between $V$ and $V/H$ is which subset of the set of $V^G$-primary operators is genuinely local (i.e. the lines they are attached to are identified with the trivial line), and which ones are attached to defect lines that are non-trivial (and, of course, to what kind of defects).

In order to determine the topological defects $\CL$ preserving $V^G$ in the orbifold theory $V/H$, our strategy is to analyse which sets of $V^G$-primaries can be consistently included in a given twisted space $V_\CL$. Suppose a defect space $V_\CL$ in the orbifold theory $V/H$ contains a $V^G$-primary operator $\Phi_{g,\rho}$ of type $M_{[g],\rho}$; in the parent theory $V$, this operator was contained in some $g$-twisted sector $V_g$. Then $V_\CL$ must also contain any operator that can be obtained from the OPE of $\Phi_{g,\rho}$ with the local operators in $V/H$. In this sense, we say that $V_\CL$ must be stable with respect to OPE with operators in $V/H$.
The orbifold VOA $V/H$ contains $H$-invariant $V_G$-primaries $\Phi_h$ in the $h$-twisted sector for each $h\in H\subseteq G$. The OPE of $\Phi_{g,\rho}\in V_{\CL}$ with $\Phi_{h}\in V/H$ yields operators in the $gh$-twisted and/or $hg$-twisted $V$-modules.   We conclude that $V_\CL$ must contain $g'$-twisted primary operators for all $g'$ in the double coset 
$$HgH:=\{h_1gh_2\in G\mid h_1,h_2\in H\}\ .$$ If $V_g$ is the $\CL_g$-twisted defect space in the original theory $V$, then
\be V_{HgH}:=\bigoplus_{g'\in HgH} V_{{g'}}
\ee is stable  with respect to OPE with $V/H$, and therefore can represent a defect space $V_\CL$ in the orbifold theory $V/H$. 

However, in general, a defect $\CL_{HgH}$ with twisted space $V_{HgH}$ will not be simple. To see this, let us first note that there is a well-defined action of the group $H$ on  $ V_{HgH}$: in particular, for each $g'\in HgH$, an element $h\in H$ maps the component $V_{g'}\subset V_{HgH}$ to $V_{hg'h^{-1}}$, where $hg'h^{-1}$ is still in the same coset $HgH$.  Thus, $V_{HgH}$ carries a representation of $H$, and we can decompose it into irreducible $H$-representations:
\be \label{eq:Hirreps} V_{HgH}=\bigoplus_{\rho\in Irr(H)} V^{\rho}_{HgH}\ .
\ee Now, each $V^{\rho}_{HgH}$ is stable with respect to the  OPE with operators in $V/H$ because all operators in $V/H$ are in the trivial $H$-representation.  Therefore,  each $V^{\rho}_{HgH}$ represents the defect space for a topological defect in $V/H$; we denote such defect by $\CL_{HgH,\rho}$. For example, for the trivial coset $HeH=H$, the defects $\CL_{HeH,\rho}$, $\rho\in Irr(H)$, include  the identity defect in $V/H$ and the simple $H$ Wilson lines.

It is easy to compute the $\CL_{HgH,\rho}$-twisted partition functions in terms of the twisted-twining partition functions $Z_g^h$ in the parent theory $V$. Indeed, by construction, one has
\be Z_{HgH,\rho}(V/H,\tau)=\frac{1}{|H|}\sum_{g'\in HgH}\sum_{\substack{h\in H\\ hg'=g'h}}Tr_{\rho}(h)^*Z_{g'}^h(V,\tau).
\ee
This formula uses the orthogonality of characters \eqref{orth2} to project the space $ V_{HgH}$ onto the subspace of states transforming in a representation $\rho$. Let us explain the restriction on commuting pairs $g'h=hg'$. When acting on a space $V_{HgH}$, which is the direct sum of subspaces $V_{g'}$ labeled by elements $g'\in HgH$, an element $h\in H$ admits a block decomposition, with each block labeled by a pair $g',g''\in HgH$. The blocks can be non-zero only for $g''=hg'h^{-1}$; on the other hand, only the diagonal blocks, i.e. for $hg'h^{-1}=g'$, contribute to the trace.  

The S-transform of $Z_{HgH,\rho}$ gives the $\CL_{HgH,\rho}$-twined partition function
\be\label{twinHgH} Z^{HgH,\rho}(V/H,\tau)=\frac{1}{|H|}\sum_{g'\in HgH}\sum_{\substack{h\in H\\ hg'=g'h}}Tr_{\rho}(h)^*Z^{g'}_{h^{-1}}(V,\tau)\ .
\ee 
From \eqref{twinHgH} one can read off the quantum dimension of the defect $\CL_{HgH,\rho}$, which is the coefficient of the vacuum character $\chi_{e,1}$ in the decomposition of \eqref{twinHgH}. Because the character $\chi_{e,1}$ appears only in the terms $Z_e^{g'}(\tau)$ and with coefficient $1$, one obtains
\be \langle\CL_{HgH,\rho}\rangle=\frac{|HgH|(\dim\rho)}{|H|}\ ,
\ee where $|HgH|$ is the number of elements in the coset $HgH$. Note that $|HgH|/|H|$ is always a positive integer, and it is $1$ if and only if $g$ normalizes $H$, i.e. if $gHg^{-1}=H$. Therefore, the defect $\langle\CL_{HgH,\rho}\rangle$ is invertible if and only if $\dim\rho=1$ and $gHg^{-1}=H$. In particular, if $H$ is a normal subgroup of $G$ (i.e. $gHg^{-1}=H$ for all $g\in G$), then there is at least one invertible symmetry element $\CL_{HgH,\rho=1}$ for each double coset $HgH$, corresponding to the trivial $H$-representation $\rho=1$. Furthermore, for $H$ a normal subgroup, the double cosets are in one-to-one correspondence with the left cosets, and the invertible defects $\CL_{HgH,1}$ generate the quotient group $G/H$. 

There are a number of subtleties in this construction:
\begin{itemize}
	\item  It may happen that two apparently different defects $\CL$, $\CL'$ constructed in this way turn out to be isomorphic, $\CL\cong \CL'$. This means that the corresponding operators $\hat\CL$ and $\hat\CL'$ act in the same way on all states in both $V/H$ and all  other defect spaces. A necessary (but not sufficient) condition for $\CL$ and $\CL'$ to be isomorphic is that the twisted spaces $V_{\CL}$ and $V_{\CL'}$ are isomorphic as $V^G$-modules, i.e. each irreducible $V^G$-module $M_{[g],\rho}$ has the same multiplicity in $V_{\CL}$ and in $V_{\CL'}$. In this case, the twisted partition functions are the same, and by S-transformation the twining partition functions are also the same. We stress, however, that having the same twining partition function does not necessarily mean that the action on $V/H$ is the same: an obvious counterexample is given by two distinct invertible defects in the same conjugacy class of the symmetry group. See the section \ref{s:orbexamples} for  examples.
	\item It may in principle happen that some of the defects $\CL_{HgH,\rho}$ are not simple. It should be possible to check this by computing the multiplicity of the identity defect in the tensor product of $V^{\rho}_{HgH}$ with its dual $V_{Hg^{-1}H}^{\rho^*}$.
	\item One consistency check that may help with the previous two subtleties is that the total dimension of the category of topological defects, i.e. the quantity
	\be \sum_{\CL\text{ simple}} \langle \CL\rangle^2
	\ee must be the same in the original theory and in the orbifold (see for example \cite{Bhardwaj:2017xup}). The sum here is over a set of representatives of isomorphism classes of simple defects. In particular, if we start with a category generated by invertible defects forming a group $G$ in the parent theory $V$,  then the square dimension of the category of  defects in the orbifold $V/H$ must  be equal to $|G|$.
\end{itemize}

In section \ref{s:orbexamples}, we will consider some examples.

	\subsubsection{Generalizations to SuperVOAs and non-holomorphic CFTs}\label{s:orbSVOAnonholom}
	
In section \ref{s:orbdefects}, we described the topological defects obtained from orbifolding (gauging) a non-abelian finite symmetry group of a holomorphic bosonic VOA $V$. We would like to generalize this construction both to supersymmetric  theories (holomorphic SVOAs) and to fully-fledged (non-holomorphic) SCFTs. 

In the first case, the main complication arises due to the presence of the fermion number $(-1)^F$. We will always consider the case of a (non-anomalous) symmetry group $G$ that commutes with $(-1)^F$, i.e. such that the group of automorphisms of the holomorphic SVOA $V$ contains a subgroup $G\times \langle(-1)^F\rangle$. Furthermore, we will assume that the group $G$ commutes with an $\CN=1$ superVirasoro algebra of central charge $c$, $SVir_c\subset V$. Even in this case, it could happen that the action of $G\times (-1)^F$ on the Ramond ($(-1)^F$-twisted) sector $V_{tw}$ is projective, such that the group acting linearly is a non-trivial extension $\tilde G$, which may not be  a direct product of $(-1)^F$ with another factor. Another possible complication is that the $g$-twisted sectors $V_g$ or $V_{g,tw}:= V_{(-1)^Fg}$ (NS and R sectors, resp.) do not have a well--defined fermion number, and that the expected modular transformations are modified in a non-trivial way (see the discussion in section \ref{s:TDLsuperVOA} and \cite{DongRenYang:2022} for more information).

Practically, we will  assume that these complications do not arise for the groups $G$ that we consider, i.e. we assume that all invertible defects $\CL_g$, $g\in G$, satisfy properties 1, 2, and 3 in section \ref{subsec:3.2}. This means that we assume that there is a finite direct  product group $G\times (-1)^F$ with a well-defined action on both  $V$ and $V_{tw}$. Furthermore, we will assume that all $g$-twisted sectors $V_g$ and $V_{g,tw}$ have a well-defined fermion number $(-1)^F$. Finally, we  assume that $G$ is non-anomalous, which implies that each $V_g$ and $V_{g,tw}$ sector carries a (genuine, non-projective) action of the centralizer $C_G(g)$ of $g$ in $G$, and that this action commutes with $(-1)^F$.

If all such assumptions hold, then the results of section \ref{s:orbdefects} generalize to the SVOA case with minor modifications. For example, for every commuting pair $g,h\in G$, $gh=hg$, one can define a $g$-twisted, $h$-twined vector-valued partition function $\CZ_g^h:=(Z_{g,\NS}^{h,+},Z_{g,\NS}^{h,-},Z_{g,\R}^{h,+},Z_{g,\R}^{h,-})^t$ with components
\begin{align}\label{eq:TwistTwineNS}
 	Z^{h,\pm}_{g,\NS}(V,\tau)&:=\Tr_{V_g}(q^{L_0-\frac{c}{24}}(\pm 1)^Fh)\\\label{eq:TwistTwineR}
 	Z^{h,\pm}_{g,\R}(V,\tau)&:=\Tr_{V_{g,tw}}(q^{L_0-\frac{c}{24}}(\pm 1)^Fh)\ ,
 \end{align}
satisfying the expected modular properties. Upon taking $g=e$ or $h=e$, this reduces to (\ref{eq:SymTwineNS})--(\ref{eq:SymTwineR}) or (\ref{eq:SymTwistNS})--(\ref{eq:SymTwistR}), respectively. In particular, one has
\be \CZ_g^h(-1/\tau)=\rho(S)\CZ_h^{g^{-1}}(\tau)\ ,
\ee where $\rho(S)$ is the matrix in eq.\ref{modularspin}. 

Thus, given a subgroup $H\subset G$, one has a well defined orbifold SVOA $V/H=(\oplus_{h\in H}V_h)^H$, and we can construct a set of topological defects $\CL_{HgH,\rho}$ preserving the $G$-invariant subVOA $V^G$ for each double coset $HgH$, $g\in G$, and $\rho\in Irr(H)$. The vector-valued $\CL_{HgH,\rho}$-twined and $\CL_{HgH,\rho}$-twisted partition functions $\CZ^{Hgh,\rho}$ and $\CZ_{HgH,\rho}$ are defined by the same formulae as in the bosonic case, namely
\be\label{stwinHgH} \CZ^{HgH,\rho}(V/H,\tau)=\frac{1}{|H|}\sum_{g'\in HgH}\sum_{\substack{h\in H\\ hg'=g'h}}\Tr_{\rho}(h)^*\CZ^{g'}_{h^{-1}}(V,\tau)\ ,
\ee and
\be\label{stwistHgH} \CZ_{HgH,\rho}(V/H,\tau)=\frac{1}{|H|}\sum_{g'\in HgH}\sum_{\substack{h\in H\\ hg'=g'h}}\Tr_{\rho}(h)^*\CZ_{g'}^h(V,\tau)\ .
\ee 
In particular, if we take $V= V^{f\natural}$, and $V^{f\natural}$ is self-orbifold by $H$, i.e. $V^{f\natural}/H\cong V^{f\natural}$, this construction provides a number of non-invertible simple TDLs in $\cTop$. In particular, our assumptions about the defects $\CL_g$, $g\in G$, in the original theory $V$, automatically imply that the non-invertible defects $\CL_{HgH,\rho}$ in the orbifold theory in $V/H$ are still `well-behaved' with respect to fermion number. This follows because the $\CL_{HgH,\rho}$-twisted sector can be defined in terms of modules for the $G$-invariant subalgebra $V^G$.

While the assumptions we make are in general quite strong, there are many examples for the SVOA $V=V^{f\natural}$ where one can reasonably expect these conditions to be satisfied. Indeed, suppose that $G$ is a subgroup of $\Aut_\tau(V^{f\natural})\cong Co_0$ that fixes pointwise a $4$-dimensional subspace $\Pi^\natural\subset {}^\RR V^{f\natural}_{tw}(1/2)$. This implies that the $G$-fixed subalgebra $V^G\subset V^{f\natural}$ (that we assume to be strongly rational) contains an $\widehat{so}(4)_1\cong (\widehat{su}(2)_1)^2$ current algebra, and that the group $Spin(4)$ generated by the current zero modes commutes with $G$. Moreover, it is easy to check that $(-1)^F$ acts on $V^{f\natural}$ and $V^{f\natural}_{tw}$ in the same way as the non-trivial central element in the kernel of the projection $Spin(4)\to SO(4)$. Because every $g$-twisted NS and R sector $V_g$ and $V_{g,tw}$ is an ordinary unitary module for $\widehat{so}(4)_1\subset V^G$, it is also a module for the group $Spin(4)$, so that a consistent definition of $(-1)^F$ is given by the same central element of $Spin(4)$. Furthermore, with this definition, $(-1)^F$ commutes with the action of $C_G(g)$ on $V_g$ and $V_{g,tw}$. Thus, all of our assumptions are satisfied in this case, and eqs.\eqref{stwinHgH} and \eqref{stwistHgH} hold. 

In order to make a connection with K3 NLSMs, one can choose a current zero mode $J_0^3$ in one of the $\widehat{su}(2)_1$ in $\widehat{so}(4)_1\cong (\widehat{su}(2)_1)^2$, and for every $g,h\in G$, $gh=hg$, define the $g$-twisted, $h$-twined genus
\be\label{ghtwisttwiningVf} \phi^h_g(V^{f\natural},\tau,z):=\Tr_{V^{f\natural}_{g,tw}}(h q^{L_0-\frac{c}{24}}y^{J_0^3}(-1)^F)\ .
\ee Similarly, given a subgroup $H\subset G$, one can define the twisted and twining genera $\phi_{HgH,\rho}(V^{f\natural}/H,\tau,z)$ and $\phi^{HgH,\rho}(V^{f\natural}/H,\tau,z)$ for each topological defect $\CL_{HgH,\rho}$ in the orbifold theory $V^{f\natural}/H$, with
\be\label{genustwinHgH} \phi^{HgH,\rho}(V^{f\natural}/H,\tau,z)=\frac{1}{|H|}\sum_{g'\in HgH}\sum_{\substack{h\in H\\ hg'=g'h}}\Tr_{\rho}(h)^*\phi^{g'}_{h^{-1}}(V^{f\natural},\tau,z)\ ,
\ee and
\be\label{genustwistHgH} \phi_{HgH,\rho}(V^{f\natural}/H,\tau,z)=\frac{1}{|H|}\sum_{g'\in HgH}\sum_{\substack{h\in H\\ hg'=g'h}}\Tr_{\rho}(h)^*\phi_{g'}^h(V^{f\natural},\tau,z)\ .
\ee When $V^{f\natural}/H\cong V^{f\natural}$, the defects $\CL_{HgH,\rho}$ are contained in the category $\cTop_{{\Pi'}^\natural}$, for a certain subset ${\Pi'}^\natural\subset {}^\RR V^{f\natural}_{tw}(1/2)$. Based on Conjecture \ref{conj:K3relation}, we expect (\ref{genustwinHgH}) and (\ref{genustwistHgH}) to be reproduced by a corresponding calculation for a defect for some K3 NLSM $\calC$.  Note that ${\Pi'}^\natural$ and $\Pi^\natural$ are in general distinct as subspaces of the real space $\Lambda\otimes \RR$. The reason is the following. The isomorphism $V\stackrel{\cong}{\longrightarrow} V/H$ is defined up to composition by $\Aut(V)\cong \Aut(V/H)\cong Spin(24)$ from the left or from the right. In general, it might not be possible to choose such an isomorphism so that it is the identity on the common subspace $V^H$. In particular, we can always choose it to act trivially on the supercurrent $\tau\in V^H$, but then it might map $\Pi^\natural$ to a different ${\Pi'}^\natural$, such that the two subspaces are not related by an automorphism in $\Aut_\tau(V)\cong Co_0$.  

Finally, we would like to apply the same reasoning to a generic supersymmetric non-linear sigma model $\calC$ on K3, which is a non-holomorphic (and, in general, non-rational) superconformal field theory. If $G$ is a non-anomalous group of symmetries preserving the $\CN=(4,4)$ superconformal algebra and the spectral flow, then the same argument used for $V^{f\natural}$ suggests that all NSNS and RR $g$-twisted sectors $\Hh_{g,\NSNS}$ and $\Hh_{g,\RR}$ have well-defined fermion number commuting with the (non-projective) action of $C_G(g)$. Therefore, we can define the 
$g$-twisted, $h$-twined partition functions $\CZ_g^h(\calC,\tau)=(Z_{g,\NS}^{h,+},Z_{g,\NS}^{h,-},Z_{g,\R}^{h,+},Z_{g,\R}^{h,-})^t$
with (non-holomorphic) components
\begin{align}\label{eq:TwistTwineNSK3}
 	Z^{h,\pm}_{g,\NS}(\calC,\tau)&:=\Tr_{\Hh^{K3}_{g,\NSNS}}(hq^{L_0-\frac{c}{24}}{\bar q}^{\bar L_0-\frac{\bar c}{24}}(-1)^{F+\bar F})\\\label{eq:TwistTwineRK3}
 	Z^{h,\pm}_{g,\R}(\calC,\tau)&:=\Tr_{\Hh^{K3}_{g,\HRR}}(hq^{L_0-\frac{c}{24}}{\bar q}^{\bar L_0-\frac{\bar c}{24}}(-1)^{F+\bar F})\ ,
 \end{align}
as well as (holomorphic)
$g$-twisted, $h$-twined genera 
\be\label{K3twisttwin} \phi^{h}_g(\calC,\tau,z):=\Tr_{\Hh^{K3}_{g,\HRR}}(h q^{L_0-\frac{c}{24}}{\bar q}^{\bar L_0-\frac{\bar c}{24}}y^{J_0^3}(-1)^{F+\bar F})\ .\ee
Furthermore, for a given subgroup $H\subset G$, the orbifold $\calC/H$ is a well-defined $\CN=(4,4)$ NLSM, following the assumption that $G$ is non-anomalous. Thus, we have a topological defect $\CL_{HgH,\rho}$ in the orbifold theory $\calC/H$, whose twisted and twining partition functions $\CZ_{HgH,\rho}(\calC/H,\tau)$ and $\CZ^{HgH,\rho}(\calC/H,\tau)$ and genera $\phi_{HgH,\rho}(\calC/H,\tau,z)$ and $\phi^{HgH,\rho}(\calC/H,\tau,z)$ are given again by, respectively \eqref{stwinHgH}, \eqref{stwistHgH}, \eqref{genustwinHgH} and \eqref{genustwistHgH}, upon replacing $V^{f\natural}$ with $\calC$. 

As discussed in section \ref{s:topdefK3}, for each subgroup $G_{\Pi^\natural}\subset \Aut_\tau(V^{f\natural})\cong Co_0$ fixing a subspace $\Pi^\natural\subset \Lambda\otimes \RR$, there exists a K3 model $\calC$ whose symmetry group preserving the $\CN=(4,4)$ algebra and the spectral flow is isomorphic to $G_\Pn$, with isomorphic $G_\Pn$  action on the $24$-dimensional spaces $V^{f\natural}_{tw}(1/2)$ and $\Hh_{\HRR,gr}$, and such that the $g$-twined genera match $\phi^g(V^{f\natural})=\phi^g(\calC)$ for all $g\in G_{\Pi^\natural}$. If one can prove that all $g$-twisted, $h$-twined genera are the same in the two theories, i.e. $\phi_g^h(V^{f\natural})=\phi_g^h(\calC)$ for all $g,h\in G_\Pn$, $gh=hg$, then we can conclude that, for any non-anomalous subgroup $H\subset G_\Pn$, the $\CL_{HgH,\rho}$ twisted and twining genera in the respective orbifold theories $V^{f\natural}/H$ and $\calC/H$ are also the same. We will explicitly show an example in section \ref{s:K3matching}.

\subsection{Examples}\label{s:orbexamples}

In this section, we present examples of the method described in section \ref{s:orbdefects} for constructing $V^{G}$--preserving topological defects in the orbifold theory $V/H$, where $H, G$, $H\subseteq G$ are finite symmetry groups of the VOA $V$. 
In particular, for $G=S_3$, $H=\ZZ_2$, we find a category of order six with the fusion rules of $Rep(S_3)$ (section \ref{s:S3Z2}), for $G=S_3$, $H=\ZZ_3$, we find a category of invertible defects which form the group $S_3$ (section \ref{s:S3Z3}), and finally, for $G=A_5$, $H=S_3$, we find a fusion category of order 60 with fusion rules given in equations (\ref{eq:fus1})-(\ref{eq:fus4}) (section \ref{s:A5S3}).
Note that the results we describe in this section yield $V^G$--preserving fusion categories of $V/H$ for any VOA $V$ with non-anomalous $G$--symmetry and strongly rational $V^G$.

\subsubsection{$G=S_3$, $H=\ZZ_2$.}\label{s:S3Z2}

The group $G=S_3$ has six elements, the identity $e$, the transpositions $g_1,g_2,g_3$, with $g_i$ fixing the point $i\in \{1,2,3\}$, and the order $3$ permutations $h=g_1g_2$ and $h^{-1}$. It has three conjugacy classes $[e]$, $[g_i]$, $[h]=[h^{-1}]$, determined by the order of their elements. The centralizer of each non-trivial element $k\in G$ is the cyclic group $C_G(k)= \langle k\rangle\cong \ZZ_n$ generated by the element itself. There are three irreducible representations of $G$, which we denote by $\rho_0$ (the trivial representation), $\rho_1$ (the $1$-dimensional sign representation), and $\rho_2$ ($2$-dimensional). The character table is:
\be\label{ctblS3} \begin{array}{c|ccc}
	& [e] & [g_i] & [h]\\
\hline	\rho_0 & 1 & 1 & 1\\
	\rho_1 & 1 & -1 & 1\\
	\rho_2 & 2 & 0 & -1\end{array}
\ee
 For each element $g_i$ of order $2$, we denote by $\rho_+$ and $\rho_-$ the two irreducible representations of $C_G(g_i)=\langle g_i\rangle\cong \ZZ_2$ with $\rho_\pm(g_i)=\pm 1$. Finally, we denote by $\sigma_1$, $\sigma_\omega$ and $\sigma_{\bar\omega}$ the three irreducible representations of $C_G(h)=\langle h\rangle\cong \ZZ_3$, where $\omega=e^{2\pi i/3}$ and $\sigma_\gamma(h)=\gamma$, $\gamma\in \{1,\omega,\bar\omega\}$.  

In the parent theory $V$, almost all spaces $W_g^{\rho}$ of $V^G$-primary operators are $1$-dimensional, except for $W_e^{\rho_2}$, which has dimension 2. We denote these primary operators as
\be \Phi_{e,0},\quad \Phi_{e,1}\quad \Phi^{\alpha}_{e,2},\quad \Phi_{g_i,+},\quad \Phi_{g_i,-},\quad \Phi_{h,\gamma},\quad \Phi_{h^{-1},\gamma}
\ee
where $i=1,2,3$, and the index $\alpha$ labels the two different primaries in the $V^G$-module of type $M_{[e],\rho_2}$. The space $\oplus_{g\in G}V_g$ of all local and defect operators of the original theory $V$ contains three $V^G$-primaries of type $M_{[g_i],+}$ and three of type $M_{[g_i],-}$   (denoted, respectively, as $\Phi_{g_i,+}$ and $\Phi_{g_i,-}$, $i=1,2,3$), while there are two primaries $\Phi_{h,\gamma}$ and $\Phi_{h^{-1},\bar\gamma}$ of type $M_{[h],\gamma}$.

We consider the orbifold $V/H$ by the subgroup $H=\langle g_1\rangle\cong \ZZ_2$. Note that $g_1$ has eigenvalues $\pm1$ on the $2$-dimensional space $W_e^{\rho_2}$, so it is convenient to take a basis  of $g_1$-eigenvectors $\Phi^{+}_{e,2},\Phi^{-}_{e,2}$. Then, the original VOA $V$ contains the $V^G$-primaries
\be \Phi_{e,0}, \Phi_{e,1}, \Phi^{+}_{e,2},\Phi^{-}_{e,2}\in V\ ,
\ee while the orbifold VOA $V/H$ contains the $V^G$-primaries
\be \Phi_{e,0}, \Phi^{+}_{e,2},\Phi_{g_1,+}\in V/H\ .
\ee  In particular, $\Phi_{e,0}, \Phi^{+}_{e,2}$ and their $V^G$-descendants generate the $H$-invariant subalgebra $V^H$. In $V/H$ there is a quantum symmetry Wilson line $\CL_{\rho_-}$, whose twisted Hilbert space $V_{\rho_-}$ contains the primaries
\be \Phi_{e,1}, \Phi^{-}_{e,2},\Phi_{g_1,-}\in V_{\rho_-}\ .
\ee It is a $\ZZ_2$ invertible defect, with $\hat\CL_{\rho_-}$ acting by $-1$ on $\Phi_{g_1,+}$ and trivially on $\Phi_{e,0}, \Phi^{+}_{e,2}$.  This is the only non-trivial simple topological defect preserving the full algebra $V^H$.

Let us now consider defects preserving $V^G\subseteq V^H$, but not necessarily $V^H$. There is a single non-trivial double coset in $H\backslash G/H$, namely
\be Hg_2H=\{g_2,\ g_1g_2=h,\ g_2g_1=h^{-1},\ g_1g_2g_1=g_3\}\ ,
\ee that contains all elements of $G$ that are not in $H$.  Thus, following the discussion around equation (\ref{eq:Hirreps}, we expect there to exist two non-invertible defects, $\CW_+:=\CL_{Hg_2H,+}$ and $\CW_-:=\CL_{Hg_2H,-}$,  corresponding to the two irreducible representations of $H\cong\ZZ_2$. The quantum dimension of these defects must be $2$, because their superposition $\CW_++\CW_-$ should correspond to the superposition $\CL_{g_2}+\CL_{g_3}+\CL_h+\CL_{h^{-1}}$ of four invertible defects in $V$. 

Let us now attempt to determine which $V^G$-primaries are contained in $V_{\CW_+}$ and $V_{\CW_-}$. Let us take $g_1$ to act on $\Phi_{h,\gamma}$ by
\be g_1(\Phi_{h,\gamma})=\Phi_{h^{-1},\bar\gamma}\ .
\ee Note that $\Phi_{h,\gamma}$ and $\Phi_{h^{-1},\bar\gamma}$ are in the same $V^G$-module $M_{[h],\bar\gamma}$. Our guess is
$$\Phi_{g_2,+}+\Phi_{g_3,+},\quad \Phi_{g_2,-}+\Phi_{g_3,-},\quad \Phi_{h,1}+\Phi_{h^{-1},1},\quad \Phi_{h,\omega}+\Phi_{h^{-1},\bar\omega},\quad \Phi_{h,\bar\omega}+\Phi_{h^{-1},\omega}\quad \in V_{\CW_+}
$$
$$ \Phi_{g_2,+}-\Phi_{g_3,+},\quad \Phi_{g_2,-}-\Phi_{g_3,-},\quad \Phi_{h,1}-\Phi_{h^{-1},1},\quad \Phi_{h,\omega}-\Phi_{h^{-1},\bar\omega} ,\quad\Phi_{h,\bar\omega}-\Phi_{h^{-1},\omega}\quad \in V_{\CW_-}.
$$
The two defect Hilbert spaces $\Hh_{\CW_+}$ and $\Hh_{\CW_-}$ have exactly the same $V^G$-modules and multiplicities, i.e. the twisted partition functions must be the same, $Z_{\CW_+}(\tau) =Z_{\CW_-}(\tau)$. As a consequence,  their S-transforms, i.e. the twining partition functions $Z^{\CW_+}(\tau) =Z^{\CW_-}(\tau)$, are also the same. It follows that $\hat \CW_+$ and $\hat \CW_-$ must act with the same eigenvalues on the three $V^G$-primaries in $V/H$. These observations suggest that the two defects are isomorphic,
$$ \CW_+\cong \CW_-\equiv \CW\ , 
$$ in the sense that there is no way to distinguish them when inserted in correlation functions.\footnote{In principle, one should also check that $\hat \CW_-$ and $\hat \CW_+$ act in the same way on all defect Hilbert spaces, and not only on $V/H$.} This conclusion is consistent with the fact that there are only three $V^G$-primaries in $V/H$, so the action of any $V^G$--preserving defect $\hat\CL$ depends on only three parameters. The fusion rules must be the ones of $Rep(S_3)$:
\be \CL_{\rho_-}^2=\CI\ ,\qquad \CL_{\rho_-}\CW=\CW\CL_{\rho_-}=\CW\ ,\qquad \CW^2=\CI+\CL_{\rho_-}+\CW\ .
\ee The rule $\CW\CL_{\rho_-}=\CL_{\rho_-}\CW=\CW$ implies that $\hat \CW$ must annihilate the primary $\Phi_{g_1,+}$. On the primaries that are invariant under $\CL_{\rho_-}$, the possible eigenvalues of $\hat \CW$ are $-1$ and $2$, so it must act as
\be \hat \CW(\Phi_{e,0})=2\Phi_{e,0}\ ,\qquad \hat \CW(\Phi_{e,2}^+)=-\Phi_{e,2}^+\ ,\qquad \hat \CW(\Phi_{g_1,+})=0\ .
\ee

Let us check that the partition functions make sense. The twining partition function $Z^\CW$ is
\be Z^\CW(\tau)=2\chi_{[e],0}(\tau)-\chi_{[e],2}(\tau)\ .
\ee The S-matrix of equation (\ref{eq:Sgh}) for the group $G=S_3$  is (see \cite{Coste:2000tq})
\be \frac{1}{6}\begin{pmatrix}
	1 & 1 & 2 & 3 & 3 & 2 & 2 & 2\\
    1 & 1 & 2 & -3 & -3 & 2 & 2 & 2\\
    2 & 2 & 4 & 0 & 0 & -2 & -2 & -2\\
	3 & -3 & 0 & 3 & -3 & 0 & 0 & 0\\
	3 & -3 & 0 & -3 & 3 & 0 & 0 & 0\\
    2 & 2 & -2 & 0 & 0 & 4 & -2 & -2\\
    2 & 2 & -2 & 0 & 0 & -2 & -2 & 4\\
    2 & 2 & -2 & 0 & 0 & -2 & 4 & -2\\
\end{pmatrix} \,
\ee
acting on the vector of characters $(\chi_{[e],0},\chi_{[e],1},\chi_{[e],2},\chi_{[g_i],+},\chi_{[g_i],-},\chi_{[h],1},\chi_{[h],\omega},\chi_{[h],\bar\omega})^t$.
Applying this S-transformation to $Z^\CW$ leads to the $\CW$-twisted partition function,
\be Z_\CW(\tau)=\chi_{[g_i],+}(\tau)+\chi_{[g_i],-}(\tau)+\chi_{[h],1}(\tau)+\chi_{[h],\omega}(\tau)+\chi_{[h],\bar\omega}(\tau),
\ee which has a character decomposition with integral multiplicities and matches the list of primaries above. The square dimension of the category of defects is
\be \langle \CI\rangle^2+\langle \CL_{\rho}\rangle^2+\langle \CW\rangle^2=1^2+1^2+2^2=6,
\ee and matches the dimension $|S_3|=6$ before the orbifold. This confirms that $\CW_+$ and $\CW_-$ are indeed isomorphic.

\subsubsection{$G=S_3$, $H=\ZZ_3$.}\label{s:S3Z3}

Let us consider the same group $G=S_3$ as in the previous example, but now take the orbifold $V/H$ by the normal subgroup $H=\langle h\rangle\cong \ZZ_3$, so that
$$ \Phi_{e,0},\Phi_{e,1},\Phi_{h,1},\Phi_{h^{-1},1}\in V/H.
$$
 The trivial coset $HeH$ provides the three invertible Wilson lines $\CL_{\sigma_1}\equiv\CI$, $\CL_{\sigma_{\omega}}$, $\CL_{\sigma_{\bar\omega}}$, generating the $\ZZ_3$ quantum symmetry group. 
 
 There is another coset,
\be
Hg_1H=\{g_1,g_2,g_3\},
\ee
which leads to three additional defects as follows.
The $\CL_{Hg_1H,\sigma_\gamma}$-twisted partition function, for each $\gamma \in \{1,\omega,\bar\omega\}$, is given by
\be
Z_{Hg_1H,\sigma_\gamma}(\tau)=\frac{1}{3}\sum_{g'\in Hg_1H}\sum_{\substack{h\in H\\ hg'=g'h}}\Tr_{\sigma_\gamma}(h)^*Z_{g'}^h(\tau)=\frac{1}{3}\sum_{g'\in Hg_1H}Z_{g'}(\tau)=Z_{[g_i]}(\tau)
	\ee  where in the last step we used the fact that the twisted partition functions depend only on the $G$-conjugacy class. This means that all three defects for this coset have the same twining partition function
		\be Z^{Hg_1H,\sigma_\gamma}(\tau)=Z^{[g_i]}(\tau)=\chi_{e,\sigma_0}(\tau)-\chi_{e,\sigma_1}(\tau)\ .
	\ee
	 In this case, though, we cannot conclude that the three operators $\hat\CL_{Hg_1H,\sigma_\gamma}$ for $\gamma\in \{1,\omega,\bar\omega\}$ act in the same way on the primaries of $V/H$. In particular, the primaries $\Phi_{h,1},\Phi_{h^{-1},1}\in V/H$ are in the same $V^G$-module, so that each $\hat\CL_{Hg_1H,\sigma_\gamma}$ acts on them by a $2\times 2$ matrix. The fact that the trace of these three matrices is the same (in fact, $0$) does not necessarily imply that they act in the same way.
	
	A simple comparison with the defects in the original theory shows that these defects have dimension $1$, and therefore they are invertible. Thus, together with $\CL_{\sigma_{\omega}}$, $\CL_{\sigma_{\bar\omega}}$ they generate a group of order $6$, and the only possibilities are $\ZZ_6$ and $S_3$. However, only $S_3$ is compatible with the twining genera; in particular, $\Phi_{h,1},\Phi_{h^{-1},1}\in V/H$ span the irreducible $2$-dimensional representation of $S_3$.

\subsubsection{$G=A_5$, $H=S_3$.}\label{s:A5S3}

The case $G=A_5$ and $H=S_3$ is a simple example where $H$ is not normal in $G$ and both groups are non-abelian. Let us consider $A_5$ to be the group of even permutations of $\{1,2,3,4,5\}$, and $S_3\subset A_5$ to be the subgroup preserving the decomposition $\{1,2,3,4,5\}=\{1,2,3\}\sqcup\{4,5\}$. The character table of $A_5$ is
\be\label{ctblA5} \begin{array}{ccccc}
	  1A & 2A & 3A & 5A & 5B\\
\hline	  1  & 1 & 1 & 1 & 1        \\
	  3 & -1 & 0 & -b_5 & -b_5^*\\
	  3 & -1 & 0 & -b_5^* & -b_5\\
	  4 & 0 &1 & -1 & -1\\
	  5 & 1 & -1 & 0 & 0
\end{array}
\ee where $b_5=\frac{-1-\sqrt{5}}{2}$ and $b_5^*=\frac{-1+\sqrt{5}}{2}$.
We indicate the irreducible $V^G$-modules by $M_{nX,\rho}$, where $nX\in \{1A,2A,3A,5A,5B\}$ is a $G$-conjugacy class, and $\rho$ is an irreducible representation of the centralizer $C_G(g)$ in $G$ of a representative $g$ of $nX$. We are mostly interested in the $G$-conjugacy classes that contain some representative in $H$. In particular, the three elements $g_1,g_2,g_3\in H$ of order $2$ belong to the $G$-class $2A$, and the two elements $h,h^{-1}\in H$ of order $3$ belong to the $G$-class $3A$. For $g=e$, the irreducible representations of $C_G(e)\equiv G$ are denoted by $\{1,3,\bar 3,4,5\}$ according to their dimension. The decomposition of the irreducible $G$-representations into $H$ representations can be determined from the character tables \eqref{ctblS3} and \eqref{ctblA5}:
\be\label{A5toS3} 3,\bar3\rightarrow \rho_1+\rho_2\ ,\qquad 4\rightarrow\rho_0+\rho_1+\rho_2\ ,\qquad 5\rightarrow\rho_0+2\rho_2 \ ,
\ee where $\rho_0$, $\rho_1$, and $\rho_2$ are the three irreducible representations of $H=S_3$, namely the trivial, the non-trivial $1$-dimensional, and the $2$-dimensional one, respectively.

When $g\in G$ is in class $2A$, the centralizer $C_G(g)$ is $\langle g,k\rangle\cong \ZZ_2\times\ZZ_2$, for a certain element $k\in G$ that depends on $g$. We denote by $\rho_{\epsilon,\epsilon'}$, $\epsilon,\epsilon'\in\{0,1\}$ the four irreducible representations of $\langle g,k\rangle$; they are all $1$-dimensional, and defined by $\rho_{\epsilon,\epsilon'}(g)=(-1)^\epsilon$ $\rho_{\epsilon,\epsilon'}(k)=(-1)^{\epsilon'}$. In particular, for $g_i\in H$ of order $2$, the centralizer in $H$ is $C_H(g_i)=C_G(g_i)\cap H=\langle g_i\rangle\cong \ZZ_2$, so that $\rho_{00}$ and $\rho_{01}$ become trivial as representations of $C_H(g_i)$. Finally, for $g\in G$ in class $3A$, the centralizer $C_G(g)=\langle g\rangle\cong \ZZ_3$, and we denote by $\sigma_\gamma$, $\gamma\in \{1,\omega,\bar\omega\}$ the three irreducible $1$-dimensional representations with $\sigma_{\gamma}(g)=\gamma$.

The $V^G$-primary operators in $V/H$ are
\be \Phi_{1A,1}\ ,\quad \Phi_{1A,4}\ ,\quad \Phi_{1A,5}\ ,\quad \Phi_{2A,00},\quad \Phi_{2A,01},\quad \Phi_{3A,1}\quad \in V/H\ .
\ee The notation is as follows. The operators $\Phi_{nX,\rho}$ are labeled by the $V^G$-module $M_{nX,\rho}$ to which they belong, where $nX\in \{1A,2A,3A,5A,5B\}$ is a $G$-conjugacy class and $\rho$ is a representation of the centralizer in $G$ of an element of class $nX$. The $G$-classes $1A$, $2A$, and $3A$ are the only classes that contain elements in $H$. The fields $\Phi_{1A,4}\in W_{1A}^4$ and $\Phi_{1A,5}\in W_{1A}^5$ are the unique $H$-invariant primary operators in $V^G$-modules of type $M_{1A,4}$ and $M_{1A,5}$; the fact that there is exactly one such operator in $W_{1A}^4$ and $W_{1A}^5$ and no $H$-invariant operator in $W_{1A}^3$ or $W_{1A}^{\bar 3}$ follows from the multiplicities of the $H$-trivial representation $\rho_0$ in the decomposition \eqref{A5toS3}.
 The operators $\Phi_{2A,\epsilon\epsilon'}$, $\epsilon,\epsilon'\in\{0,1\}$  are the only $V^G$-primary operators in the $M_{2A,\rho_{\epsilon\epsilon'}}$ modules. In particular, $\rho_{00}$ and $\rho_{01}$ denote the two representations where the restriction of the centralizer to $H$ has a trivial action.

There are three distinct double cosets in $H\backslash G/H$, with representatives the identity ($6$ elements), the permutation $(3,4,5)$ ($36$ elements) and the permutation $(2,4)(3,5)$ ($18$ elements). As usual, the identity coset gives rise to the $Rep(S^3)$ category of Wilson lines, with twining partition functions
\begin{align} Z^{\rho_0}(\tau)=&\chi_{1A,1}(\tau)+\chi_{1A,4}(\tau)+\chi_{1A,5}(\tau)+\chi_{2A,00}(\tau)+\chi_{2A,01}(\tau)+\chi_{3A,1}(\tau)\\ Z^{\rho_1}(\tau)=&\chi_{1A,1}(\tau)+\chi_{1A,4}(\tau)+\chi_{1A,5}(\tau)-\chi_{2A,00}(\tau)-\chi_{2A,01}(\tau)+\chi_{3A,1}(\tau)\\
Z^{\rho_2}(\tau)=&2\chi_{1A,1}(\tau)+2\chi_{1A,4}(\tau)+2\chi_{1A,5}(\tau)-\chi_{3A,1}(\tau)\ .
\end{align}

The $H(3,4,5)H$ coset contains $6$ elements in class $2A$, $18$ in class $3A$, and $6$ in each of the classes $5A$ and $5B$. For any given coset representative $g'\in H(3,4,5)H$,  the only element of $H$ commuting with $g'$ is the identity. From formula \eqref{twinHgH}, it follows that the corresponding defect partition functions are integral integral multiples of 
\begin{align*} Z^{H(3,4,5)H,\rho_0}(\tau)&=\frac{1}{|H|}\sum_{g'\in H(3,4,5)H} Z_e^{g'}=\frac{1}{6}(6Z_e^{2A}+18Z_e^{3A}+6Z_e^{5A}+6Z_e^{5B})
\\&=6\chi_{1A,1}(\tau)+\chi_{1A,4}(\tau)-2\chi_{1A,5}(\tau)\ .
\end{align*} We denote by \be X:=\CL_{H(3,4,5)H,\rho_0}\ee this simple defect of dimension $\langle X\rangle=6$. 

The $H(2,4)(3,5)H$ coset consists of $6$ elements in each of the classes $2A$, $5A$, and $5B$. Each element of order $2$ in this coset commutes with exactly one element of order $2$ in $H$. For each $g_i\in H$ of order $2$, $i=1,2,3$, the centralizer $C_G(g_i)\cong \ZZ_2\times \ZZ_2$  contains two elements $k_i,g_ik_i\in G$ of order two that are not in $H$. Thus, the six elements of order $2$ in the coset  $H(2,4)(3,5)H$ are exactly $k_i,g_ik_i$, $i=1,2,3$. The elements of order $5$ in the coset commute only with the identity element in $H$. Therefore, using \eqref{twinHgH}, the twining partition functions are
\begin{align} Z^{H(2,4)(3,5)H,\rho_0}(\tau)&=\frac{1}{6}[6Z_e^{2A}+6Z_e^{5A}+6Z_e^{5B}+\sum_{i=1}^3(Z^{g_i}_{k_i}+Z^{g_i}_{g_ik_i})]\\
&=3\chi_{1A,1}(\tau)-2\chi_{1A,4}(\tau)+\chi_{1A,5}(\tau)+\chi_{2A,00}(\tau)+\chi_{2A,01}(\tau)
\end{align}
\begin{align} Z^{H(2,4)(3,5)H,\rho_1}(\tau)&=\frac{1}{6}[6Z_e^{2A}+6Z_e^{5A}+6Z_e^{5B}-\sum_{i=1}^3(Z^{g_i}_{k_i}+Z^{g_i}_{g_ik_i})]\\ &=3\chi_{1A,1}(\tau)-2\chi_{1A,4}(\tau)+\chi_{1A,5}(\tau)-\chi_{2A,00}(\tau)-\chi_{2A,01}(\tau)
\end{align}
\be Z^{H(2,4)(3,5)H,\rho_2}(\tau)=\frac{2}{6}(6Z_e^{2A}+6Z_e^{5A}+6Z_e^{5B})=Z^{H(2,4)(3,5)H,\rho_0}(\tau)+Z^{H(2,4)(3,5)H,\rho_1}(\tau)
\ee
We denote  by
\be Y_+:=\CL_{H(2,4)(3,5)H,\rho_0}\ ,\qquad Y_-:=\CL_{H(2,4)(3,5)H,\rho_1}\ ,
\ee the two defects of dimension $\langle Y_\pm\rangle=3$. Clearly, the defect $\CL_{H(2,4)(3,5)H,\rho_2}=Y_++Y_-$ is not simple.

All in all, we have six defects: $\CL_{\rho_0}\equiv\CI$, $\CL_{\rho_1}$, $\CL_{\rho_2}$, $Y_\pm$, and $X$. Because each $V^G$-module appears with multiplicity at most $1$ in the decomposition of $V/H$,  the twining partition functions $Z^\CL$ are sufficient to uniquely determine  the eigenvalues of each $\hat\CL$ acting on $V/H$:
\be\label{eq:6defects}
\begin{array}{c|cccccc}
	 &\Phi_{1A,1} & \Phi_{1A,4}& \Phi_{1A,5}& \Phi_{2A,00} &\Phi_{2A,01}& \Phi_{3A,1}
	\\
	\hline
	 \CI & 1 & 1 & 1 & 1 & 1 & 1 \\
	\CL_{\rho_1} & 1 & 1 & 1 & -1 & -1 & 1\\
	\CL_{\rho_2} & 2 & 2 & 2 & 0 & 0 & -1\\
	Y_+ & 3 & -2 & 1 & 1 & -1 & 0\\
	Y_- & 3 & -2 & 1 & -1 & 1 & 0\\
	X & 6 & 1 & -2 & 0 & 0 & 0
	\end{array}
	\ee
From these eigenvalues, we can determine the fusion ring:
\be \label{eq:fus1}\CL_{\rho_1}^2=\CI\ ,\qquad \CL_{\rho_1}\CL_{\rho_2}=\CL_{\rho_2}\CL_{\rho_1}=\CL_{\rho_2}\ ,\qquad \CL_{\rho_2}^2=\CI+\CL_{\rho_1}+\CL_{\rho_2}\ ,
\ee
\be \label{eq:fus2} Y_{\pm}^2=\CI+\CL_{\rho_2}+X\ ,\qquad \CL_{\rho_1}Y_{\pm}=Y_{\pm}\CL_{\rho_1}=Y_{\mp}\ ,\qquad Y_+Y_-=Y_-Y_+=\CL_{\rho_1}+\CL_{\rho_2}+X\ ,
\ee
\be \label{eq:fus3} \CL_{\rho_2}Y_{\pm}=Y_{\pm}\CL_{\rho_2}=Y_++Y_-\ ,\qquad X^2=\CI+\CL_{\rho_1}+2\CL_{\rho_2}+2Y_++2Y_-+3X\ee 
\be \label{eq:fus4} X\CL_{\rho_1}=\CL_{\rho_1}X=X\ ,\qquad \CL_{\rho_2}X=X\CL_{\rho_2}=2X\ ,\qquad Y_{\pm}X=XY_{\pm}=Y_++Y_-+2X\ .
\ee
The fusion ring is commutative and  the six defects are simple and unoriented. As a consistency check, note that the dimension of the category is
\be
\langle \CI\rangle^2+\langle \CL_{\rho_1}\rangle^2+\langle \CL_{\rho_2}\rangle^2+\langle Y_+\rangle^2+\langle Y_-\rangle^2+\langle X\rangle^2=1^2+1^2+2^2+3^2+3^2+6^2=60\ ,
\ee which matches with the dimension $|G|=60$ of the category before the orbifold.

\section{Defects and twining functions in K3 sigma models}
\label{s:K3defects}
In this section we provide some evidence for Conjecture \ref{conj:K3relation} by constructing a set of non-invertible defects in K3 sigma models and by comparing them with non-invertible defects  in $\cTop$.  

Our main example of a K3 NLSM will be the GTVW model $\calC_{\GTVW}$ studied in \cite{Gaberdiel:2013psa,Harvey:2020jvu}, and some of its orbifolds. The GTVW model is a rational SCFT, whose bosonic chiral and antichiral algebras $\CA$ and $\bar\CA$ are isomorphic to $(\widehat{su}(2)_1)^6$. Recall that $\widehat{su}(2)_1$ has two unitary irreducible representations of respective conformal weight $0$ and $1/4$, that we denote by $[0]$ and $[1]$, respectively. The irreducible representations form a group $\ZZ_2\cong\{0,1\}$ with respect to fusion. We will denote representations of $\CA\otimes \bar\CA$  by
\be [a_1,\ldots,a_6;b_1,\ldots,b_6]\ ,
\ee with $a_i,b_i\in \ZZ_2=\{0,1\}$. The even (bosonic) and odd (fermionic) NSNS sectors of $\calC_{\GTVW}$ contain, respectively, the following $\CA\otimes\bar\CA$-representations 
\be (\NSNS)_+:\qquad \{[a_1,\ldots,a_6;b_1,\ldots,b_6]\in \ZZ_2^{12}\mid a_i\equiv b_i,\ \sum_i a_i\equiv 0\mod 2\}
\ee
\be (\NSNS)_-:\qquad \{[a_1,\ldots,a_6;b_1,\ldots,b_6]\in \ZZ_2^{12}\mid a_i\equiv b_i+1,\ \sum_i a_i\equiv 0\mod 2\}
\ee while the even and odd RR sectors contain
\be (\HRR)_+:\qquad \{[a_1,\ldots,a_6;b_1,\ldots,b_6]\in \ZZ_2^{12}\mid a_i\equiv b_i,\ \sum_i a_i\equiv 1\mod 2\}
\ee
\be (\HRR)_-:\qquad \{[a_1,\ldots,a_6;b_1,\ldots,b_6]\in \ZZ_2^{12}\mid a_i\equiv b_i+1,\ \sum_i a_i\equiv 1\mod 2\}\ .
\ee
The full group of symmetries of the SCFT is $(SU(2)^6\times SU(2)^6)\rtimes S_6$, where the two $SU(2)^6$ factors  are generated by the zero modes of, respectively, the holomorphic and anti-holomorphic currents,\footnote{Strictly speaking, there is a $\ZZ_2^6$ subgroup of the center $\ZZ_2^6\times\ZZ_2^6\subset SU(2)^6\times SU(2)^6 $ that acts trivially on all fields of the theory, so the group acting faithfully is a quotient of $(SU(2)^6\times SU(2)^6)\rtimes S_6$.} while $S_6$ is the group acting diagonally on the chiral and on the anti-chiral algebra by permutation of the six $\widehat{su}(2)_1$ factors.

The GTVW model contains many copies of the $\CN=(4,4)$ superconformal algebra. We consider the standard choice described in \cite{Gaberdiel:2013psa,Harvey:2020jvu}, where the holomorphic and antiholomorphic $\CN=4$ SCAs contain the first $\widehat{su}(2)_1$ factor in $\CA$ and $\bar\CA$, respectively. With this choice, the spectral flow generators are the four ground fields in the representation $[1,0,0,0,0,0;1,0,0,0,0,0]$ in the even RR sector.

The group of symmetries fixing the $\CN=(4,4)$ SCA and the spectral flow is a finite subgroup $G_{\GTVW}$ of $(SU(2)^6\times SU(2)^6)\rtimes S_6$ isomorphic to $\ZZ_2^8\rtimes M_{20}$ \cite{Gaberdiel:2013psa}, where $M_{20}\cong \ZZ_2^4\rtimes A_5$ is one of the Mathieu groups. The normal subgroup $\ZZ_2^8\rtimes \ZZ_2^4$ of $G_{\GTVW}$ is contained in $SU(2)^6\times SU(2)^6$, and the quotient $G_{\GTVW}/(\ZZ_2^8\rtimes \ZZ_2^4)\cong A_5$ acts on the $(\widehat{su}(2)_1)^6$ chiral and antichiral algebras by even permutations of the last five $\widehat{su}(2)_1$ factors. The quotient $G_{\GTVW}/(\ZZ_2^8\rtimes \ZZ_2^4)\cong A_5$ can be lifted to a subgroup of $G_{\GTVW}$ isomorphic to $A_5$ (i.e. $G_{\GTVW}$ is a split extension of $A_5$ by $\ZZ_2^8\rtimes \ZZ_2^4$). In section \ref{s:S3orbGTVW}, the results of sections \ref{s:orbSVOAnonholom} and \ref{s:A5S3} will be applied to this $A_5$ subgroup of $G_{\GTVW}$, and in section \ref{s:K3matching}, we demonstrate an explicit matching of defect twining genera with a corresponding category of defects in $V^{f\natural}$.

\subsection{The $S_3$ orbifold of the GTVW model}\label{s:S3orbGTVW}

Let $\calC\equiv \mathcal{C}_{\GTVW}$ be the K3 NLSM in \cite{Gaberdiel:2013psa}. We consider the subgroup
\be G:=\langle g_2,g_3,g_5\rangle \cong A_5
\ee
of $\ZZ_2^8\rtimes M_{20}$ generated by elements $g_2$, $g_3$, and $g_5$ acting on the $\widehat{su}(2)_1$ factors of both the holomorphic and antiholomorphic  $(\widehat{su}(2)_1)^6$ algebra by permutations
\be g_2\equiv (34)(56)\ ,\qquad  g_3\equiv (265)\ ,\qquad g_5\equiv (25463)\ .
\ee More precisely, $g_2$ is a pure permutation, while $g_3$ and $g_5$ are obtained by composing the permutation above followed by a $SU(2)^6\times SU(2)^6$ transformation, acting diagonally both in for the holomorphic and antiholomorphic sector by $(1,1,\Omega,\Omega^\dag,1,1)\in SU(2)^6$. Here, 
\be \Omega=\begin{pmatrix}
 \frac{1-i}{2} & \frac{1+i}{2}\\
 -\frac{1-i}{2} & \frac{1+i}{2}
 \end{pmatrix}\ ,\qquad \Omega^{\dag}=\begin{pmatrix}
 \frac{1+i}{2} & \frac{1+i}{2}\\
 \frac{1-i}{2} & \frac{1-i}{2}
 \end{pmatrix}
 \ee satisfy $\Omega^3=-1$, so that $\Omega^6=1$ and $\Omega^\dag=-\Omega^2$ (see \cite{Harvey:2020jvu,Angius:2024evd}).

The group $G$ is non--anomalous so the orbifold by any subgroup is a consistent K3 NLSM.\footnote{As will be clear from the calculation of the $g$-twisted partition functions below, none of the NSNS $g$-twisted sectors contains holomorphic fields of spin $1/2$, and this excludes that the orbifold of $\calC_{\GTVW}$ by any subgroup of $G$ is a sigma model on $T^4$. }  In particular, we consider the orbifold $\mathcal{C}'=\mathcal{C}_{\GTVW}/H$, where $H\subset G$ is the subgroup 
\be H:=\langle g_2,g_3\rangle\cong S_3\ , \ee
generated by $g_2$ and $g_3$.
Our goal is to show that the orbifold theory $\mathcal{C}'$ admits a set of non--invertible defects corresponding to the double cosets $H\backslash G/H$ as described in section \ref{s:A5S3}. In the following section, we will prove that the corresponding twining genera coincide with the ones obtained by the analogous construction for $V^{f\natural}$.

In order to apply the general formulae of section \ref{s:orbifolds}, we need to compute both the $g$-twisted $h$-twined partition functions $\CZ_g^h$ and elliptic genera $\phi_g^h$ for all commuting pairs $g,h\in G$. It is convenient to define the flavoured NSNS $g$-twisted  $h$-twined partition functions which keep track of the eigenvalues of $J_0^3$ and $\bar J_0^3$,  i.e. the zero modes of a holomorphic and an anti-holomorphic current in the $\CN=4$ superconformal algebra,
\be Z^{h,\pm}_{g,\NS}(\tau,z,\bar z)=\Tr_{\Hh_{g,\NSNS}}(hq^{L_0-\frac{c}{24}}\bar q^{\bar L_0-\frac{\bar c}{24}}  e^{2\pi i zJ_0^3} e^{-2\pi i\bar z\bar J_0^3}(\pm 1)^{F+\bar F})\ ,\ee
where 
$\Hh_{g,\NSNS}$ is the  $\NSNS$ $g$-twisted sector. In the following, we will often oscillate between considering $z$ and $\bar z$ as complex conjugate of each other or as independent complex variables; hopefully, this should be clear from the context.   The corresponding RR sector flavoured partition functions can be  obtained from the NSNS ones using the spectral flow isomorphism that commutes with $G$. The components of the `unflavoured' partition functions $\CZ_g^h(\calC,\tau)=(Z_{g,\NS}^{h,+}(\tau),Z_{g,\NS}^{h,-}(\tau),Z_{g,\R}^{h,+}(\tau),Z_{g,\R}^{h,-}(\tau))^t$ in eqs.\eqref{eq:TwistTwineNSK3} and \eqref{eq:TwistTwineRK3} can be obtained from the flavoured ones  by setting $z=\bar z=0$, while the elliptic genera of eq.\eqref{K3twisttwin} are given by setting $\bar z=0$ in the fourth component $\phi_g^h(\calC,\tau,z)=Z_{g,\R}^{h,-}(\tau,z,0)$.

Let us start with the untwisted ($g=e$) $h$-twined partition functions.  They only depend on the $G$-conjugacy class $[h]\in \{1A,2A, 3A, 5A,5B\}$, and from the explicit description of the action \cite{Gaberdiel:2013psa,Harvey:2020jvu,Angius:2024evd} it follows that:
\begin{equation}
	\begin{split}
		 Z^{e,\pm}_{e,\NS}=&  \frac{1}{\vert \eta (\tau) \vert^{12}} \left[ \Big\vert \theta_3 (2 \tau, 2z) \theta_3 (2 \tau)^5 \pm \theta_2 (2 \tau, 2z) \theta_2 (2 \tau)^5 \Big\vert^2 +\right. \\ 
		& \left. + 5 \Big\vert \theta_3 (2 \tau, 2 z) \theta_3 (2 \tau) \theta_2 (2 \tau)^4 \pm \theta_2 (2 \tau, 2z) \theta_2 (2 \tau) \theta_3 (2 \tau)^4 \Big\vert^2 + \right. \\
		& \left. +  10 \Big\vert \theta_3 (2 \tau, 2 z) \theta_3 (2 \tau)^3 \theta_2 (2 \tau)^2 \pm \theta_2 (2 \tau, 2z) \theta_2 (2 \tau)^3 \theta_3 (2 \tau)^2 \Big\vert^2 \right]\ , \\
	\end{split}
	\label{Zee_theta}
\end{equation}
\begin{equation}
	\begin{split}
		Z^{g_2,\pm}_{e,\NS}  =&   \frac{1}{\vert \eta (\tau) \vert^4 \vert \eta (2 \tau) \vert^4}  \left\lbrace \Big\vert \theta_3 (2 \tau , 2z) \theta_3 (2 \tau) \theta_3 (4 \tau)^2 \pm \theta_2 (2 \tau, 2 z) \theta_2 (2 \tau) \theta_2 (4 \tau)^2 \Big\vert^2 + \right. \\
		& \left.+ \Big\vert \theta_3 (2 \tau, 2 z) \theta_3 (2 \tau) \theta_2 (4 \tau)^2 \pm \theta_2 (2 \tau,  2z) \theta_2 (2 \tau) \theta_3 (4 \tau)^2   \Big\vert^2 + \right. \\
		& \left. +2 \Big\vert \theta_3 (2 \tau,  2z) \theta_3 (2 \tau) \theta_3 (4 \tau) \theta_2 (4 \tau) \pm \theta_2 (2 \tau, 2 z) \theta_2 (2 \tau) \theta_2 (4 \tau) \theta_3 (4 \tau) \Big\vert^2  \right\rbrace\ , \\
	\end{split}
	\label{g2_twining_theta}
\end{equation}

\begin{equation}
		\begin{split}
		Z^{g_3,\pm}_{e,\NS} =  & \frac{1}{\vert \eta (3 \tau) \vert^2 \vert \eta (\tau) \vert^{6}} \left\lbrace  \Big\vert \theta_3 (2 \tau, 2z) \theta_3 (6 \tau) \theta_3 \left( 2 \tau, \frac{1}{3} \right)^2 \pm \theta_2 (2 \tau, 2z) \theta_2 (6 \tau) \theta_2 \left( 2 \tau, \frac{1}{3} \right)^2 \Big\vert^2 +\right. \\
		& \left.+2 \Big\vert \theta_3 (2 \tau, 2z) \theta_2 (6 \tau) \theta_2 \left( 2 \tau, \frac{1}{3} \right) \theta_3 \left( 2 \tau, \frac{1}{3} \right) \pm\theta_2 (2 \tau, 2z) \theta_3 (6 \tau) \theta_3 \left( 2 \tau, \frac{1}{3} \right) \theta_2 \left( 2 \tau, \frac{1}{3} \right)  \Big\vert^2 +\right. \\ 
		& \left. +  \Big\vert \theta_3 (2 \tau, 2z) \theta_3 (6 \tau) \theta_2 \left( 2 \tau, \frac{1}{3} \right)^2 \pm \theta_2 (2 \tau, 2z) \theta_2 (6 \tau) \theta_3 \left( 2 \tau, \frac{1}{3} \right)^2 \Big\vert^2 \right\rbrace\ , 
	\end{split}\label{g3_twined_theta}
\end{equation}
\begin{equation}
		Z^{g_5,\pm}_{e,\NS} =  \frac{ \Big\vert \theta_3 (2 \tau , 2z) \theta_3 (10 \tau)  \pm \theta_2 (2 \tau, 2 z) \theta_2 (10 \tau) \Big\vert^2 }{ \vert \eta ( \tau) \vert^2\vert \eta (5 \tau) \vert^2}  \ .
	\label{g5_twining_theta}
\end{equation}
Note that all of the twining partition functions $Z^{g,\pm}_{e,\NS}$, $g\in G$ have the form\footnote{This is true because the elements of $G\cong A_5$ act in the same way on the holomorphic and anti-holomorphic fields. Note, however, that for generic elements of the symmetry group $\ZZ_2^8\rtimes M_{20}$ of $\calC_{\GTVW}$ this formula may not hold.}
\be\label{gtwinNSNS} Z^{g,\pm}_{e,\NS} (\tau,z,\bar z)=\sum_i\left\vert \ch_0(\tau,z)f^g_{i,b}(\tau)\pm \ch_1(\tau,z)f^g_{i,f}(\tau)\right\vert^2\ ,\ee where
\be \ch_0(\tau,z)=\frac{\theta_3 (2 \tau, 2z)}{\eta(\tau)}\ ,\qquad \ch_1(\tau,z)=\frac{\theta_2 (2 \tau, 2z)}{\eta(\tau)}
\ee  are the characters of the two irreducible unitary $\widehat{su}(2)_1$ representations $[0]$ and $[1]$
and $f^g_{1,b}(\tau)$, $f^g_{1,f}(\tau)$, $f^g_{2,b}(\tau)$, $f^g_{2,f}(\tau)$, $\ldots$ are suitable functions. 
Noting that the S-transformation of the $\widehat{su}(2)_1$ characters is given by the matrix
\be\nonumber
{1\over \sqrt 2} \begin{pmatrix} 1 & 1 \\ 1 & -1\end{pmatrix}
\ee
acting on $(\ch_0, \ch_1)^t$, we get the following $g$-twisted NSNS characters $Z^{e,\pm}_{g,\NS}$
\be \label{gtwistNSNS} Z^{e,\pm}_{g,\NS}(\tau,z,\bar z)=\sum_i\left\vert \frac{\theta_3 (2 \tau, 2z)}{\eta(\tau)}f_{g,i,b}(\tau)\pm \frac{\theta_2 (2 \tau, 2z)}{\eta(\tau)}f_{g,i,f}(\tau)\right\vert^2\ ,
\ee where
\begin{align*}
	f_{g,i,b}(\tau)&=\frac{f^g_{i,b}(-1/\tau)+f^g_{i,f}(-1/\tau)}{\sqrt{2}}, & f_{g,i,f}(\tau)&=\frac{f^g_{i,b}(-1/\tau)-f^g_{i,f}(-1/\tau)}{\sqrt{2}}.
\end{align*} Here, we used the fact that the S-transformation of the NSNS partition functions without $(-1)^{F+\bar F}$ insertion (i.e., the $(\NSNS,+)$  spin structure on the torus) is still a partition function with the same torus spin structure $(\NSNS,+)$. Furthermore, in these theories one has that $(-1)^{F+\bar F}=(-1)^{J_0^3+\bar J_0^3}$, and $(-1)^{J_0^3}$ acts by $(-1)^r$ on the $\widehat{su}(2)_1$ representation $[r]$, $r\in \{0,1\}$.

More generally, the $g$-twisted $g^k$-twined partition function $Z^{g^k,\pm}_{g,\NS}$ is given by\footnote{Note that the partition function $\CZ_e^e(\calC,\tau)$ of a non-linear sigma model on K3 transforms in a different $SL(2,\ZZ)$-representation than the one on $V^{f\natural}$, and in particular $\rho(T)$ in \eqref{modularspin} should be replaced by $\left(\begin{smallmatrix}
    0 & 1 & 0 & 0\\ 1 & 0 & 0 & 0\\ 0 & 0 & 1 & 0\\ 0 & 0& 0& 1
\end{smallmatrix}\right)$. The reason is that the factor $e^{2\pi i\frac{c-\bar c}{24}}$ equals $1$ for K3 models and $-1$ for $V^{f\natural}$. }
\be\label{ggk} Z^{g^k,\pm}_{g,\NS}(\tau,z,\bar z)=\sum_i\left\vert \frac{\theta_3 (2 \tau, 2z)}{\eta(\tau)}f_{g,i,b}(\tau+k)\pm \frac{\theta_2 (2 \tau, 2z)}{\eta(\tau)}f_{g,i,f}(\tau+k)\right\vert^2\ .
\ee These formulae provide the NSNS partition functions $Z^{h,\pm}_{g,\NS}$ for almost all commuting pairs $g,h\in G$. The only exceptions are the cases where $g$ is in class 2A, for example $g=g_2$, where the centralizer of $g_2$ is $C_G(g_2)=\langle g_2,k_2\rangle \cong\ZZ_2\times\ZZ_2$, for some suitable $k_2\in G$ in class 2A, so that one can have $Z^{h,\pm}_{g_2,\NS}$ with $h=k_2$ or $h=k_2g_2$. It turns out that there is a single twisted-twining partition function to consider, namely
\be\label{Zg2p2}
	Z^{k_2,\pm}_{g_2,\NS}(\tau,z,\bar z)=Z^{g_2k_2,\pm}_{g_2,\NS}(\tau,z,\bar z)=\frac{|\theta_3(2\tau,2z)\theta_3(2\tau)\pm\theta_2(2\tau,2z)\theta_2(2\tau)|^2}{|\eta(\tau)|^6}
	\left(|\theta_3(2\tau)|^2+|\theta_2(2\tau)|^2\right)\ .
\ee The computation is in appendix \ref{a:Zg2p2}.

The twisted--twining partition functions in the RR sector are obtained from the NSNS expressions \eqref{gtwinNSNS}, \eqref{gtwistNSNS}, \eqref{ggk}, and \eqref{Zg2p2} by replacing the $z$-dependent factors according to the rules
\be \theta_3 (2 \tau , 2z) \to \pm \theta_2 (2 \tau , 2z)\ ,\qquad \pm\theta_2 (2 \tau , 2z) \to + \theta_3 (2 \tau , 2z)\ .
\ee This is because spectral flow between the NSNS and RR sectors simply exchanges the $\widehat{su}(2)_1$ representations $[0]\leftrightarrow [1]$ on both the holomorphic and the antiholomorphic side. 
For example, 
\be\label{ggkRR} Z^{g^k,\pm}_{g,\R}(\tau,z,\bar z)=\sum_i\left\vert \frac{\theta_3 (2 \tau, 2z)}{\eta(\tau)}f_{g,i,f}(\tau+k)\pm \frac{\theta_2 (2 \tau, 2z)}{\eta(\tau)}f_{g,i,b}(\tau+k)\right\vert^2\ ,
\ee
and
\be\label{g2p2RR}
Z^{k_2,\pm}_{g_2,\R}(\tau,z,\bar z)=\frac{|\theta_3(2\tau,2z)\theta_2(2\tau)\pm \theta_2(2\tau,2z)\theta_3(2\tau)|^2}{|\eta(\tau)|^6}
\left(|\theta_3(2\tau)|^2+|\theta_2(2\tau)|^2\right)\ .
\ee Eqs.\eqref{ggk},\eqref{Zg2p2},\eqref{ggkRR}, and \eqref{g2p2RR} provide all components of the vector-valued partition functions $\CZ_g^h(\tau)$ (by setting $z=\bar z=0$) and the twisted--twining genera $\phi_g^h(\tau,z)$ (upon setting $\bar z=0$ in the component with  the $(\HRR,-)$ spin structure) for all nontrivial commuting pairs $g,h\in G$.

Let us now consider the orbifold K3 sigma model $\calC':=\calC/H$. Using eqs.\eqref{stwinHgH} and \eqref{stwistHgH}, we can compute  the $\CL$-twisted and $\CL$-twined partition functions \be \CZ'_{\CL}(\tau):= \CZ_{\CL}(\calC',\tau)\ ,\qquad  \CZ'^{\CL}(\tau):= \CZ^{\CL}(\calC',\tau)\ ,\ee  for all topological defects $\CL$ of $\calC'$ that preserve the fields in the $G$-invariant subspace $(\Hh_{\calC})^G\subset \Hh_{\calC'}$.  In particular, the partition function $\CZ'(\tau):= \CZ(\calC',\tau)=\CZ^e_e(\calC',\tau)$ of $\calC'$ is given by 
\begin{align*} \CZ'=\frac{1}{6}(\CZ^e_e+3\CZ_e^{g_2}+2\CZ_e^{g_3})+\frac{1}{2} (\CZ_{g_2}^{e}+\CZ_{g_2}^{g_2})+\frac{1}{3}(\CZ_{g_3}^{e}+\CZ_{g_3}^{g_3}+\CZ_{g_3}^{g_3^2})\ ,
\end{align*}
where here and in the discussion below we suppress the dependence on $\tau$ for simplicity.
Using the formulas of section \ref{s:A5S3}, we get the following $\CL$-twined partition functions for the five non-trivial simple defects listed in the table in \eqref{eq:6defects}
\begin{subequations}\label{GTVWtwin}
\begin{align}\label{twin1}
    \CZ'^{\CL_{\rho_1}}&=\frac{1}{6}(\CZ^e_e+3\CZ_e^{g_2}+2\CZ_e^{g_3})-\frac{1}{2} (\CZ_{g_2}^{e}+\CZ_{g_2}^{g_2})+\frac{1}{3}(\CZ_{g_3}^{e}+\CZ_{g_3}^{g_3}+\CZ_{g_3}^{g_3^2})\\
    \label{twin2}\CZ'^{\CL_{\rho_2}}&=\frac{1}{3}(\CZ^e_e+3\CZ_e^{g_2}+2\CZ_e^{g_3})-\frac{1}{3}(\CZ_{g_3}^{e}+\CZ_{g_3}^{g_3}+\CZ_{g_3}^{g_3^2})\\
    \label{twin3}\CZ'^{X}&=\CZ^{g_2}_e+3\CZ_e^{g_3}+2\CZ^{g_5}_{e}\\
    \label{twin4}\CZ'^{Y_{\pm}}&=\CZ^{g_2}_e+2\CZ_e^{g_5}\pm \CZ_{g_2}^{k_2}\ ,
\end{align}\end{subequations} while the $\CL$-twisted partition functions are
\begin{subequations}\label{GTVWtwist}\begin{align}
    \label{twist1}\CZ'_{\CL_{\rho_1}}&=\frac{1}{6}(\CZ^e_e+3\CZ^e_{g_2}+2\CZ^e_{g_3})-\frac{1}{2} (\CZ^{g_2}_{e}+\CZ_{g_2}^{g_2})+\frac{1}{3}(\CZ^{g_3}_{e}+\CZ^{g_3}_{g_3}+\CZ^{g_3}_{g_3^2})\\
    &=\frac{1}{6}(\CZ^e_e-3\CZ_e^{g_2}+2\CZ_e^{g_3})+\frac{1}{2} (\CZ_{g_2}^{e}-\CZ_{g_2}^{g_2})+\frac{1}{3}(\CZ_{g_3}^{e}+\CZ^{g_3}_{g_3}+\CZ_{g_3}^{g_3^2})\notag\\
    \label{twist2}\CZ'_{\CL_{\rho_2}}&=\frac{1}{3}(\CZ^e_e+3\CZ^e_{g_2}+2\CZ^e_{g_3})-\frac{1}{3}(\CZ^{g_3}_{e}+\CZ^{g_3}_{g_3}+\CZ^{g_3}_{g_3^2})\\
    &=\frac{1}{3}(\CZ^e_e-\CZ_e^{g_3})+\CZ^e_{g_2}+\frac{1}{3}(2\CZ_{g_3}^{e}-\CZ^{g_3}_{g_3}-\CZ_{g_3}^{g_3^2})\notag\\
    \label{twist3}\CZ'_{X}&=\CZ_{g_2}^e+3\CZ^e_{g_3}+2\CZ_{g_5}^{e}\\
    \label{twist4}\CZ'_{Y_{\pm}}&=\CZ_{g_2}^e+2\CZ^e_{g_5}\pm \CZ^{g_2}_{k_2}=\CZ_{g_2}^e\pm \CZ^{k_2}_{g_2}+2\CZ^e_{g_5}\ .
\end{align}\end{subequations} where we used the identities $\CZ_{g_3^2}^{g_3}=\CZ_{g_3}^{g_3^2}$ and $\CZ^{g_2}_{k_2}=\CZ^{g_2}_{k_2}$, which follow because $g_3,g_3^2$ are in the same $G$-conjugacy class 3A and $g_2,k_2$ are in the same $G$-conjugacy class 2A.
As a consistency check, one can show that all the $\CL$-twisted partition functions $\CZ'_{\CL}$ have integral Fourier coefficients, and that such coefficients are non-negative for the $(\NSNS,+)$ and $(\HRR,+)$ spin structures, as expected for the partition function of the $\CL$-twisted sector $\Hh'_{\CL}$. This is obvious for $\CZ'_{X}$, being a linear combination of partition functions for twisted sectors of $\calC$ with non-negative integral coefficients. As for the other cases, we note that all linear combinations
\be \frac{1}{6}(\CZ^e_e-3\CZ_e^{g_2}+2\CZ_e^{g_3})\qquad \frac{1}{2} (\CZ_{g_2}^{e}-\CZ_{g_2}^{g_2})\qquad \frac{1}{3}(\CZ^{g_3}_{e}+\CZ^{g_3}_{g_3}+\CZ^{g_3}_{g_3^2})
\ee
\be \frac{1}{3}(\CZ^e_e-\CZ_e^{g_3})\qquad \frac{1}{2}(\CZ_{g_2}^e\pm \CZ^{k_2}_{g_2})
\ee
\be \frac{1}{3}(2\CZ_{g_3}^{e}-\CZ^{g_3}_{g_3}-\CZ_{g_3}^{g_3^2})=\frac{1}{3}(\CZ_{g_3}^{e}+\omega\CZ^{g_3}_{g_3}+\omega^2\CZ_{g_3}^{g_3^2})+\frac{1}{3}(\CZ_{g_3}^{e}+\omega^2\CZ^{g_3}_{g_3}+\omega\CZ_{g_3}^{g_3^2})
\ee where $\omega=e^{\frac{2\pi i}{3}}$, correspond to partition functions for some subspace of states in a $g$-twisted sector, transforming in a particular representation of $C_G(g)$.

Let us now consider the $\CL$-twined and $\CL$-twisted genera $\phi'^{\CL}$ and $\phi'_\CL$ in $\calC'$. The formulae have exactly the same structures as for $\CZ'^\CL$ and $\CZ'_\CL$, but there are some simplifications. In particular, by setting $\bar z=0$ in the expression \eqref{g2p2RR} for $Z^{k_2,-}_{g_2,\R}(\tau,z,\bar z)$ one gets
\be \phi_{g_2}^{k_2}(\tau,z)=Z^{k_2,-}_{g_2,\R}(\tau,z,0)=0\ ,
\ee because of an overall factor $\lim_{\bar z\to 0}(\overline{\theta_2(2\tau,2z)\theta_3(2\tau)}-\overline{\theta_3(2\tau,2z)\theta_2(2\tau)})=0$. Furthermore, we have the identities
\be\label{simpleg2} \frac{1}{2}(\phi^e_{g_2}(\tau,z)+\phi_{g_2}^{g_2}(\tau,z))=\frac{1}{2}(\phi_e^e(\tau,z)-\phi_e^{g_2}(\tau,z))
\ee and 
\be\label{simpleg3} \frac{1}{3}(\phi_{g_3}(\tau,z)+\phi_{g_3}^{g_3}(\tau,z)+\phi_{g_3}^{g_3^2}(\tau,z))=\frac{1}{3}(\phi_e^e(\tau,z)-\phi_e^{g_3}(\tau,z))\ .
\ee  These equations hold because the orbifold $\calC/\langle g\rangle$ of $\calC$ by a symmetry $g=g_2$ or $g=g_3$ is still a consistent K3 NLSM, and therefore the elliptic genus of the orbifold theory must coincide with the elliptic genus of K3, i.e  \be \frac{1}{N}\sum_{j,k=0}^{N-1}\phi_{g^j}^{g^k}=\phi_e^e(\tau,z)\ ,\qquad N=2,3\ .\ee

Therefore, the $\CL$-twined genera ${\phi'}^{\CL}$ in $\calC'$ are simply given by 
\be\label{twinCprime1} {\phi'}^{\CI}=\phi\ ,\qquad  {\phi'}^{\CL_{\rho_1}}=\phi_e^{g_2}\ ,\qquad {\phi'}^{\CL_{\rho_2}}=\phi_e^{g_2}+\phi_e^{g_3}\ ,
\ee
\be\label{twinCprime2} {\phi'}^{Y_+}={\phi'}^{Y_-}=\phi_e^{g_2}+2\phi_e^{g_5}\ ,\qquad {\phi'}^{X}=\phi_e^{g_2}+3\phi_e^{g_3}+2\phi_e^{g_5}\ ,
\ee
where we have suppressed the $(\tau,z)$ dependence for simplicity.
Let us comment on this result. The first line consists of the twining genera for the three Wilson lines of the orbifold group $H\cong S_3$, i.e. the (non-invertible) quantum symmetry. Because the identity and $\CL_{\rho_1}$ are invertible defects,  their $\CL$-twined genera coincide precisely with the twining genera for elements in class 1A and 2A of the Conway group. The $\CL$-twined genera for non-invertible defects $\CL$ are given in all cases by non-negative integral linear combinations of the standard $Co_0$ twining genera described in section \ref{s:Conwaytwining}. This fact ensures that the $\CL$-twisted genera $\phi'_\CL$ automatically satisfy the expected integrality conditions. 

\subsection{Comparison with non-invertible defects in $V^{f\natural}$}\label{s:K3matching}

Let us now consider the SVOA $V^{f\natural}$, and let us consider the groups of symmetries $H\cong S_3$ and $G\cong A_5$ that preserve the $\CN=1$ supercurrent and such that
\be H\subset G\subset \ZZ_2^{8}\rtimes M_{20} \subset Co_0\cong \Aut_\tau(V^{f\natural})\ .
\ee  Here, the subgroup $\ZZ_2^8\rtimes M_{20}\subset Co_0$ is isomorphic to the group $G_\GTVW$ of symmetries of the GTVW model that preserve the $\CN=(4,4)$ superconformal algebra and the spectral flow as described above. As a subgroup of $Co_0$, it is unique up to conjugation \cite{HohnMason2016}.

The group $G$ is anomaly free, so that the orbifold by any of its subgroups, in particular by $H$, is consistent. In particular, one easily checks that $V':=V^{f\natural}/H$ is isomorphic to $V^{f\natural}$
\be V'\cong V^{f\natural}\ .
\ee While the parent and the orbifold theory are isomorphic, it is convenient to continue using different symbols $V,V'$ to denote $V^{f\natural}$ and $V^{f\natural}/H$, respectively. 
If we denote  the twisted-twining partition functions in the parent theory $V$ as
\be \CZ_g^h(\tau):=\CZ_g^h(V,\tau)\ ,\qquad g,h\in G\ ,
\ee  and the $\CL$-twisted and $\CL$-twined partition functions in the orbifold theory $V'=V/H$ as,
\be \CZ'_\CL(\tau):=\CZ_\CL (V',\tau)\ ,\qquad \CZ'^\CL(\tau):=\CZ^\CL (V',\tau)\ ,
\ee  where $\CL$ is any topological defect preserving the $G$-invariant subalgebra $V^G\subset V'$, then formulae \eqref{GTVWtwin} and \eqref{GTVWtwist} hold in this case as well.

Therefore, one just needs to compute  $\CZ_g^h(V, \tau)$ for all commuting pairs $g,h\in G$. The $g$-twined partition functions $\CZ^g_e(V,\tau)$ for all $g\in G$ depend only on the $Co_0$-conjugacy class of $g$, which in turn is fully determined by its set of eigenvalues on the $24$-dimensional representation of $Co_0$, or in terms of the Frame shape (see section \ref{s:Conwaytwining}).  In particular, the classes $1A$, $2A$, $3A$ and $5AB$ of $G$ correspond, respectively, to Frame shapes $1^{24}$, $1^82^8$, $1^63^6$, $1^45^4$ (for both classes of order $5$), and for all $g\in G$, $\CZ_e^g(V,\tau)$ is straightforwardly computed given this data using equations \eqref{eq:SymTwineNS} and \eqref{eq:SymTwineR}. 
By applying modular transformations to all $g$-twined functions, one can obtain all twisted-twining partition functions of the form $\CZ_g^{g^k}$, for all $g\in G$. The only missing ingredient is the $g_2$-twisted $k_2$-twined partition function $\CZ_{g_2}^{k_2}$ for $g_2$ and $k_2$ in class 2A of $G$, with $\langle g_2,k_2\rangle\cong \ZZ_2\times\ZZ_2$. 

Let us now prove that all components of $\CZ_{g_2}^{k_2}(V,\tau)$ vanish, i.e.
\be \CZ_{g_2}^{k_2}(V,\tau)=(0,0,0,0)^t\ .
\ee This is most easily proved by considering the corresponding symmetries in the theory $F(24)$ of $24$ free fermions, which is related to $V^{f\natural}$ by exchanging the NS odd and the Ramond even sector of the two theories. The $24$ free fermions generating $F(24)$ correspond to the Ramond ground states in $V^{f\natural}$, so that the action of $G$ on such fields is the the same as on ${}^\RR V^{f\natural}(1/2)$. 
Consider two commuting elements $g_2,k_2$ in class 2A of $G$. Then, the product $g_2k_2$ has also order two, and therefore it is in the same $G$-conjugacy class. Thus, when acting on the $24$-dimensional representation of $Co_0$, $g_2$, $k_2$ and $g_2k_2$ have all Frame shape $1^82^8$.
This means that if we simultaneously diagonalize $g_2$ and $k_2$ in the space of $24$ free fermions, the pairs of $(g_2,k_2)$-eigenvalues have multiplicities 
\be 12\times (1,1)\ ,\qquad 4\times (1,-1)\ ,\qquad 4\times (-1,1)\ ,\qquad 4\times (-1,-1)\ .
\ee Now, a free fermion $\psi(z)$ of $F(24)$ that is simultaneous $(g_2,k_2)$-eigenvector with eigenvalues $(g_2,k_2)=(-1,-1)$, when acting on the $g_2$-twisted NS sector of $F(24)$,  has integral mode expansion, and in particular has a zero mode $\psi_0$ with $k_2(\psi_0)=-\psi_0$. Similarly, a fermion with $(g_2,k_2)=(-1,1)$, when acting on the $g_2$-twisted NS sector of $F(24)$, has a zero mode $\psi_0$ with $(-1)^Fk_2(\psi_0)=-\psi_0$. The existence of such zero modes implies that the $g_2$-twisted $k_2$-twined partition functions with either $(\NS,+)$ or $(\NS,-)$ spin structure vanish. Similarly, in the $g_2$-twisted Ramond sector of $F(24)$, a fermion with eigenvalues $(g_2,k_2)=(1,-1)$ has a zero mode $\psi_0$ with $k_2(\psi_0)=-\psi_0$, while a fermion with $(g_2,k_2)=(1,1)$ has a zero mode $\psi_0$ with $(-1)^Fk_2(\psi_0)=-\psi_0$. It follows that  the Ramond $g_2$-twisted $k_2$-twined partition functions of $F(24)$ also vanish. Because all components of the $F(24)$ partition functions vanish, the same must be true for the corresponding partition functions in $V^{f\natural}$.

Furthermore, the analogues of the identities \eqref{simpleg2} and \eqref{simpleg3} hold for the partition functions of $V$, namely
\be \frac{1}{2}(\CZ_{g_2}+\CZ_{g_2}^{g_2})=\frac{1}{2}(\CZ-\CZ^{g_2})
\ee and 
\be \frac{1}{3}(\CZ_{g_3}+\CZ_{g_3}^{g_3}+\CZ_{g_3}^{g_3^2})=\frac{1}{3}(\CZ-\CZ^{g_3})\ .
\ee  Indeed, because the orbifold $V^{f\natural}/\langle g\rangle$ of $V^{f\natural}$ by either $g=g_2$ or $g=g_3$ is isomorphic to $V^{f\natural}$ itself, then the orbifold partition function must be equal to the partition function of the parent theory, i.e.
\be \frac{1}{N}\sum_{j,k=0}^{N-1}\CZ_{g^j}^{g^k}(\tau)=\CZ_e^e(\tau)\ ,\qquad N=2,3\ ,\ee
from which the previous identities follow. 

Thus, the twining partition functions ${\CZ'}^\CL$ are given by
\be \CZ'=\CZ\ ,\qquad {\CZ'}^{\CL_{\rho_1}}=\CZ^{g_2}\ ,\qquad {\CZ'}^{\CL_{\rho_2}}=\CZ^{g_2}+\CZ^{g_3}\ ,
\ee
\be {\CZ'}^{Y_\pm}=\CZ^{g_2}+2\CZ^{g_5}\ ,\qquad {\CZ'}^{X}=\CZ^{g_2}+3\CZ^{g_3}+2\CZ^{g_5}\ .
\ee
Similar formulas hold for the $\CL$-twisted and $\CL$-twined genera $\phi'^\CL(V',\tau,z)$ and $\phi'_\CL(V',\tau,z)$ of $V'$. In particular, it is still true that
\be \phi_{g_2}^{k_2}(V^{f\natural},\tau,z)=0\ ,
\ee because the fermion zero modes on the $g_2$-twisted Ramond and NS sector can always be chosen so as to be neutral with respect to the $\widehat{su}(2)_1$ current $J_0^3$. 

Furthermore, it is known  that for every $g\in G$, the (untwisted) $g$-twined elliptic genus in the GTVW  model $\calC$ equals the corresponding $g$-twined genus in the model $V^{f\natural}$
\be \phi^g_e(\calC,\tau,z)=\phi^g_e(V^{f\natural},\tau,z)\ ,\qquad \forall g\in G\ .
\ee
This is clear because all conjugacy classes in $G$ correspond to geometric symmetries of K3 surfaces, whose twining functions are always reproduced by the construction in \cite{Duncan:2015xoa}; see \cite{Cheng:2016org} for a discussion.
Comparing with \eqref{twinCprime1} and \eqref{twinCprime2}, we find
\be \phi(V',\tau,z)=\phi(V,\tau,z)=\phi(\calC',\tau,z)\ ,\qquad {\phi}^{\CL_{\rho_1}}(V',\tau,z)=\phi^{g_2}_e(V,\tau,z)=\phi^{\CL_{\rho_1}}(\calC',\tau,z)\ ,\ee
\be\phi^{\CL_{\rho_2}}(V',\tau,z)=\phi^{g_2}_e(V,\tau,z)+\phi^{g_3}_e(V,\tau,z)=\phi^{\CL_{\rho_2}}(\calC',\tau,z)\ ,
\ee
\be \phi^{Y_\pm}(V',\tau,z)=\phi_e^{g_2}(V,\tau,z)+2\phi_e^{g_5}(V,\tau,z)=\phi^{Y_\pm}(\calC',\tau,z)\ ,\ee
\be\phi^{X}(V',\tau,z)=\phi_e^{g_2}(V,\tau,z)+3\phi_e^{g_3}(V,\tau,z)+2\phi_e^{g_5}(V,\tau,z)=\phi^{X}(\calC',\tau,z)\ .
\ee
We conclude that the $\CL$-twined elliptic genera for all topological defects $\CL$ in the K3 model $\calC'=\calC/H$ preserving the subspace $\Hh^G$ coincide with the corresponding genera in  $V'=V/H\cong V^{f\natural}$ preserving the subalgebra $V^G$. This result provides further evidence for Conjecture \ref{conj:K3relation} in section \ref{s:topdefK3}.

\section{Discussion}\label{s:discussion}

In this article, we described properties of a tensor category $\cTop$ of topological defects of the SVOA $V^{f\natural}$ preserving the $\CN=1$ superconformal algebra, and that are `well-behaved' with respect to the fermion number (see section \ref{subsec:3.2}). In particular, we proved that if $\cTop$ contains all invertible defects $\CL_g$, $g\in \Aut_\tau(V^{f\natural})\cong Co_0$, then there is a map $\cTop\ni \CL\mapsto \rho(\CL)\in {\rm End}(\Lambda)$ assigning with each $\CL$ a $\ZZ$-linear map $\rho(\CL)$ from the Leech lattice $\Lambda$ to itself, that is a ring automorphism
\be \rho(\CL_1+\CL_2)=\rho(\CL_1)+\rho(\CL_2)\ ,\qquad \rho(\CL_1\CL_2)=\rho(\CL_1)\rho(\CL_2)\ ,
\ee and such that $\rho(\CL^*)=\rho(\CL)^t$. The map is surjective on ${\rm End}(\Lambda)$, but not injective. Moreover, in Conjecture \ref{conj:K3relation}, we proposed  a correspondence between four--plane--preserving TDLs in $\cTop$ and $\CN=(4,4)$--preserving TDLs in K3 NLSMs. We now conclude with a discussion of various aspects and consequences of our results:

\begin{enumerate}
    \item Our main theorem, Theorem \ref{th:main}, is based on \eqref{Cardy}, namely  the condition  that $\Tr_{V^{f\natural}_{tw}(1/2)}(\hat\CL g)$ is integral for all $g\in \Aut_\tau(V^{f\natural})\cong Co_0$. One can be apply a similar argument to any SVOA $V$ with $\CN=1$ superconformal symmetry  (for example, for $V=V^{fE_8}$)\footnote{In principle, one could apply this argument also to $V=F(24)$ for some choice of the $\CN=1$ superconformal algebra. However, $F(24)$ contains no Ramond ground states, so the analog of \eqref{Cardy} is an empty condition in this case.}: the analog of \eqref{Cardy} should hold for any topological defect $\CL$ in $V$ preserving the $\CN=1$ supercurrent $\tau(z)$, assuming that the invertible symmetries $g\in \Aut_\tau(V)$ are known and that they are themselves `well-behaved' with respect to the fermion number. Suppose that we can prove that the SVOA $V$ admits a certain non-invertible TDL $\CL_1$ preserving the $\CN=1$ supercurrent. Then, we can put further constraints on any other potential defect $\CL_2$  by requiring eq.\eqref{Cardy} to be satisfied not only by $\hat\CL_2$, but also by the product $\hat\CL_1\hat\CL_2$, or more generally by $\hat\CL_1^k\hat\CL_2^l$ for all powers $k,l=0,1,2,\ldots$. In the case of $V=V^{f\natural}$, these additional requirements do not lead to any new constraints. Indeed, Theorem \ref{th:LeechEndo} tells us that, for $V=V^{f\natural}$, \eqref{Cardy} is equivalent to $\hat\CL$ being a Leech lattice endomorphisms. Now, if $\hat\CL_1$ and $\hat\CL_2$ are in ${\rm End}(\Lambda)$, then all  the products $\hat\CL_1^k\hat\CL_2^l$ are also in ${\rm End}(\Lambda)$, and therefore they automatically satisfy \eqref{Cardy}. 
    \item Theorems \ref{th:LeechEndo} and \ref{th:LeechEndo}$'$ depend on special properties of the Leech lattice $\Lambda$ and its group of automorphisms $\Aut(\Lambda)\cong Co_0$, and analogous theorems might not hold for other even unimodular lattices. Consider for example the $24$-dimensional even self-dual lattice $N$ whose root system is $A_1^{24}$ \cite{ConwaySloane}. It can be described as the subset of vectors $\frac{1}{\sqrt{2}}(x_1,\ldots,x_{24})\in \RR^{24}$ where $x_1,\ldots,x_{24}\in \ZZ$ and such that the set of indices $i\in \{1,\ldots, 24\}$ for which $x_i$ is odd forms a $\CG_{24}$-set, where $\CG_{24}$ is the Golay code. The group of automorphisms of $N$ is $\Aut(N)\cong \ZZ_2^{24}\rtimes M_{24}$, where $M_{24}$ is the Mathieu group. With respect to the standard basis of $\RR^{24}$, the elements of $\ZZ_2^{24}$ are represented by diagonal matrices $\epsilon=\diag(\pm 1,\ldots, \pm 1)$ for all possible choices of signs. The general element $g\in \ZZ_2^{24}\rtimes M_{24}$ can be written as $g=\epsilon\pi$, where $\epsilon\in \ZZ_2^{24}$ and $\pi\in M_{24}$ is represented by a permutation matrix. In particular, all elements of $\Aut(N)$ are represented by integral matrices where each row and each column have a single non-zero entry equal to $\pm 1$. Take a $24\times 24$ matrix $M$ having, for example, two rows with all entries equal to $1/2$ and the other rows with all zero entries. It is clear that, for all $g\in \ZZ_2^{24}\rtimes M_{24}$,  $\Tr(gM)=\pm \frac{1}{2}\pm \frac{1}{2}\in \{-1,0,1\}$ is integral, so that the analogue of condition (a) in Theorem \ref{th:LeechEndo} is satisfied. On the other hand, $M$ cannot be an integral linear combination of elements $g_i\in \ZZ_2^{24}\rtimes M_{24}$, because its entries are not integral. Furthermore, it is easy to see that $M$ does not map vectors of $N$ to vectors of $N$. Therefore, the analogues of Theorems \ref{th:LeechEndo} and \ref{th:LeechEndo}$'$ fail for the group $\Aut(N)$.
    \item In this article, we considered only topological defects $\CL$ for which the $\CL$-twisted NS and R sectors $V_\CL$ and $V_{tw,\CL}$ have discrete $L_0$ spectrum. It may be possible to relax this condition and include twisted spaces with continuous spectrum. In this case, in order to apply our methods, it is sufficient to require that in each twisted Ramond sector: (1) there is a gap $\Delta>0$ between the conformal weight $h=\frac{c}{24}$ of the ground states and the weight $h\ge \Delta+\frac{c}{24}$ of the continuous part of the spectrum; and that (2) there is a perfect cancellation between the bosonic and fermionic densities of states at $h\ge \Delta+\frac{c}{24}$.
    \item Though we have provided both a conditional argument (appendix \ref{a:conjcons}) and a construction of concrete non-invertible TDLs (section \ref{s:K3defects}), both of which illustrate the consistency of Conjecture \ref{conj:K3relation}, it is crucial to gather more evidence in order to establish a definitive answer to Questions \ref{q1}, \ref{q2}, and \ref{q3}. So far, both of these pieces of evidence concern defects of integral quantum dimension. In the forthcoming work \cite{Angius:2025xxx}, we will explicitly construct defects in $V^{f\natural}$ of irrational quantum dimension, some of which can be identified with corresponding defects in a K3 NLSM.
    More generally, one can investigate  non-invertible TDLs which have been constructed in other K3 NLSMs, for example \cite{Cordova:2023qei, Angius:2024evd,Arias-Tamargo:2025xdd,Caldararu:2025eoj}, and attempt to find corresponding defects in $V^{f\natural}$ whose twining functions are the same (some of which will be considered in \cite{Angius:2025xxx}).  
    \item We stressed that a category $\cTop$ of topological defects of $V^{f\natural}$ preserving the $\CN=1$ superVirasoro algebra is expected to contain infinitely many simple objects (in this sense, it is a tensor category rather than a fusion category). In particular, the fact that $V^{f\natural}$ can be obtained as a $\ZZ_2$ orbifold of $V^{fE_8}$ implies that there are continuous families of topological defects that are (generically) simple. Such continuous families of defects have been observed long ago in CFTs of a single free boson on an orbifold $S^1/\ZZ_2$ \cite{Fuchs:2007tx,Becker:2017zai,Chang_2019, Thorngren:2021yso}, and the argument easily generalizes to the case of $V^{fE_8}/\ZZ_2$. 
    Indeed, the group of automorphisms $\Aut_\tau(V^{f E_8})$ preserving the $\CN=1$ supercurrent $\tau$ contains a normal subgroup $U(1)^8$, with elements $g(\vec\alpha):=e^{2\pi i \vec\alpha\cdot \vec J_0}$, where $\vec\alpha\in E_8\otimes\RR/E_8$ and $\vec J_0=(J_0^,\ldots, J_0^8)$ are the zero modes of the $8$ weight $1$ currents that are superdescendants of the $8$ free fermion fields. Let $\calR$ be the generator of the $\ZZ_2$ symmetry acting by $\calR \vec J \calR=-\vec J$ and such that $V^{f\natural}=V^{f E_8}/\langle \calR\rangle$. 
    While the continuous invertible symmetries $\CL_{g(\vec\alpha)}$ are (generically) projected out in the orbifold procedure, the superpositions $\CL_{g(\vec\alpha)}+\CL_{g(-\vec\alpha)}$ survive as (generically) simple dimension $2$ topological defects in $V^{f\natural}\cong V^{f E_8}/\langle \calR\rangle$ that preserve the supercurrent $\tau$. The associated lattice endomorphisms $\rho(\CL_{g(\vec\alpha)}+\CL_{g(-\vec\alpha)})\in {\rm End}(\Lambda)$ obviously cannot vary continuously with $\vec\alpha$, so they must be constant. The fact that $\rho(\CL_{g(\vec\alpha)}+\CL_{g(-\vec\alpha)})$ is independent of $\vec\alpha$ can also be verified by a direct calculation. This observation provides another proof that the ring homomorphism $\rho:\cTop\to {\rm End}(\Lambda)$ is not injective. We emphasize that continuous families of simple non-invertible defects, preserving the $\CN=(4,4)$ superconformal algebra and spectral flow, also appear in various K3 sigma models, and in particular  all torus orbifolds \cite{Angius:2024evd}. The existence of such continuous families on both sides of the correspondence is another non-trivial piece of evidence in favor of Conjecture \ref{conj:K3relation}.
    \item Let $\CL\in \cTop$ be a topological defect of $V^{f\natural}$ preserving a subspace $\Sigma\subseteq {}^\RR V^{f\natural}_{tw}(1/2)$ with $\dim \Sigma=d\ge 4$, i.e. such that $\hat\CL|\psi\rangle = \langle \CL\rangle |\psi\rangle$ for all $|\psi\rangle\in \Sigma$. Then, $\CL$ also preserves a subalgebra $\widehat{so}(d)_1$ of $\widehat{so}(24)_1$, so that it commutes with the group $Spin(d)\subset \Aut(V^{f\natural})$ generated by the zero modes of its currents. Furthermore, $\CL\in \cTop_\Pn$ for all $4$-dimensional subspaces $\Pn\subseteq \Sigma$. Note that the family of $4$-dimensional subspaces $\Pn\subseteq \Sigma$ is a Grassmannian of dimension $4(d-4)$. For $d>4$, the $\CL$-twined genus $\phi^\CL(V^{f\natural},\tau,z)$ does not depend on $\Pn\subseteq \Sigma$, because all $\widehat{su}(2)_1\subset \widehat{so}(4)_1\subset \widehat{so}(d)_1$ algebras that are preserved by $\CL$ are related to each other by conjugation in $Spin(d)$.\footnote{When the $\CL$-fixed subspace $\Sigma$ has dimension exactly $4$, the two $\widehat{su}(2)_1$ in $\widehat{so}(4)_1$ might not be related by any automorphism commuting with $\CL$, so that there might be two different $\CL$-twined genera. This phenomenon has been observed for invertible symmetries, see \cite{Duncan:2015xoa} and section \ref{s:Conwaytwining}.} Suppose that, for at least one choice $\Pn\subseteq \Sigma$, there is a corresponding K3 sigma model $\calC$ as predicted by Conjecture \ref{conj:K3relation}, and let $\CL':=F(\CL)\in \cTop^{K3}_{\calC}$ be the defect of $\calC$ corresponding to $\CL$. Thus, $\CL'$ fixes a $d$-dimensional subspace of $\HRR$ ground fields in $\Hh_{\HRR,gr}^{K3}$, $d\ge 4$, that contains  in particular the space $\Pi$ of spectral flow generators. Because the space of exactly marginal operators of the K3 model is related to $\Pi^\perp\cap \Hh_{\HRR,gr}^{K3}$ by spectral flow, one can conclude that $\CL'$ is a topological defect not only for $\calC$, but for a whole $4(d-4)$-dimensional family of K3 sigma models that are related to $\calC$ by exactly marginal deformations. Furthermore, for $d>4$, the $\CL'$-twined genus $\phi^{\CL'}(\calC,\tau,z)$ is invariant under such deformations.
\item Our methods allowed us to put constraints on the Grothendieck ring of the category of $\CN=1$ preserving topological defects in $V^{f\natural}$. On the other hand, given a potentially consistent fusion (sub)ring of topological defects of $V^{f\natural}$, i.e. such that there is a homomorphism to the ring ${\rm End}(\Lambda)$, it would be desirable to determine whether one or many corresponding tensor categories exist. This amounts to finding all possible solutions to the pentagon identities for the corresponding fusion matrices \cite{Etingof:2015}. Inequivalent solutions might lead to different modular properties for the twisted-twining partition functions or elliptic genera, and to different spin selection rules in the twisted sectors \cite{Chang_2019}.
\item If $g\in \Aut(V)$ is an automorphism of a holomorphic SVOA $V$ of finite order $N$, then the $g$-twined partition function $\CZ^g$, if well-defined, is expected to be a (vector-valued) modular form of level $N$, possibly up to multiplier. This means that it is invariant under the subgroup $\Gamma(N)$ of $SL_2(\ZZ)$ matrices that are equal to the identity mod $N$. Analogous modular properties are expected for the $g$-twined genus \eqref{eq:D-MCtwining} in $V^{f\natural}$, for those invertible symmetries $g$ for which it is well-defined. When $\CL\in \cTop$ is a generic topological defect preserving a possibly non-rational subSVOA of $V^{f\natural}$, one cannot expect the $\CL$-twined partition function (or genus) to be invariant under a finite index subgroup of $SL_2(\ZZ)$. This phenomenon is related to the fact that the tensor category generated by $\CL$ might contain infinitely many simple objects. The simplest example is the case of an invertible symmetry $g$ of infinite order, for example the generic element of a Lie group, for which all powers $g^n$, $n\in\ZZ$, are distinct automorphisms. Nevertheless, one can expect $\cTop$ to contain many non-invertible simple defects $\CL$ for which the preserved subSVOA is rational. It would be very interesting to understand what kind of weak Jacobi forms can be obtained as $\CL$-twining genera for such $\CL$.
\end{enumerate}

\bigskip 

\noindent {\bf Acknowledgements.} 

\noindent We thank Sebastiano Carpi, Giulio Codogni, Tiziano Gaudio, Luca Giorgetti, and Shu-heng Shao for useful comments and discussions. RV would like to thank the organizers and the participants of the Conference in Vertex Alegebras and Related Topics in Taipei, December 2024, for many stimulating discussions.
RA is supported by the ERC Starting Grant QGuide-101042568- StG 2021. RV acknowledges support from CARIPARO Foundation Grant under grant n. 68079. SG and RV were partially supported by the PRIN Project n. 2022ABPBEY.

\appendix

\section{Consistency of Conjecture \ref{conj:K3relation}}\label{a:conjcons}

As in section \ref{s:topdefK3}, let $\Pi^\natural\subset {}^\RR V^{f\natural}_{tw}(1/2)$ be a $4$-dimensional subspace, and let $\cTop_{\Pi^\natural}$ be the subcategory of $\cTop$ preserving the space of fields in $\Pi^\natural$. We will  consider the case where the category $\cTop_\Pn$ only contains  defects with integral quantum dimension, $\langle \CL\rangle\in \NN$ for all $\CL\in \cTop_\Pn$. By Corollary \ref{th:integralqdim}, this always happens when the $4$-dimensional subspace $\Pn\subset \Lambda\otimes\RR$  contains a non-zero lattice vector $0\neq \lambda\in \Lambda\cap \Pn$. Our goal is to prove that, in this case, one can find a K3 NLSM $\calC$ and an isomorphism $\varphi:V^{f\natural}_{tw}(1/2)\stackrel{\cong}{\longrightarrow} \Hh^{K3}_{\HRR,gr}$ such that for all $\CL\in \cTop_\Pn$, the endomorphism $\varphi\circ(\rho(\CL))\circ\varphi^{-1}:\Hh^{K3}_{\HRR,gr}\to \Hh^{K3}_{\HRR,gr}$ is compatible with all known properties of topological defects in $\calC$. 

Let us denote by
\be \Lambda_\Pn:=\Lambda\cap \Pn^\perp
\ee the sublattice of the Leech lattice orthogonal to $\Pn$. It is clear that $\Lambda_\Pn$ is a primitive sublattice (i.e. $\Lambda/\Lambda_\Pn$ is torsion free) with $\rk \Lambda_\Pn\le 20$. Let us show that every $\CL\in \cTop_\Pn$ preserves all vectors $v\in V^{f\natural}_{tw}(1/2)\cong \Lambda\otimes\CC$ that are orthogonal to $\Lambda_\Pn$, i.e. that \be\label{orthortho}\hat\CL(v)=\langle \CL\rangle v\ ,\qquad \forall v\in \Lambda_\Pn^\perp\subset \Lambda\otimes\CC\ .\ee This is obviously true when $\rk \Lambda_\Pn=20$, because in that case $\Lambda_\Pn^\perp=\Pn$; so, the proof is non-trivial only for $\rk\Lambda_\Pn<20$. Fix $\CL\in \cTop$ and consider 
\be (\hat\CL^*-\langle\CL\rangle)\Lambda\ .
\ee Because $\hat\CL^*\in {\rm End}(\Lambda)$ and $\langle \CL\rangle\in \ZZ$, this is a sublattice of the Leech lattice
\be\label{inclLambda} (\hat\CL^*-\langle\CL\rangle)\Lambda\subseteq \Lambda\ .
\ee On the other hand, for all $w\in \Pn\subset \Lambda\otimes\RR$ and for all $\lambda\in\Lambda$, one has 
\be (w,(\hat\CL^*-\langle\CL\rangle)\lambda)=((\hat\CL-\langle\CL\rangle)w,\lambda)=0\ ,\qquad \forall w\in\Pn,\ \lambda\in\Lambda,
\ee where we used that $\hat\CL(w)=\langle\CL\rangle w$ for all $w\in \Pn$. This implies that
\be (\hat\CL^*-\langle\CL\rangle)\Lambda\subseteq \Lambda\cap\Pn^\perp =\Lambda_\Pn\ .
\ee Passing to the orthogonal subspaces, we get the opposite inclusion
\be \Lambda_\Pn^\perp\subseteq ((\hat\CL^*-\langle\CL\rangle)\Lambda)^\perp\ , \qquad \forall \CL\in\cTop_\Pn\ .
\ee On the other hand, for all $v\in ((\hat\CL^*-\langle\CL\rangle)\Lambda)^\perp$ and for all $\lambda\in \Lambda$, we have
\be 0=\langle v,(\hat\CL^*-\langle\CL\rangle)\lambda\rangle=\langle(\hat\CL-\langle\CL\rangle)v,\lambda\rangle\ ,\qquad \forall \lambda\in \Lambda\ ,
\ee and this implies $(\hat\CL-\langle\CL\rangle)v=0$. This is true, in particular, for all $v\in \Lambda_\Pn^\perp$, and \eqref{orthortho} holds. As a consequence of \eqref{orthortho}, the category $\cTop_\Pn$ is a (not necessarily proper) subcategory of all $\cTop_{\Pn'}$ for all $4$-dimensional real subspaces $\Pn'\subset \Lambda_\Pn^\perp\cap \Lambda\otimes\RR$. 

Let us now show that the lattice $\Lambda_\Pn(-1)$, obtained from $\Lambda_\Pn\subset\Lambda$ by changing the sign of the quadratic form, can be primitively embedded in an even unimodular lattice of signature $(4,20)$, that we interpret as the lattice of D-brane charges of a K3 sigma model.  Choose a primitive sublattice $L\subset \Lambda$ of $\rk L=4$ and orthogonal to $\Lambda_\Pn$, i.e. $L\otimes \RR\subseteq \Lambda_\Pn^\perp$. Theorem 1.12.4 of \cite{nikulin} implies that all even lattices of rank $\le 4$ can be primitively and isometrically embedded in the $E_8$ lattice. Choose such an embedding $\iota: L\to E_8$ and consider the lattice $E_8\oplus_\perp \Lambda(-1)$, where  $\Lambda(-1)$ is the Leech lattice with reversed sign of the quadratic form. This is an indefinite even unimodular lattice of signature $(8,24)$. If we denote by $\mu\oplus\lambda$ the vectors in this lattice, with $\mu\in E_8$, $\lambda\in\Lambda$, then it is clear that
\be L_0:=\{\iota(\lambda)\oplus \lambda\mid \lambda\in L\subset \Lambda\}\ ,
\ee is a $4$-dimensional primitive isotropic sublattice of $E_8\oplus_\perp \Lambda(-1)$, i.e. such that all its vectors have zero square norm.  Thus, the lattice $L_0^\perp \cap (E_8\oplus_\perp \Lambda(-1))$ is a $28$-dimensional degenerate lattice that contains $L_0$ itself, while the quotient
\be \Gamma:= (L_0^\perp \cap (E_8\oplus_\perp \Lambda(-1)))/L_0\ ,
\ee is an even non-degenerate lattice of signature $(4,20)$. Furthermore, it is self-dual, see \cite{ConwaySloane} chapt.26.\footnote{Indeed, if $w_1,\ldots,w_4$ is a basis of $L_0$, then by self-duality, we can choose $w_1^*,\ldots,w_4^*\in E_8\oplus_\perp \Lambda(-1)$ such that $(w_i,w_j^*)=\delta_{ij}$, and without loss of generality, upon adding $w_i^*$ a suitable integral linear combination of $w^1,\ldots,w^4$, we can assume that $(w_i^*,w_j^*)=0$. Thus, it is clear that $w_1,\ldots,w_4,w_1^*,\ldots,w_4^*$ span an even self-dual sublattice of signature $(4,4)$ inside $(E_8\oplus_\perp \Lambda(-1))$. Now, for every element in the quotient $\Gamma$, there is a (unique) lift to $L_0^\perp \cap (E_8\oplus_\perp \Lambda(-1))$ that is orthogonal also to $w_1^*,\ldots,w_4^*$. This argument shows that $\Gamma$ is isomorphic to the orthogonal complement in the unimodular $E_8\oplus_\perp \Lambda(-1)$ of the even unimodular lattice generated by $w_1,\ldots,w_4,w_1^*,\ldots,w_4^*$, and therefore $\Gamma$ is itself unimodular.}

Being an even unimodular lattice of signature $(4,20)$, $\Gamma$ can be identified with the lattice of RR D-brane charges of a K3 NLSM. Furthermore, the lattice $\Lambda_\Pn\subseteq L^\perp\cap \Lambda$ can be primitively embedded in $\Gamma$ via
\be\label{LambdaEmbed} \Lambda_\Pn(-1) \to  (L_0^\perp \cap (E_8\oplus_\perp \Lambda(-1)))\to \Gamma\ ,
\ee  and we denote by $\Gamma_{\Pi_{K3}}\subset \Gamma$ its image, which is a negative definite primitive sublattice, and by $\Gamma^{\Pi_{K3}}:=(\Gamma_{\Pi_{K3}})^\perp\cap \Gamma$ its orthogonal complement. Then, we take $\Pi_{K3}$ any $4$-dimensional positive definite subspace in $\Gamma^{\Pi_{K3}}\otimes\RR$ that is not orthogonal to any vector in $\Gamma^{\Pi_{K3}}\otimes \bar\QQ$.

The choice of the subspace $\Pi_{K3}\subset \Gamma\otimes \RR$ uniquely determines a non-linear sigma model $\calC$ on K3, such that  $\Gamma$ is the lattice of D-brane charges, $\Gamma\otimes \RR\cong \Hh_{\HRR,gr}^{K3}$ is the space of RR ground fields and $\Pi_{K3}$ is the $4$-dimensional subspace of spectral flow operators, i.e the RR ground fields that are charged under both the holomorphic and antiholomorphic $\widehat{su}(2)_1$ subalgebras of the $\CN=(4,4)$ superconformal algebra. Then we can define the isomorphism of Hilbert spaces $\varphi:V^{f\natural}_{tw}(1/2)\to \Hh_{\HRR,gr}^{K3}$ in such a way that the lattice $\Lambda_\Pn\subset V^{f\natural}_{tw}(1/2)$ is mapped to $\varphi(\Lambda_\Pn)=\Gamma_{\Pi_{K3}}\subset \Hh_{\HRR,gr}^{K3}$ via the embedding \eqref{LambdaEmbed}, and the orthogonal subspace $\Lambda_\Pn^\perp \subset V^{f\natural}_{tw}(1/2)$ is mapped to $\Gamma_{\Pi_{K3}}^\perp\subset \Hh_{\HRR,gr}^{K3}$ in such a way that $\varphi(\Pn)=\Pi_{K3}$. Let $\CL\in \cTop_\Pn$ be a topological defect of $V^{f\natural}$, and let $\rho(\CL):\Lambda\to \Lambda$ be the corresponding lattice endomorphism. Then we can extend $\rho(\CL)$ to an endomorphism $\tilde\rho(\CL)$ of $E_8\oplus_\perp \Lambda(-1)$ by defining $\tilde\rho(\CL)_{\vert E_8}=\langle \CL\rangle{\rm id}_{E_8}$. This implies that the isotropic sublattice $L_0\subset E_8\oplus_\perp \Lambda(-1)$ is preserved, i.e $\tilde\rho(\CL)_{\vert L_0}=\langle \CL\rangle{\rm id}_{L_0}$, so that $\tilde\rho(\CL)$ induces a well-defined lattice endomorphism of the quotient $(L_0^\perp \cap (E_8\oplus_\perp \Lambda(-1)))/L_0=\Gamma$, whose $\CC$-linear extension is exactly $\varphi\circ\rho(\CL)\circ\varphi^{-1}$. In particular, the restriction of $\varphi\circ\rho(\CL)\circ\varphi^{-1}$ to $\Gamma_{\Pi_{K3}}$ is an endomorphism of $\Gamma_{\Pi_{K3}}$, while the restriction to its orthogonal subspace $\Gamma^{\Pi_{K3}}\otimes \RR$, and in particular the restriction to $\Pi_{K3}$, is $\langle\CL\rangle$ times the identity.

For invertible defects, this construction gives an isomorphism between the subgroup of $\Aut_\tau(V^{f\natural})=\Aut(\Lambda)$ fixing $\Pn$ and the group of symmetries of the K3 sigma model $\calC$ preserving the $\CN=(4,4)$ superconformal symmetry and the spectral flow \cite{Gaberdiel:2011fg}. 
For a generic non-invertible defect $\CL\in \cTop_\Pn$, we cannot prove that there exists a defect $\CL'\in \cTop^{K3}_\calC$ of $\calC$ such that the restriction of $\hat\CL'$  to $\Hh_{\HRR,gr}$ coincides with $\varphi\circ\rho(\CL)\circ \varphi^{-1}$; neither can we prove that for all $\CL'\in \cTop^{K3}_\calC$, the restriction of $\hat\CL'$  to $\Hh^{K3}_{R-R,gr}$ is reproduced by $\varphi\circ\rho(\CL)\circ \varphi^{-1}$ for some $\CL\in \cTop_\Pn$. However, the results of \cite{Angius:2024evd} imply that all defects $\CL'\in \cTop_\calC^{K3}$ have integral quantum dimension for our choice of $\Pi_{K3}$, and that for every $\CL\in \cTop_\Pn$ the maps $\varphi\circ\rho(\CL)\circ\varphi^{-1}$
 satisfy all known consistency conditions for a map $\hat\CL':\Hh_{\HRR,gr}^{K3}\to\Hh_{\HRR,gr}$ for any $\CL'\in \cTop_\calC^{K3}$.

 If we drop the assumption that the quantum dimension $\langle \CL\rangle$ of every $\CL\in \cTop_{\Pi^\natural}$ is integral, then the inclusion \eqref{inclLambda} is not true, and as a consequence $\frac{1}{\langle\CL\rangle}\hat\CL$ may in general act non-trivially on $\Lambda_\Pn^\perp$.

\section{The twisted-twining partition function $Z_{g_2}^{k_2}$ in the GTVW model}\label{a:Zg2p2}

In this section, we compute the twisted-twining partition function $Z_{g_2}^{k_2}$ of eq.\eqref{Zg2p2}. We can take $g_2$ and $k_2$ to be the permutations $g_2=(34)(56)$ and $k_2=(35)(46)$ of the six $\widehat{su}(2)_1$ factors (note that the group $G\cong A_5$ acts by the same permutation on the chiral and antichiral algebra). It is easy to prove that the group $\langle g_2,k_2\rangle\cong \ZZ_2\times \ZZ_2$ is the centralizer of $g_2$ in $G\cong A_5$.   

To compute $Z_{g_2}^{k_2}$, we  use the following strategy. It is easy to check that the subalgebra of $(\widehat{su}(2)_1)^6$ that is invariant under $g_2=(34)(56)$ contains the algebra $
(\widehat{su}(2)_1)^2((\widehat{su}(2)_2\otimes Vir_{c=1/2})^2$.\footnote{This follows from the observation that the subalgebra of $(\widehat{su}(2)_1)^2$ that is invariant under a symmetry that exchanges the two $\widehat{su}(2)_1$ factors is isomorphic to $\widehat{su}(2)_2\otimes Vir_{c=1/2}$. Here, $\widehat{su}(2)_2$ is generated by the three invariant (`diagonal') combinations of currents of $(\widehat{su}(2)_1)^2$, and $Vir_{c=1/2}$ is generated by the difference $T_{(\widehat{su}(2)_1)^2}-T_{\widehat{su}(2)_2}$ of the Sugawara stress tensors. The standard coset identification $\frac{(\widehat{su}(2)_1)^2}{\widehat{su}(2)_2}\cong Vir_{c=1/2}$ then implies that this is the full invariant subalgebra.} Crucially, the K3 model $\calC_{\GTVW}$ is  rational with respect to this algebra, so that we can write the $g_2$-twisted sector as a finite sum of irreducible modules with respect to this algebra and its antichiral counterpart. 

We denote by $\ch_{k,j}$, $j=0,\ldots,k$, the $\widehat{su}(2)_k$ characters, with $\ch_{j}:= \ch_{1,j}$, and by $\chi_0$, $\chi_{1/16}$ and $\chi_{1/2}$ the $Vir_{c=1/2}$ ones.  Then, the decomposition of the $(\widehat{su}(2)_1)^2$ characters into $\widehat{su}(2)_2\otimes Vir_{c=1/2}$ characters is as follows:
\begin{align}
\ch_{0}(\tau)^2&=\ch_{2,0}(\tau)\chi_0(\tau)+\ch_{2,2}(\tau)\chi_{1/2}(\tau)\ ,\\
\ch_{1}(\tau)^2&=\ch_{2,2}(\tau)\chi_0(\tau)+\ch_{2,0}(\tau)\chi_{1/2}(\tau)\ ,\\
\ch_{0}(\tau)\ch_{1}(\tau)&=\ch_{2,1}(\tau)\chi_{1/16}(\tau)\ .
\end{align} 

In the following, we will repeatedly use this well-known trick. Suppose the chiral algebra of a CFT contains the product $\CA\otimes \CA$ of two copies of the same algebra $\CA$, and suppose we want to compute the twining partition function for an automorphism $\phi$ that exchanges these two copies. Under this automorphism, a representation $M\otimes M'$ of $\CA\otimes\CA$ gets mapped to $M'\otimes M$, where $M, M'$ are representations of $\CA$. Therefore, the automorphism $\phi$ is `off-diagonal' when acting on $(M\otimes M')\oplus (M'\otimes M)$, so that 
\be \Tr_{(M\otimes M')\oplus (M'\otimes M)}(\phi q^{L_0-\frac{c}{24}})=0\ ,\qquad M\neq M'\ .
\ee
On the other hand, a representation $M\otimes M$ is stabilised by $\phi$, and the only states in $M\otimes M$ contributing to the $\phi$-twined partition function are the ones of the form $|\psi\rangle\otimes |\psi\rangle$, for each $|\psi\rangle\in M$, that are fixed by the automorphism and whose $L_0$-eigenvalue is twice the eigenvalue of $|\psi\rangle$. As a consequence, the $\phi$-twined character for the representation $M\otimes M$ is
\be\label{permutetwo} \Tr_{M\otimes M}(\phi q^{L_0-\frac{c}{24}})=\ch_{M}^{\CA}(2\tau)\ ,
\ee where $\ch_{M}^{\CA}(\tau)$ denotes the character of the representation $M$ of the algebra $\CA$.

When we apply this argument to the case where $\CA=\widehat{su}(2)_1$ and $\phi$ is an automorphism exchanging the two copies of $\CA$ in $(\widehat{su}(2)_1)^2$, we find that the  $([0]\otimes[1]) + ([1]\otimes [0])$ representations give vanishing contribution to the twining partition function, while the twining character of a $[r]\otimes [r]$ representation, $r=0,1$, is $\ch_r(2\tau)$. On the other hand, in terms of representations of the subalgebra $\widehat{su}(2)_2\times Vir_{c=1/2}$, the automorphism changes the sign of the  $\ch_{2,2}(\tau)\chi_{1/2}(\tau)$ and $\ch_{2,0}(\tau)\chi_{1/2}(\tau)$ terms. This observation gives the relations
\begin{align}
	\ch_{0}(2\tau)&=\ch_{2,0}(\tau)\chi_0(\tau)-\ch_{2,2}(\tau)\chi_{1/2}(\tau)\ ,\\
	\ch_{1}(2\tau)&=\ch_{2,2}(\tau)\chi_0(\tau)-\ch_{2,0}(\tau)\chi_{1/2}(\tau)\ ,
	\end{align}
that can be verified by explicit calculations.

These observations imply that the (unflavoured) $g_2$-twined partition function $Z^{g_2,+}_{e,\NS}(\tau)=Z^{g_2,+}_{e,\NS}(\tau,z=0,\bar z=0)$ (NSNS with no fermion number) of the $\calC_{\GTVW}$ model can be re-written as

\begin{equation}
\begin{split}
Z^{g_2,+}_{e,\NS} =&   \frac{1}{\vert \eta (\tau) \vert^4 }  \left\lbrace \Big\vert \theta_3 (2 \tau )^2(\ch_{2,0}\chi_0-\ch_{2,2}\chi_{1/2})^2 + \theta_2 (2 \tau)^2 (\ch_{2,2}\chi_0-\ch_{2,0}\chi_{1/2})^2 \Big\vert^2 + \right. \\
& \left.+ \Big\vert \theta_3 (2 \tau )^2(\ch_{2,2}\chi_0-\ch_{2,0}\chi_{1/2})^2 + \theta_2 (2 \tau)^2(\ch_{2,0}\chi_0-\ch_{2,2}\chi_{1/2})^2   \Big\vert^2 + \right. \\
& \left. +2 \Big\vert (\theta_3 (2 \tau )^2+\theta_2 (2 \tau)^2) (\ch_{2,0}\chi_0-\ch_{2,2}\chi_{1/2}) (\ch_{2,2}\chi_0-\ch_{2,0}\chi_{1/2})\Big\vert^2  \right\rbrace \\
\end{split}
\label{g2_twining_theta2}
\end{equation} where we didn't write the argument of the characters when it is simply $\tau$.
The $g_2$-twisted partition function $Z_{g_2,\NS}^{e,+}$ is obtained by a modular S-transformation of $Z^{g_2,+}_{e,\NS}$. Recall that the three $\widehat{su}(2)_2$ characters $\ch_{2,0},\ch_{2,1},\ch_{2,2}$ and the $Vir_{c=1/2}$ characters $\chi_0$, $\chi_{1/16}$, $\chi_{1/2}$ transform with the same S-matrix
\be \frac{1}{2}\begin{pmatrix}
	1 & \sqrt{2} & 1\\
	\sqrt{2} & 0 & -\sqrt{2}\\
	1 & -\sqrt{2} & 1\\
\end{pmatrix} \ .
\ee
This leads to the following S-transformations
\be \ch_{2,0}\chi_0-\ch_{2,2}\chi_{1/2}\quad\stackrel{S}{\longrightarrow}\quad \frac{\sqrt{2}}{2}\left (\ch_{2,1}(\chi_0+\chi_{1/2})+(\ch_{2,0}+\ch_{2,2})\chi_{1/16}\right)
\ee
\be \ch_{2,2}\chi_0-\ch_{2,0}\chi_{1/2}\quad\stackrel{S}{\longrightarrow}\quad \frac{\sqrt{2}}{2}\left(-\ch_{2,1}(\chi_0+\chi_{1/2})+(\ch_{2,0}+\ch_{2,2})\chi_{1/16}\right)
\ee
Comparison with the S-transformations of $\ch_{0}(2\tau)$ and $\ch_{1}(2\tau)$ gives the identities
\be (\ch_{2,0}(\tau)+\ch_{2,2}(\tau))\chi_{1/16}(\tau)=\ch_{0}(\tau/2)=\frac{\theta_3(\tau)}{\eta(\tau/2)},\label{chtotheta1}
\ee
\be\label{chtotheta2} \ch_{2,1}(\tau)(\chi_{0}(\tau)+\chi_{1/2}(\tau))=\ch_{1}(\tau/2)=\frac{\theta_2(\tau)}{\eta(\tau/2)}.
\ee
Therefore, the $g_2$-twisted partition function is
\begin{equation}
	\begin{split}
		Z_{g_2,\NS}^{e,+}&(\tau) =   \frac{1}{16\vert \eta (\tau) \vert^4 }  \left\lbrace \Big\vert \theta_3 \left(\frac{\tau}{2} \right) ^2\left(\ch_{2,1}(\chi_0+\chi_{1/2})+(\ch_{2,0}+\ch_{2,2})\chi_{1/16}\right )^2\right. \\
		&\qquad\qquad\qquad\qquad\left.+ \theta_2 \left(\frac{\tau}{2}\right)^2 \left(-\ch_{2,1}(\chi_0+\chi_{1/2})+(\ch_{2,0}+\ch_{2,2})\chi_{1/16}\right)^2 \Big\vert^2 + \right. \\
		& \left.+ \Big\vert \theta_3 \left(\frac{\tau}{2}\right)^2 \left(-\ch_{2,1}(\chi_0+\chi_{1/2})+(\ch_{2,0}+\ch_{2,2})\chi_{1/16}\right)^2 \right.\\
        &\qquad\qquad\qquad\qquad + \theta_2 \left(\frac{\tau}{2}\right)^2  \left(\ch_{2,1}(\chi_0+\chi_{1/2})+(\ch_{2,0}+\ch_{2,2})\chi_{1/16}\right)^2   \Big\vert^2 +  \\
		& +2 \Big\vert \left(\theta_3 \left(\frac{\tau}{2}\right)^2+\theta_2 \left(\frac{\tau}{2}\right)^2\right) \left(\ch_{2,1}(\chi_0+\chi_{1/2})+(\ch_{2,0}+\ch_{2,2})\chi_{1/16}\right) \\
        &\qquad\qquad\qquad\qquad\cdot \left. \left(-\ch_{2,1}(\chi_0+\chi_{1/2})+(\ch_{2,0}+\ch_{2,2})\chi_{1/16}\right)\Big\vert^2  \right\rbrace \\
	\end{split}
	\label{g2_twist_theta2}
\end{equation}
Let us denote by $$[a_1,a_2,x_1,y_1,x_2,y_2;\bar a_1,\bar a_2,\bar x_1,\bar y_1,\bar x_2,\bar y_2]$$ the modules of a holomorphic times antiholomorphic copy of the $(\widehat{su}(2)_1)^2((\widehat{su}(2)_2\otimes Vir_{c=1/2})^2$ algebra. In particular, $a_1,a_2\in \{0,1\}$ denote the $\widehat{su}(2)_1$-modules of conformal weights $0$, $1/4$, $x_1,x_2\in \{0,1,2\}$ denote the $\widehat{su}(2)_2$-modules of conformal weights $0$, $3/16$, $1/2$, and $y_1,y_2\in \{0,1,2\}$ denote the $Vir_{c=1/2}$ modules of weights $0$, $1/16$, $1/2$. Analogously, the indices $\bar a_1,\bar a_2,\bar x_1,\bar y_1,\bar x_2,\bar y_2$ denote the representations with respect to the antiholomorphic algebra. 

The symmetry $k_2$ acts by simultaneously exchanging the two $\widehat{su}(2)_2$ and the two $Vir_{c=1/2}$ factors of both the holomorphic and antiholomorphic algebra. This means that it exchanges the representations
$$[a_1,a_2,x_1,y_1,x_2,y_2;\bar a_1,\bar a_2,\bar x_1,\bar y_1,\bar x_2,\bar y_2]\leftrightarrow [a_1,a_2,x_2,y_2,x_1,y_1;\bar a_1,\bar a_2,\bar x_2,\bar y_2,\bar x_1,\bar y_1]\ .$$
Thus, the only modules that give a non-zero contribution to the $g_2$-twisted $k_2$-twined partition function are the ones that satisfy
\be x_1=x_2\ ,\qquad y_1=y_2\ ,\qquad \bar x_1=\bar x_2\ ,\qquad \bar y_1=\bar y_2\ .
\ee Such modules correspond to contributions to the $g_2$-twisted function $Z_{g_2,\NS}^{e,+}(\tau)$ containing terms of the form $\ch_{2,1}(\tau)^2(\chi_0(\tau)+\chi_{1/2}(\tau))^2$ or $(\ch_{2,0}(\tau)+\ch_{2,2}(\tau))^2\chi_{1/16}(\tau)^2$ in both the holomorphic and antiholomorphic factors. On the other hand, we can ignore contributions containing factors of the form $(\ch_{2,0}(\tau)+\ch_{2,2}(\tau))\chi_{1/16}(\tau)\ch_{2,1}(\tau)(\chi_0(\tau)+\chi_{1/2}(\tau))$ on either  the holomorphic or antiholomorphic side because they correspond to representations that do not contribute to the $k_2$-twined function. We can therefore rearrange the $g_2$-twisted partition function $Z_{g_2,\NS}^{e,+}(\tau)$, keeping track only of the relevant contributions, to obtain
\begin{equation}
	\begin{split}
		Z_{g_2,\NS}^{e,+}(\tau) =&   \frac{1}{16\vert \eta (\tau) \vert^4 }  \left\lbrace 2\left\vert \left(\theta_3 \left(\frac{\tau}{2} \right) ^2+\theta_2 \left(\frac{\tau}{2} \right) ^2\right)\left(\ch_{2,1}^2(\chi_0+\chi_{1/2})^2+(\ch_{2,0}+\ch_{2,2})^2\chi_{1/16}^2\right)\right\vert^2 \right.\\
		& \left. +2 \left\vert \left(\theta_3 \left(\frac{\tau}{2}\right)^2+\theta_2 \left(\frac{\tau}{2}\right)^2\right) \left (-\ch_{2,1}^2(\chi_0+\chi_{1/2})^2+(\ch_{2,0}+\ch_{2,2})^2\chi_{1/16}^2\right) \right\vert^2  +\ldots\right\rbrace \\
        =&   \frac{1}{4\vert \eta (\tau) \vert^4 }   \left\vert \theta_3 \left(\frac{\tau}{2} \right) ^2+\theta_2 \left(\frac{\tau}{2} \right) ^2\right\vert^2\left(\big\vert \ch_{2,1}^2(\chi_0+\chi_{1/2})^2\big\vert^2+\big\vert(\ch_{2,0}+\ch_{2,2})^2\chi_{1/16}^2\big\vert^2\right)+\ldots\\
        =&     \big\vert \ch_0(\tau)^2 +\ch_1(\tau)^2 \big\vert^2\left(\big\vert \ch_{2,1}^2(\chi_0+\chi_{1/2})^2\big\vert^2+\big\vert(\ch_{2,0}+\ch_{2,2})^2\chi_{1/16}^2\big\vert^2\right)+\ldots
	\end{split}
	\label{g2_twist_theta2_simple}
\end{equation} where $\ldots$ denote  factors from representations that will not contribute upon $k_2$ twining, and in the last line we used the identities
\be \frac{\theta_3(\tau/2)}{\eta(\tau)}=\ch_0(\tau)+\ch_1(\tau)\ ,\qquad \frac{\theta_2(\tau/2)}{\eta(\tau)}=\ch_0(\tau)-\ch_1(\tau)\ .
\ee
From the last expression, it is easy to re-introduce the dependence on $z,\bar z$ and also to determine the corresponding function in the $(\NSNS,-)$ spin structure:
\begin{equation}\begin{split}\label{g2twistrelevant} Z_{g_2,\NS}^{e,\pm}(\tau,z,\bar z) =&\big\vert \ch_0(\tau,z)\ch_0(\tau) \pm\ch_1(\tau,z)\ch_1(\tau)^2 \big\vert^2\\&\qquad\cdot\left(\big\vert \ch_{2,1}^2(\chi_0+\chi_{1/2})^2\big\vert^2+\big\vert(\ch_{2,0}+\ch_{2,2})^2\chi_{1/16}^2\big\vert^2\right)+\ldots
\end{split}\end{equation} 
Finally, by \eqref{permutetwo},  the $k_2$-twined partition function can be obtained  by the replacements
\begin{align} \ch_{2,1}^2(\tau)(\chi_0(\tau)+\chi_{1/2}(\tau))^2&\quad\to\quad \ch_{2,1}(2\tau)(\chi_0(2\tau)+\chi_{1/2}(2\tau)) \\(\ch_{2,0}(\tau)+\ch_{2,2}(\tau))^2\chi_{1/16}^2(\tau)&\quad\to\quad (\ch_{2,0}(2\tau)+\ch_{2,2}(2\tau))\chi_{1/16}(2\tau)
\end{align} in \eqref{g2twistrelevant} and by deleting all other contributions in the $\ldots$. For our final result we get
\begin{align*}
	{Z}_{g_2,\NS}^{k_2,\pm}(\tau,z,\bar z)=&\Big\vert (\ch_{1,0}(\tau,z)\ch_{1,0}(\tau)\pm \ch_{1,1}(\tau,z)\ch_{1,1}(\tau))\Big\vert^2\cdot
	\\&\qquad\cdot\left[\Big\vert(\ch_{2,0}(2\tau)+\ch_{2,2}(2\tau))\chi_{1/16}(2\tau)\Big\vert^2+\Big\vert \ch_{2,1}(2\tau)(\chi_{0}(2\tau)+\chi_{1/2}(2\tau))\Big\vert^2\right]\\
    =&\frac{|\theta_3(2\tau,2z)\theta_3(2\tau)\pm \theta_2(2\tau,2z)\theta_2(2\tau)|^2}{|\eta(\tau)|^6}
	\left(|\theta_3(2\tau)|^2+|\theta_2(2\tau)|^2\right)\ ,
\end{align*}
where in the last line we used \eqref{chtotheta1} and \eqref{chtotheta2}.

Eq.\eqref{g2twistrelevant} also implies that the representations that contribute to the $g_2$-twisted, $k_2$-twined partition functions are $g_2$-invariant. Indeed, the action of $g_2$ on the $g_2$-twisted sector $\Hh_{g_2,\NSNS}^{K3}$ is given by
\be (g_2)_{\rvert \Hh_{g_2,\NSNS}^{K3}}=(e^{2\pi i(L_0-\bar L_0)}(-1)^{F+\bar F})_{\rvert \Hh_{g_2,\NSNS}^{K3}}\ ,
\ee  and one can check from the formula above that $e^{2\pi i(L_0-\bar L_0)}(-1)^{F+\bar F}=1$ for all relevant contributions. As a consequence the $g_2$-twisted, $k_2g_2$-twined  partition function is given by the same expression
\be {Z}_{g_2,\NS}^{g_2k_2,\pm}(\tau,z,\bar z)={Z}_{g_2,\NS}^{k_2,\pm}(\tau,z,\bar z)\ ,
\ee and by spectral flow the same must be true in the Ramond sector.

 \printbibliography
\end{document}